\documentclass{article}
\usepackage{fix-cm, fancybox}
\usepackage{epsf, proof, lscape}
\usepackage{newpxmath, booktabs, multirow, amsmath}
\usepackage[dvipdfmx]{graphicx}
\usepackage{xcolor}

\usepackage{mathdots}

\usepackage{breqn}

\usepackage{verbatim}

\usepackage{tikz}

\usetikzlibrary{intersections, calc, arrows, positioning, arrows.meta}
\usetikzlibrary{arrows,shapes,automata,petri,positioning,calc}\usetikzlibrary {bending}

\usepackage{pxfonts,
proof, latexsym,
epsf,
comment
}

\newtheorem{defn}{Definition}[section]
\newtheorem{theorem}[defn]{Theorem}

\newtheorem{lemma}[defn]{Lemma}

\newtheorem{corollary}[defn]{Corollary}

\newcommand{\QED}{\hspace*{\fill}\vrule height6pt width6pt depth0pt}

\newcommand{\deduc}{
\shortstack{.\\.\hspace{12pt}.\hspace{12pt}.\\.\hspace{8pt}.\hspace{8pt}
.\\
.\hspace{4pt}.\hspace{4pt}.\\...\\.\\.}
}

\newcount\PLv\newcount\PLw\newcount\PLx\newcount\PLy
\newdimen\PLyy\newdimen\PLX \newbox\PLdot \setbox\PLdot\hbox{\tiny.}
\def\scl{.08} 
\def\PLot#1{\PLx`#1\advance\PLx-42\PLy\PLx\PLv\PLx\divide\PLy9%
\PLw\PLy\multiply\PLw9\advance\PLx-\PLw\advance\PLx-4\PLy-\PLy%
\advance\PLy4\PLX=\the\PLx pt\advance\PLyy\the\PLy pt\wd%
\PLdot=\scl\PLX\raise\scl\PLyy\copy\PLdot}
\def\draw#1{\ifx#1\end\let\next=\relax\else\PLot#1%
\let\next=\draw\fi\next}

\def\IA{\hbox{\PLyy=70pt\draw :::;DMV_gqppyyyyyooooxxxnnwvlutkjaWE%
=5-./99:::CCCC:::99/..--544=EEWWaajjjkktttttttVVVVVVVV\end
\hskip7pt}} 
\newbox\IAbox\setbox\IAbox\IA

\begin{document}

\begin{center}\Large
Cut-Elimination for the Bimodal Logic {\sf GR}
\end{center}

\normalsize
\begin{center}

Hirohiko Kushida \\

\vspace{1em}
\large
hkushida@jcga.ac.jp\\
Department of Maritime Safety Technology

Japan Coast Guard Academy
\end{center}

\vspace{1.5em}
\small
{\bf Abstract.}
In this paper, we present a hypersequent calculus
for bimodal logic {\sf GR},
where the two modalities represent the arithmetic
provability predicates of G\"odel and Rosser, respectively.
We prove the cut-elimination theorem for the calculus.

\normalsize
\section{Introduction}
According to G\"odel's first incompleteness theorem,
there exists an arithmetical
sentence $A$
such that neither $A$ nor $\neg A$
is provable
in a formal arithmetic system, e.g., Peano Arithmetic, {\sf PA}.
In the construction of such a sentence $A$, a unary predicate referring to provability is used, and G\"odel assumed the condition
of what is called $\omega$-consistency of the system
in the theorem.
Rosser offered
another provability predicate
to construct such a sentence
so that
the condition of the theorem
is weakened to just the consistency
of the system.
While not all of the so-called
derivability conditions
are satisfied,
as Kreisel \cite{kreisel}
proved,
the consistency assertion on the system itself,
constructed from Rosser's provability predicate,
is actually provable in the system,
that is,
G\"odel's second incompleteness theorem does not hold with the
Rosser's provability predicate.

The study of the G\"odelean provability predicate
via modal logic has been a major field of research in mathematical logic
since Solovay's landmark work
\cite{solovay}
established the axiomatic system of
modal logic {\sf GL}
with its completeness theorems.
\footnote{
Solovay himself called the system {\sf G}.
}
We can refer to a survey paper \cite{ab}
and textbooks \cite{boolos, smo}
for the history and basic results of {\sf GL}.
The modal logical study for the provability predicate of Rosser
was initiated by Guaspari and
Solovay \cite{gs}
with its generalizations,
and it is continued by the authors: 
Visser \cite{visser},
Shavrukov \cite{shav1991},
Sidon \cite{sidon},
Kurahashi \cite{kurahashi2020, kurahashi2021},
Kogure and Kurahashi
\cite{kogure-kurahashi2025}.

In particular,
a specific bimodal logic called {\sf GR}
was proposed in Shavrukov \cite{shav1991}
to analyze the interaction of two distinct modalities
standing for the provability of G\"odel and Rosser, respectively.
Then, Sidon\cite{sidon} proved the Craig interpolation theorem
for the system {\sf GR}.
Recently, the Lyndon interpolation property was shown to hold for {\sf GR}
in Kogure and Kurahashi
\cite{kogure-kurahashi2025}.
However, there has been no decent proof system for {\sf GR} so far;
in the current paper, we offer such a proof system
for which the cut-elimination theorem holds.

Although the current paper is self-contained,
our presentation of hypersequent calculus for {\sf GR}
is based on the proof-theoretic framework
investigated in previous papers \cite{kushida2024, kushida2025},
where
a hypersequent consists of modalized and non-modalized sequents
and certain inference rules on modality work as certain interactions
on the two sorts of sequents.
Furthermore,
we developed a proof-transformation method of the direct cut-elimination,
that is, the cut-elimination without extension to {\it mix},
we call top-down,
for a wide range of modal logics
in \cite{kushida2024, kushida2025};
it is applied for the current case of {\sf GR}.
As a simple application of the cut-elimination theorem,
we also verify of the conservability result of {\sf GR} over
{\sf GL}.

\section{The Bimodal Logic {\sf GR}}
In this section, we offer the definition of syntax and proof systems for
the system {\sf GR}.
Let us begin with the grammar that produces the sentences of {\sf GR}.

\begin{center}
$A := p| \bot | \neg A| A\wedge A| \blacksquare A| \Box A$
\end{center}

\noindent
We also use the other propositional connectives and the other modality $\Diamond$,
which can be defined in terms of the connectives $\wedge, \neg$ and the modality $\Box$.
The modalities $\Box$ and $\blacksquare$ signify
the provability predicates of G\"odel and Rosser, respectively.
First, we review the axiomatic system of the bi-modal logic {\sf GR}.
It consists of the
following axioms and inference rules.

\vspace{1em}
\fbox{The Axiomatic System for {\sf GR}}

\vspace{.5em}
Axioms:

\vspace{.5em}
\hspace{1em}
1. Axioms of Propositional Logic

\vspace{.5em}\hspace{1em}
2. $\Box (A \supset B)\supset (\Box A \supset \Box B)$

\vspace{.5em}\hspace{1em}
3. $\Box (\Box A \supset A)\supset \Box A $

\vspace{.5em}\hspace{1em}
4. $\blacksquare A\supset \Box A $

\vspace{.5em}\hspace{1em}
5. $\Box A \supset \Box \blacksquare A$

\vspace{.5em}\hspace{1em}
6. $\Box A \supset (\Box\bot \vee \blacksquare A)$

\vspace{.5em}\hspace{1em}
7. $\Diamond \blacksquare A \supset \Diamond A $

\vspace{1em}
Inference Rules:

\vspace{.5em}\hspace{1em}
1. $A, A\supset B \vdash B$

\vspace{.5em}\hspace{1em}
2. $A \vdash \Box A$

\vspace{.5em}\hspace{1em}
3. $\Box A \vdash A$

\vspace{1em}

Next, we propose our hypersequent calculus for {\sf GR}
equipped with
two sorts of sequents.
They have the form $\Gamma \to \Delta$ and $\Gamma \Rightarrow \Delta$,
which we call $\to$- and $\Rightarrow$-sequents, respectively.
Here and below, the capital Greek letters $\Gamma, \Delta, \ldots$
denote 
multisets
 of formulas of {\sf GR}.
Their intended meaning is provided
by the following formula image.

\vspace{1em}
$f(\Gamma \to \Delta)=\bigwedge\Gamma \supset \bigvee \Delta$;
$f(\Gamma \Rightarrow \Delta)=\Box(\bigwedge\Gamma \supset \bigvee \Delta)$.

\vspace{1em}
Sequents are denoted by: $S, T, U, \ldots, S_1, S_2, \ldots$.
 $S^{\to}$ ($S^{\Rightarrow}$, respectively) 
 means that $S$ is a $\to$-sequent
 ($\Rightarrow$-sequent, respectively).

Then, the hypersequent is defined to be
the form $S_1 |S_2 |\cdots | S_n$,
which we identify with any permutation $S_{i_1} |S_{i_2} |\cdots | S_{i_n}$.
The intended meaning of it is provided as follows.

\vspace{1em}

$f( S_1 |S_2 |\cdots | S_n)= f(S_1) \vee f(S_2) \vee\cdots \vee f(S_n)$.

\vspace{1em}

Hypersequents are denoted by: ${\cal H}, {\cal I}, {\cal J}, \ldots, {\cal H}_1, {\cal H}_2, \ldots$.
 ${\cal H}^{\to}$ (${\cal H}^{\Rightarrow}$, respectively) 
 means that ${\cal H}$ is made up of $\to$-sequents
 ($\Rightarrow$-sequents, respectively).

\vspace{1em}
Now we present our formulation of
{\sf GR} via the tool of
hypersequent.

\vspace{1em}
\fbox{The Hypersequent Calculus for {\sf GR}}

\vspace{.5em}
$\bullet$ Initial Sequents:
\hspace{2em}
$
A \to
A \hspace{4em}
\bot \to$


\vspace{1em}
$\bullet$ Propositional Inference Rules:

\begin{center}

$
\infer[\wedge:l]{
{\cal H}|
A\wedge B, \Gamma \gg \Delta
}
{ {\cal H}|
A, \Gamma \gg \Delta}
\hspace{2em}
\infer[\wedge:l]{
{\cal H}|
A \wedge B, \Gamma \gg \Delta
}
{
{\cal H}|
B, \Gamma \gg \Delta
}$

\vspace{1em}
$
\infer[\wedge:r]{
{\cal H}| {\cal I}|
\Gamma, \Pi \gg \Delta, \Theta, A \wedge B
}
{ {\cal H}|
\Gamma \gg \Delta, A &
{\cal I}|
\Pi \gg \Theta, B}$

\vspace{1em}
$
\infer[\neg:l]{{\cal H} |
\neg A , \Gamma \gg \Delta
}
{ {\cal H}| \Gamma \gg \Delta, A }
$
\hspace{2em}
$
\infer[\neg:r]{
{\cal H}| \Gamma \gg \Delta, \neg A
}
{{\cal H}| A, \Gamma \gg \Delta}$
\end{center}

\vspace{1em}
$\bullet$ Internal Structural Inference Rules:

\begin{center}
$
\infer[ic]
{ {\cal H} |
A, \Gamma
\gg \Delta
}
{
{\cal H} | A, A, \Gamma \gg \Delta
}
$ \hspace{1em}
$
\infer[ic]
{ {\cal H} |
\Gamma
\gg \Delta, A
}
{{\cal H} |
\Gamma \gg \Delta, A, A
}
$

\vspace{1em}

$
\infer[iw]
{ {\cal H} |
A, \Gamma
\gg \Delta
}
{
{\cal H}| \Gamma \gg \Delta
}
$
\hspace{1em}
$
\infer[iw]
{{\cal H}|
\Gamma
\gg \Delta, A
}
{ {\cal H}|
\Gamma \gg \Delta
}
$

\vspace{1em}

$
\infer[cut]
{ {\cal H}|{\cal I} | \Gamma, \Pi
\gg \Delta, \Theta
}
{ {\cal H} |\Gamma \gg \Delta, A
&
{\cal I} | A,
\Pi
\gg \Theta
}$

\end{center}


$\bullet$ External Structural Inference Rules

\begin{center}
$
\infer[ew]
{{\cal H}|
S 
}
{ {\cal H}
}
$
\hspace{2em}
$\infer[split]{ {\cal H}|
\Gamma \to \Delta |
\Theta \to \Pi
}
{
{\cal H}|
\Gamma,
\Theta \to \Delta, \Pi
}$

\vspace{1em}
$\infer[merge]{
{\cal H}|
\Gamma,
\Theta \gg \Delta, \Pi
}{ {\cal H}|
\Gamma \gg \Delta |
\Theta \gg \Pi
}$
\hspace{2em}
$\infer[\bot]{
{\cal H}|
\Gamma
\Rightarrow \Delta
}{ {\cal H}|
\Gamma \Rightarrow \Delta |
\to
}$

\end{center}

$\bullet$ Inference Rules for Modality

\begin{center}
$
\infer[K ]
{{\cal H}|\Box A \rightarrow | \Gamma \Rightarrow \Delta
}
{{\cal H}|A, \Gamma \Rightarrow \Delta
}$ \hspace{2em}
$
\infer[4]
{{\cal H}|\Box A \rightarrow | \Gamma \Rightarrow \Delta
}
{{\cal H}|\Box A, \Gamma \Rightarrow \Delta
}$
\hspace{2em}
$\infer[\Box :r]
{{\cal H}| \to \Box A
}
{{\cal H}| \Box A \Rightarrow A
}
$

\vspace{1em}
$
\infer[K^\blacksquare]
{{\cal H}|\blacksquare A \rightarrow | \Gamma \Rightarrow \Delta
}
{{\cal H}| A, \Gamma \Rightarrow \Delta
}$
\hspace{2em}
$
\infer[4^
\blacksquare]
{{\cal H}|\blacksquare A \rightarrow | \Gamma \Rightarrow \Delta
}
{{\cal H}|\blacksquare A, \Gamma \Rightarrow \Delta
}$
\hspace{2em}
$
\infer[
\blacksquare:r_1]
{{\cal H}| \to \blacksquare A
| \Rightarrow
}
{{\cal H}|\Box A \Rightarrow A
}$

\vspace{1em}
$\infer[\blacksquare:l]
{{\cal H}|
\blacksquare A\Rightarrow
}
{{\cal H}|
A\Rightarrow
}$
\hspace{2em}
$
\infer[
\blacksquare:r_2]
{{\cal H}| \Rightarrow \blacksquare A
}
{{\cal H}|\Rightarrow A
}$




\end{center}

$\bullet$ Modal Structural Inference Rules

\begin{center}

$
\infer[\Rightarrow]
{S^\Rightarrow
}
{S^\rightarrow
}$
\hspace{1em}
\hspace{1em}
$
\infer[\to]
{ S^\rightarrow
}
{ S^\Rightarrow
}$


\end{center}

Here, all occurrences of $\gg$ in the description of an inference rule
uniformly denote either $\Rightarrow$ or $\rightarrow$.
The names of $K^\blacksquare$ and $4^\blacksquare$
come just from
the similarity
to forms of the rules $K$ and $4$.
The modality $\blacksquare$ does not
satisfy the $K$- nor $4$-axiom.
For two sequents $S=\Gamma \gg \Delta$
and $T=\Pi \gg \Theta$,
$S\cdot T$ is defined to be $\Gamma, \Pi \gg \Delta, \Theta$
and for a hypersequent ${\cal H}=S_1|S_2|\cdots | S_n$, 
$\cdot{\cal H}$ is defined to be $(\cdots (S_1\cdot S_2)\cdot S_3 \cdots) 
\cdot S_n$.


The upper and lower sequents of an inference rule
are naturally defined.
For example,
the upper and lower sequents of $K^\blacksquare$
are, respectively,
$A, \Gamma \Rightarrow \Delta$
and $\blacksquare A\to$.
We set the upper and lower sequent of
the rule $\bot$
to be $\to$ and $\Gamma \Rightarrow \Delta$, respectively,
and the turnstile $\to$ of the upper sequent
is interpreted to disappear.
The upper and lower hypersequents
of an inference rule
are trivially defined
as the whole hypersequents
including the upper and lower
sequents, respectively.

The turnstile $\Rightarrow$ of the upper sequent of $\blacksquare:r_1$
and $\Box:r$
is naturally interpreted to disappear
and
the turnstile $\Rightarrow$ of the lower sequent of $\blacksquare:r_1$
is interpreted to be newly introduced.

Thus, generally in a proof, the turnstile $\Rightarrow$
is introduced by the rules of:

\vspace{1em}
$ew$, $\Rightarrow$, $\blacksquare:r_1$,

\vspace{1em}
\noindent
while
$\to$ is introduced by:

\vspace{1em}
the initial sequent,
$ew$,
$K$, $4$, $K^\blacksquare$ $4^{\blacksquare}$.

\vspace{1em}
Furthermore,
generally in a proof,
the turnstile $\to$ changes
to $\Rightarrow$ solely by the $\Rightarrow$-rule,
while
$\Rightarrow$ changes to
$\to$ by the rules of:

\vspace{1em}
$\to$, $\Box:r$, $\blacksquare:r_1$.

\vspace{1em}
The formula $A$ is called the {\it cut formula} in the definition of $cut$.
The formula $\Box A$
is called the {\it diagonal formula}
in the upper sequents of
the definitions of
$\Box:r$ and $\blacksquare:r_1$.

\subsection{Equivalence to the axiomatic system {\sf GR}}
Here we show the equivalence of
the two systems for
{\sf GR} as exposed above.
Let
${\sf GR}_H$ denote
the hypersequent calculus
and
${\sf GR}_{AX}$ denote
the axiomatic system.

\begin{theorem}
${\sf GR}_H$ and ${\sf GR}_{AX}$ are equivalent with respect to
derivability.
\end{theorem}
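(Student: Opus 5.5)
We read the equivalence as: for every formula $A$, ${\sf GR}_{AX}\vdash A$ iff ${\sf GR}_H\vdash\ \to A$. More generally, for every hypersequent ${\cal H}$, ${\sf GR}_H\vdash{\cal H}$ implies ${\sf GR}_{AX}\vdash f({\cal H})$. The plan has two directions.

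\emph{From ${\sf GR}_{AX}$ to ${\sf GR}_H$.} We induct on the length of an axiomatic derivation. For each axiom we build a hypersequent derivation of $\to A$, and we show that the hypersequent system is closed under the three rules. Propositional axioms are routine, using the propositional rules, $iw$, $ic$, and the fact that $\to$-sequents with no hypercontext behave like ordinary {\sf LK}.

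For axiom 2, start from initial sequents $A\to A$ and $B\to B$, apply $\neg:l$ and $\wedge$-rules to get $A\supset B, A\to B$, then apply $\Rightarrow$ to get $A\supset B,A\Rightarrow B$. Apply $K$ twice to obtain $\Box(A\supset B)\to|\Box A\to|\ \Rightarrow B$. Then weaken in $\Box B$ on the left and use $\Box:r$ to reach $\to\Box B$. Finally use $merge$ to collect the $\to$-components and apply the propositional rules.

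Axiom 3 (L\"ob) follows the same pattern. From $\Box A\supset A,\Box A\to A$, apply $\Rightarrow$, then $4$ and $K$ on $\Box(\Box A\supset A)$ (contracting with $ic$ as needed), then $\Box:r$ with diagonal formula $\Box A$.

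For the $\blacksquare$ axioms:
\begin{itemize}
\item Axiom 4 comes from $A\Rightarrow A$ via $K^\blacksquare$ and $\Box:r$.
\item Axiom 5 uses $\blacksquare:r_2$ on $A\Rightarrow A$ after $K$: from $A\Rightarrow A$ we get $\Box A\to|\ \Rightarrow\blacksquare A$ (up to weakening of the diagonal formula), then $\Box:r$.
\item Axiom 6 is exactly what $\blacksquare:r_1$ together with $\bot$ encodes. From $\Box A\Rightarrow A$ (obtained by $K$-style moves) we derive $\to\blacksquare A|\ \Rightarrow$. The empty $\Rightarrow$ component reads as $\Box\bot$, and the premise $\Box A\to$ is added by $K$ at the appropriate place.
\item Axiom 7, contraposed as $\Box\neg A\supset\Box\neg\blacksquare A$, uses $\blacksquare:l$: from $A,\neg A\Rightarrow$ we get $\blacksquare A\Rightarrow$ after $K$ on $\Box\neg A$.
\end{itemize}
Modus ponens is simulated by $cut$. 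Necessitation is $\Rightarrow$ applied to $\to A$ followed by $iw$ and $\Box:r$. The rule $\Box A\vdash A$ is the delicate one. We plan to derive the hypersequent-level admissibility from $\to\Box A$: cut with $\Box A\to|\ \Rightarrow A$ (obtained from $A\Rightarrow A$ by $K$) to get $\ \Rightarrow A$, then apply the $\to$ rule.

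\emph{From ${\sf GR}_H$ to ${\sf GR}_{AX}$.} This is soundness with respect to the formula image $f$. We induct on derivations and check that each rule preserves ${\sf GR}_{AX}$-provability of $f$.
\begin{itemize}
\item Propositional and internal structural rules, including $cut$, are sound because $\Box$ distributes over implication (axiom 2, necessitation) and the hypercontext is a disjunction.
\item For $split$, $merge$ and $ew$, we use the fact that $\Box(C\vee D)$ follows from $\Box C\vee\Box D$ for $merge$ on $\Rightarrow$.
\item The $\bot$ rule needs $\Box(\bigwedge\Gamma\supset\bigvee\Delta)\vee\bot$, which is trivial.
\item $K$ and $4$ use axioms 2 and the {\sf GL}-derivable $\Box A\supset\Box\Box A$. $K^\blacksquare$ and $4^\blacksquare$ use axiom 4 and axiom 5 (with 4) respectively.
\item $\Box:r$ is L\"ob's rule in the form $\Box(\Box A\supset A)\supset\Box A$.
\item $\blacksquare:r_1$ combines L\"ob with axiom 6, where the new empty $\Rightarrow$ gives $\Box\bot$.
\item $\blacksquare:r_2$ is axiom 5 applied inside.
\item $\blacksquare:l$ is axiom 7 in the form $\Box\neg A\supset\Box\neg\blacksquare A$.
\end{itemize}
The $\to$ and $\Rightarrow$ modal structural rules correspond to the rules $\Box A\vdash A$ and necessitation. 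These are rules, not implications, so soundness must be stated for hypersequents consisting of a single sequent. The main obstacle is exactly here: $\Rightarrow$ and $\to$ apply only when the whole hypersequent is that sequent (the rule schemas show $S$ alone). We will verify that this restriction holds, and then use inference rules 2 and 3 directly.
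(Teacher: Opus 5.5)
Your proposal is essentially the paper's proof. It uses the same hypersequent derivations of axioms 2--7 and of necessitation, and the same rule-by-rule soundness check under $f$:
\begin{itemize}
\item normality and $\Box A\supset\Box\Box A$ for $K$ and $4$;
\item axioms 4 and 5 for $K^\blacksquare$ and $4^\blacksquare$;
\item L\"ob for $\Box:r$, and L\"ob plus axiom 6 for $\blacksquare:r_1$;
\item axioms 5 and 7 for $\blacksquare:r_2$ and $\blacksquare:l$;
\item inference rules 2 and 3 for the context-free $\Rightarrow$- and $\to$-rules.
\end{itemize}

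Two steps need fixing. First, cutting $\to\Box A$ against $\Box A\to|\Rightarrow A$ yields $\to\,|\Rightarrow A$, not $\Rightarrow A$. You must apply the $\bot$-rule to delete the empty $\to$ before the $\to$-rule can be used, since that rule acts only on a lone sequent. This, not axiom 6, is where the $\bot$-rule is needed. Second, for axiom 6 the empty $\Rightarrow$ must actually be converted to $\to\Box\bot$ by $iw$ and $\Box:r$ before merging, rather than merely read as $\Box\bot$.
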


\noindent {\bf Proof.}
We omit the propositional logic fragment.

$\bullet$
For one direction,
we show the simulations of
the axioms and the inference rules of ${\sf GR}_{AX}$
in ${\sf GR}_H$

\vspace{1em}

$\blacktriangleright$ On the axioms
of ${\sf GR}_{AX}$.
The formula images of the end sequents
of the following proofs are those axioms.

\begin{center}
$\infer=[merge]{
\Box(A\supset B), \Box A
\to \Box B
}{
\infer[\Box:r]{
\Box(A\supset B)\to | \Box A \to |
\to \Box B
}{\infer[iw]{
\Box(A\supset B)\to | \Box A \to |
\Box B \Rightarrow B}
{
\infer=[K]{
\Box(A\supset B)\to | \Box A \to |
\Rightarrow B
}{\infer[\supset:l]{
A\supset B, A
\Rightarrow B}{
\infer[\Rightarrow]{
A \Rightarrow A
}{A\to A}
&
\infer[\Rightarrow]{
B \Rightarrow B
}{B\to B}
}
}
}
}
}
$
\hspace{2em}
$\infer[merge]{
\Box(\Box A\supset A)
\to \Box A
}{
\infer[\Box:r]{
\Box(\Box A\supset A)\to | \to \Box A
}{ \infer[K]{
\Box(\Box A\supset A)\to |
\Box A \Rightarrow A }
{
\infer[\supset:l]{
\Box A\supset A, \Box A
\Rightarrow A
}{
\infer[\Rightarrow]{
\Box A \Rightarrow \Box A 
}{\Box A \rightarrow \Box A }}&
\infer[\Rightarrow]{
A \Rightarrow A
}{A\to A}
}
}
}
$
\end{center}

\begin{center}
$
\infer[merge]{\blacksquare A
\to \Box A}{
\infer[\Box:r]{
\blacksquare A \to | \to \Box A}{
\infer[iw]{
\blacksquare A \to | \Box A \Rightarrow A
}{
\infer[K^{\blacksquare}]{
\blacksquare A \to | \Rightarrow A
}{
\infer[\Rightarrow]{
A \Rightarrow A
}{A\to A}
}
}
}
}
$
\hspace{2em}
$\infer[merge]{\Box A \to\Box
\blacksquare A
}{
\infer[\Box:r]{
\Box A \to | \to \Box \blacksquare A}{
\infer[iw]{
\Box A \to | \Box \blacksquare A
\Rightarrow \blacksquare A
}{
\infer[\blacksquare:r_2]{
\Box A \to | \Rightarrow \blacksquare A
}{
\infer[K]{
\Box A \to | \Rightarrow A
}{
\infer[\Rightarrow]{
A \Rightarrow A
}{A\to A}}
}
}
}
}$
\end{center}

\begin{center}
$
\infer[merge]{\Box A \to \blacksquare A,
\Box \bot}{
\infer[\Box:r]{\Box A \to | \to \blacksquare A
| \to \Box \bot}{
\infer=[iw]{\Box A \to |\to \blacksquare A
| \Box \bot\Rightarrow \bot}{
\infer[\blacksquare:r_1]{\Box A \to | \to
\blacksquare A
| \Rightarrow}{
\infer[iw]{\Box A \to | \Box A\Rightarrow A}{
\infer[K]{\Box A \to | \Rightarrow A}{
\infer[\Rightarrow]{
A \Rightarrow A
}{A\to A}
}
}
}
}
}
}
$
\hspace{2em}
$\infer[merge]{\Box\neg A \to\Box
\neg\blacksquare A
}{
\infer[\Box:r]{
\Box \neg A \to | \to \Box \neg\blacksquare A}{
\infer[iw]{
\Box \neg A \to | \Box
\neg \blacksquare A
\Rightarrow \neg \blacksquare A
}{
\infer[\neg:r]{
\Box \neg A \to | \Rightarrow \neg
\blacksquare A
}{
\infer[\blacksquare:l]{
\Box \neg A \to | \blacksquare
A\Rightarrow
}{
\infer[K]{\Box \neg A\to |
A\Rightarrow}{
\infer[\neg:l]{\neg A, A \Rightarrow}{
\infer[\Rightarrow]{
A \Rightarrow A
}{A\to A}}
}
}
}
}
}
}$
\end{center}

$\blacktriangleright$ On the inference rules
of ${\sf GR}_{AX}$.
The following are derivations of $\to \Box A$ from $\to A$
and of $\to A$ from $\to \Box A$.

\begin{center}
$
\infer[\Box:r]{
\to \Box A}{
\infer[iw]{
\Box A \Rightarrow A}{
\infer[\Rightarrow]{
\Rightarrow A
}{\to A
}
}
}
$
\hspace{2em}
$\infer[\rightarrow]{ \to A
}{
\infer[\bot]{
\Rightarrow A}{
\infer[cut]{
\to|\Rightarrow A
}{
\to \Box A
&
\infer[K]{
\Box A \to | \Rightarrow A
}{
\infer[\Rightarrow]{
A \Rightarrow A
}{A\to A}}
}}
}
$
\end{center}

$\bullet$
For the other direction, we work within
${\sf GR}_{AX}$
to
simulate the inference rules for modality
of ${\sf GR}_{H}$.
First, we treat the rules of $K$, $4$, and $K^\blacksquare$,
$4^\blacksquare$
uniformly.
These rules are expressed as follows.

\begin{center}
$
\infer
{{\cal H}| \nabla A \rightarrow | \Gamma \Rightarrow \Delta
}
{{\cal H}|\nabla^* A, \Gamma \Rightarrow \Delta
}$
\end{center}

Here (and below), $\nabla$ denote $\Box$ or $\blacksquare$,
and $\nabla^*$ denotes $\nabla$ or an empty space.

Suppose that $H\vee\Box(\nabla^* A \wedge G. \supset D)$
holds.
In terms of normality,
$H\vee. \Box\nabla^* A \supset \Box (G \supset D)$ holds.
Then, if $\nabla A \supset \Box\nabla^* A $ holds,
the desired formula is obtained:
$H\vee \neg\nabla A \vee \Box (G \supset D)$.

\vspace{1em}

$\blacktriangleright$ On the $K$-rule,
$\nabla = \Box$ and $\nabla^* $ is empty.
Then,
$\nabla A \supset \Box\nabla^* A = \Box A \supset \Box A $
surely holds.

\vspace{1em}
$\blacktriangleright$ On the $4$-rule,
$\nabla^* =\nabla = \Box$.
Then, $\nabla A \supset \Box\nabla^* A = \Box A \supset \Box\Box A $
surely holds.
(It is well known that this formula called $4$-axiom is
derivable in the sub-logic {\sf GL}.
See, for example, Boolos \cite{boolos}, pp. 11-12.)

\vspace{1em}
$\blacktriangleright$ On the $K^\blacksquare$-rule,
$\nabla = \blacksquare$ and $\nabla^* $ is empty.
Then, $\nabla A \supset \Box\nabla^* A = \blacksquare A \supset \Box A $
is surely an axiom of {\sf GR}.

\vspace{1em}
$\blacktriangleright$ On the $4^\blacksquare$-rule,
$\nabla = \nabla^* =\blacksquare$.
Then, $\nabla A \supset \Box\nabla^* A = \blacksquare A \supset \Box
\blacksquare A $.
This is derivable from the axioms of {\sf GR}:
$\blacksquare A \supset \Box A $ and $\Box A \supset \Box
\blacksquare A $.

\vspace{1em}

Next we treat the $\Box:r$-, $\blacksquare:r_1$-,
$\blacksquare:l$- and $\blacksquare:r_2$-
rules.

\vspace{1em}
$\blacktriangleright$ On the $\Box:r$-rule,
suppose that $H\vee \Box(\Box A\to A)$
holds.
Then,
we have
$H\vee \Box A$,
in terms of
the axiom
$\Box(\Box A\to A)\supset \Box A $.

\vspace{1em}
$\blacktriangleright$ On the $\blacksquare:r_1$-rule,
suppose that $H\vee \Box(\Box A\to A)$
holds.
Using the axiom
$\Box(\Box A\to A)\supset \Box A $,
we obtain
$H\vee \Box A$.
Further, using the axiom
$\Box A \supset (\Box\bot \vee
\blacksquare A)$,
we obtain $H\vee \Box\bot \vee
\blacksquare A$.

\vspace{1em}
$\blacktriangleright$ On the $\blacksquare:l$-rule,
suppose that $H\vee \Box\neg A$ holds.
The axiom
$\Diamond \blacksquare A \supset \Diamond A $
is equivalent to
$\Box\neg A \supset \Box\neg \blacksquare A $.
Thus, we obtain
$H\vee \Box\neg \blacksquare A$.

\vspace{1em}
$\blacktriangleright$ On the $\blacksquare:r_2$-rule,
suppose that $H\vee \Box A$ holds.
Using the axiom
$\Box A \supset \Box \blacksquare A $,
we obtain
$H\vee \Box\blacksquare A$.

\vspace{1em}
We omit the other cases.
\QED

\vspace{1em}
{\bf Note}
The rules of $4$ and $4^{\blacksquare}$
are permissible in {\sf GR}$_{H}$,
as they are not used in the above simulations of 
the system {\sf GR}$_{AX}$.
However, they are necessary for the system {\sf GR}$_{H}$
to enjoy the cut-elimination.
\footnote{
This fact was pointed out by
an anonymous referee.}
For example, the following hypersequent
is not cut-free provable without using 
$4$ and $4^{\blacksquare}$, respectively.

\begin{center}
 $\Box p, \blacksquare q\to |\Rightarrow \Box (p\wedge q)$   
\footnote{
In Technical Report 2 in the last part of the paper,
we show the cut-elimination procedure
for this hypersequent as an example.
}
\end{center}


\section{Cut-Elimination for {\sf GR} }

In this section,
we establish the cut-elimination theorem for ${\sf GR}_H$.
From now on,
we call our hypersequent system
{\sf GR}$_H$
just ${\sf GR}$.

\begin{theorem} \label{cut-elimination}
For any hypersequent,
it is provable in {\sf GR}
if and only if
it is provable without using the {\it cut}-rule in {\sf GR}.
\end{theorem}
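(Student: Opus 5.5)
The ``if'' direction is trivial, so the work is to show that every $cut$ can be removed. I would follow the top-down direct method of \cite{kushida2024, kushida2025}, which eliminates $cut$ itself without passing to a $mix$ rule. It suffices to show that a proof containing a single $cut$ as its last inference can be transformed into a cut-free proof of the same end hypersequent; the general case then follows by repeatedly eliminating an uppermost $cut$.

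The induction is lexicographic on a triple. The first component is the complexity of the cut formula $A$. The second is a \emph{diagonal} measure, needed for the L\"ob-type rules $\Box:r$ and $\blacksquare:r_1$: roughly, the number of applications of $\Box:r$ or $\blacksquare:r_1$ in the left subproof that create the occurrence of $\Box A$ being cut. The third is the sum of the heights (or ranks) of the two subproofs.

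Rather than permuting the $cut$ upward through the right premise and tracing contractions, which is what forces $mix$, I would proceed top-down. One traces the cut-formula occurrence $A$ in ${\cal I}\,|\,A,\Pi\gg\Theta$ back to the places where it is introduced: by $iw$, by an initial sequent, or by a logical or modal rule. One then inserts $cut$s with the left premise at exactly those points and re-derives everything below them. Contractions $ic$ on $A$ are handled by duplicating the left premise. This is sound because $cut$ commutes with all propositional and structural rules, including $split$, $merge$, $ew$, $\bot$, $\Rightarrow$ and $\to$. The one subtlety is that the turnstile kind must agree, so a $cut$ between a $\to$-sequent and a $\Rightarrow$-sequent first needs $\Rightarrow$ or $\to$ applied to one premise. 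The same tracing is applied symmetrically to the left premise, which reduces everything to principal cases in which $A$ is introduced on both sides by the last rule.

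The principal propositional cases reduce the complexity of the cut formula in the usual way. The modal principal cases are the following.
\begin{itemize}
\item $\Box:r$ against $K$: the premises are ${\cal H}\,|\,\Box A\Rightarrow A$ and ${\cal I}\,|\,A,\Gamma\Rightarrow\Delta$. First cut $\to\Box A$, i.e.\ the conclusion of $\Box:r$ itself, against $\Box A\Rightarrow A$; this is justified by the diagonal measure, with $\to$ converted via $\Rightarrow$. Then cut the result on $A$, which has lower complexity, and use $merge$ to glue the sequents together.
\item $\Box:r$ against $4$: the premises are $\Box A\Rightarrow A$ and $\Box A,\Gamma\Rightarrow\Delta$. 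Cut $\Box A$ directly against $\Box A$ on the $\Rightarrow$-side using the original proof. This is exactly where rule $4$ is needed, which is consistent with the Note above.
\item $\blacksquare:r_1$ against $K^\blacksquare$ or $4^\blacksquare$: handled analogously. The extra empty $\Rightarrow$ sequent produced by $\blacksquare:r_1$ is absorbed by $ew$ or $merge$.
\item $\blacksquare:r_2$ against $\blacksquare:l$: from $\Rightarrow A$ and $A\Rightarrow$, a cut on $A$ of lower complexity gives $\Rightarrow$ directly.
\item $\blacksquare:r_2$ against $K^\blacksquare$ or $4^\blacksquare$: ${\cal H}\,|\Rightarrow A$ yields, by $iw$, ${\cal H}\,|\,\Box A\Rightarrow A$, and thus $\to\blacksquare A\,|\Rightarrow$ is recovered via $\blacksquare:r_1$. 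The case then reduces to the previous ones after merging.
\item $\blacksquare:r_1$ against $\blacksquare:l$: the cut is on $\blacksquare A$ with a $\to$-sequent on the left and a $\Rightarrow$-sequent on the right. First apply $\Rightarrow$ to the left sequent, then trace upward through the $\Rightarrow$ rule until $\blacksquare A$ meets its introduction.
\end{itemize}

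I expect the main obstacle to be the interaction of $\blacksquare:r_1$ with $\blacksquare:l$ and $4^\blacksquare$, since $\blacksquare$ is not normal and the diagonal formula of $\blacksquare:r_1$ is $\Box A$ rather than $\blacksquare A$. My first attempt would be to show that ${\cal H}\,|\,\Box A\Rightarrow A$ together with ${\cal I}\,|\,A\Rightarrow$ yields ${\cal H}\,|\,{\cal I}\,|\,\Box A\Rightarrow$ by a cut of lower complexity. Then apply $\Box:r$-style reasoning: $\Box A\Rightarrow$ with $\Box:r$ gives $\to\Box\bot$. From there I would derive the target ${\cal H}\,|\,{\cal I}\,|\Rightarrow$ through the diagonal induction hypothesis, checking that the measure strictly decreases there.
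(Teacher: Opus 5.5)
Your top-down framing (trace the cut formula to its introductions, insert cuts there, duplicate under $ic$) matches the paper's. However, your reduction of the principal case $\Box{:}r$ against $K$ is circular, and that case is the heart of the theorem. Your new cut of ${\cal H}\,|\to\Box A$ against the premise ${\cal H}\,|\,\Box A\Rightarrow A$ has the same cut formula and literally the same left subproof as the cut you started from. So neither the complexity nor your diagonal measure (which is read off the left subproof) drops, and the height sum can increase. Worse, to eliminate this cut by your own procedure you trace the diagonal $\Box A$ up to its $K$-introductions ${\cal K}\,|\,\Box A\to|\,\Sigma\Rightarrow\Psi$. You then reach the principal case ``the same $\Box{:}r$ against $K$'', and your recipe prescribes exactly the cut you are trying to remove.

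Obtaining ${\cal H}\,|\Rightarrow A$ from ${\cal H}\,|\,\Box A\Rightarrow A$ without a cut on $\Box A$ is the classical {\sf GL} obstacle, and the paper spends \S 3.1--3.2 on it. First, Lemma~\ref{standard} normalizes every $\Box{:}r$ and $\blacksquare{:}r_1$ so that its side context is only $\nabla\Lambda\to$. Next, each $K$-introduced copy of the diagonal $\Box B$ can only return to the $\Rightarrow$-sequent through a $\Rightarrow$-rule. Hence there is an inner $\Box{:}r$ or $\blacksquare{:}r_1$ with its own diagonal $\Box C_i$ in between. The paper replaces $\Box B$ by $\Box C_i$ there (Lemma~\ref{basic}) and discharges the $\Box C_i$ one at a time (Lemma~\ref{crucial}, Corollary~\ref{crucial_corollary}). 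This yields $\nabla\Lambda\to|\Rightarrow B$ using only new cuts on $B$.

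With that in hand, every modal case becomes a single cut on $B$. In particular, your ``main obstacle'' $\blacksquare{:}r_1$ against $\blacksquare{:}l$ is just the cut of $\nabla\Lambda_i\to|\Rightarrow B$ with $\nabla\Lambda_m\to|\,B\Rightarrow$. Your own sketch for that case does not go through: $\Box{:}r$ needs the premise $\Box\bot\Rightarrow\bot$, not $\Box A\Rightarrow$. Also, passing from $\Box A\Rightarrow$ to $\Rightarrow$ requires cuts on $\Box A$ or $\Box\bot$, which are not of lower complexity than $\blacksquare A$ in general.

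Separately, your claim that $cut$ commutes with the $\Rightarrow$- and $\to$-rules is false. These rules have no side hypersequent (they are $S^\to/S^\Rightarrow$). Once a cut is inserted at the top of a path in the right subproof, the left premise's context ${\cal H}$ travels down and blocks every later $\Rightarrow$ or $\to$ step on that path. For the same reason, you cannot repair a turnstile mismatch by applying $\Rightarrow$ to ${\cal H}\,|\,\Gamma\to\Delta,A$. That only works if ${\cal H}$ can be merged into that sequent and afterwards recovered, which fails when ${\cal H}$ contains $\Rightarrow$-sequents.

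The paper handles this in two ways. The $\nabla\Lambda\to$ shape from Lemma~\ref{standard} can be merged, pushed through $\Rightarrow$, and recovered by $split$ or by $4$, $4^\blacksquare$ and $merge$. Facts 1.1--2.3 then guarantee that each cut-formula path contains no $\to$-rule and at most one $\Rightarrow$-rule, or dually for paths starting with $\blacksquare{:}r_2$ or $\blacksquare{:}l$. After splitting on the turnstile of the cut, the replacement is made either at the top of the path or at its unique $\Rightarrow$-rule, never across one.
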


Before the proof of
Theorem \ref{cut-elimination}
in \S 3.3,
we show a sort of normal form theorem 
(Lemma \ref{standard})
on 
certain inference rules on modalities, $\Box$ and $\blacksquare$,
in a proof in {\sf GR} in \S 3.1
and we process the elimination of
what is called
the diagonal formula of $\Box :r$-rule.
in \S 3.2.

\subsection{Normal Forms of Inference Rules on Modalities}

In this subsection,
we introduce a sort of normal forms of the inference rules
introducing modalities in the right hand side of a sequent,
that is,
$\Box:r$, $\blacksquare:r_1$, $\blacksquare:r_2$,
and show that applications of those rules can be transformed
into those forms in a proof.

\begin{lemma} \label{standard}
For each of the rules
$\Box:r$, $\blacksquare:r_1$,
$\blacksquare:r_2$, and
$\blacksquare:l$,
it can be assumed to appear
in the following form, say the {\rm standard} form, in a proof $P$.
\footnote{ 
When $\Psi=
(D_1, \ldots, D_k)$,
$ \nabla\Psi$ denotes $(\nabla D_1, \ldots, \nabla D_k)$,
where $\nabla $ is not necessarily uniform for formulas in $\Psi$.
E.g., it is possible that $\nabla
(D_1, D_2, D_3)
=(\Box D_1, \Box D_2, \blacksquare D_3)$.
}

\begin{center}
$\infer[\Box:r]{
\nabla\Lambda\to|
\to \Box B
}{
\nabla\Lambda\to|
\Box B \Rightarrow B}
$
\hspace{2em} $\infer[\Box:r]{
{\cal K}| 
\to \Box B
}{
\infer=[iw]{{\cal K}| 
\Box B \Rightarrow B}
{
{\cal K}| \Rightarrow
}
}
$

\vspace{1em}
$\infer[\blacksquare:r_1]{
\nabla\Lambda\to|
\to \blacksquare B |\Rightarrow
}{
\nabla\Lambda\to|
\Box B \Rightarrow B
}$

\vspace{1em}$\infer[\blacksquare:r_2]{
\nabla\Lambda\to|
\Rightarrow \blacksquare B
}{
\nabla\Lambda\to|
\Rightarrow B }
$

\vspace{1em}
$\infer[\blacksquare:l]{
\nabla\Lambda\to|
\blacksquare B \Rightarrow
}{
\nabla\Lambda\to|
B\Rightarrow }
$

\end{center}

\noindent
Here, the conditions are inflicted;

1. In each form, there is no application of {\rm ew}
introducing the indicated turnstile
of $\Rightarrow$ in the proof;

2. In the second form
(that is, the one with environment
${\cal K}$),
there are applications of $\blacksquare: r_1$ introducing
the indicated turnstile $\Rightarrow$ in the proof;

3. In the four forms except for
the second form
(that is, the ones without environment
${\cal K}$),
the indicated turnstile $\Rightarrow$ is introduced only by
the $\Rightarrow$-rule in the proof.
\end{lemma}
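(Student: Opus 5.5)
We transform a given proof $P$ step by step. Each transformation either removes material from the proof or pushes material downward, so the process terminates. The argument works by tracing, in each application of $\Box:r$, $\blacksquare:r_1$, $\blacksquare:r_2$ or $\blacksquare:l$, the indicated $\Rightarrow$-sequent upward through $P$. It follows the sequence of sequents that descend from one another through the rules. Along this thread the turnstile $\Rightarrow$ has one of three origins: an application of $ew$, an application of the $\Rightarrow$-rule, or an application of $\blacksquare:r_1$. These turnstiles may also be merged through $merge$. We treat the three origins in turn.

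\emph{Step 1: eliminate $ew$.} Suppose some ancestor of the indicated $\Rightarrow$ is introduced by $ew$. We delete that $ew$ and replace the sequent it introduced by nothing, all the way down the thread. If the thread passes through a $merge$ with another $\Rightarrow$-sequent, the merge simply disappears, since its other premise already supplies the turnstile. If the whole thread comes from $ew$, the sequent $\Box B\Rightarrow B$ (respectively $\Rightarrow B$, $B\Rightarrow$) consists only of weakened material. In that case we obtain the lower hypersequent of the modal rule directly by $ew$ from the hypersequent with that component removed. The only exception is $\blacksquare:r_1$, whose conclusion needs the extra $\Rightarrow$, and this is again supplied by $ew$. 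Each such deletion only shortens the proof, so condition 1 can be secured.

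\emph{Step 2: separate the contexts $\nabla\Lambda\to$.} In the remaining threads, the formulas on the $\Rightarrow$-sequent come from three sources: applications of $K,4,K^\blacksquare,4^\blacksquare$, which also create the components $\nabla D\to$; the $\Rightarrow$-rule, applied to an ordinary $\to$-sequent; and internal weakenings and contractions. The key observation is that every other component ${\cal H}$ of the hypersequent is passive for these rules. We therefore permute the modal rule upward past all rules acting only on ${\cal H}$, or downward, with ${\cal H}$ reattached afterwards by the same rules. The reattachment uses $ew$ and $merge$, since the upper and lower hypersequents differ only in the indicated component.

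After this permutation, the premise of each modal rule contains exactly the components $\nabla\Lambda\to$ produced along the thread, together with the indicated $\Rightarrow$-sequent. This yields the first, third, fourth and fifth forms. Components $\nabla D\to$ introduced on other threads but later merged can be split off again with $split$, because both sides are $\to$-sequents.

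\emph{Step 3: the second form.} Consider now a $\Box:r$ whose $\Rightarrow$ has, among its ancestors, a turnstile introduced by $\blacksquare:r_1$; since $\blacksquare:r_1$ produces an empty $\Rightarrow$, this is exactly the situation of condition 2. We move all material in $\Box B\Rightarrow B$ coming from other threads into ${\cal K}$. Using $merge$ in reverse is not available, so instead we re-route: we pull the internal weakenings $iw$ introducing $\Box B$ and $B$ down to just above $\Box:r$. The sequent then has the form ${\cal K}|\Rightarrow$ before the weakenings. Condition 3 for the other forms follows once the $\blacksquare:r_1$-origins have all been isolated in this way.

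I expect the main obstacle to be Step 3, together with the interaction with $merge$ in Step 2. When a $\Rightarrow$-sequent inherits its turnstile from both a $\Rightarrow$-rule and a $\blacksquare:r_1$ through $merge$, we must decide which standard form to aim at. We must also show that the material coming from the $\Rightarrow$-rule branch can be discarded or relocated. My first attempt would be to argue that in this mixed case the $\blacksquare:r_1$ branch already provides an empty $\Rightarrow$. We would then use $\bot$ together with that empty sequent to weaken everything else, reducing to the second form. The measure for termination would be the number of $merge$ inferences above each modal rule.
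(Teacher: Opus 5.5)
There is a genuine gap, and it sits exactly at the obstacle you flag at the end. Your Step 2 (moving $I$ up past rules acting only on other components) is the paper's first permutation stage. What you are missing is that $I$ can also be moved above a $merge$ joining two ancestors of the indicated turnstile, by \emph{duplicating} it. From ${\cal P}\,|\,(\Box B)^a\Rightarrow B^b\,|\,(\Box B)^c\Rightarrow B^d$, apply $\Box:r$ to each sequent (with the $iw$/$ic$ absorbed), then recover ${\cal P}\,|\to\Box B$ by $merge$ and $ic$. After this, every ancestor path contains only propositional and internal rules on the indicated sequent, together with $K,4,K^\blacksquare,4^\blacksquare$. Paths then meet only through $\wedge:r$, and each copy of $I$ goes to the standard form that fits its own origins.

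Without splitting $I$, the mixed case cannot be resolved at all. The rule $\bot$ is of no use here, since it only deletes an empty $\to$-sequent. Consider $\to\blacksquare\top\,|\Rightarrow$ (by $\blacksquare:r_1$ from $\Box\top\Rightarrow\top$) and $\to\neg\Box q\,|\,\Box q\Rightarrow q$ (by $K$ and $\neg:r$ from $\Box q,q\Rightarrow q$). Apply $\wedge:r$ to their $\to$-components, then $merge$ the two $\Rightarrow$-sequents, then $\Box:r$; this yields $\to\blacksquare\top\wedge\neg\Box q\,|\to\Box q$. The indicated turnstile has a $\blacksquare:r_1$-origin, yet ${\cal K}\,|\Rightarrow$, i.e.\ $\to\blacksquare\top\wedge\neg\Box q\,|\Rightarrow$, is not derivable. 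Its image implies $\neg\Box q\vee\Box\bot$, which gives $\Box\bot$ for $q:=\top$. Nor is ${\cal K}$ of the form $\nabla\Lambda\to$. So no re-routing of weakenings gives this single application a standard form, and ``moving material into ${\cal K}$'' (your Step 3) is not an operation the calculus offers.

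Your Step 1 also fails as stated. Erasing the $ew$-introduced sequent along its descendants is harmless through unary rules and $merge$, restoring formulas by $iw$. It is not harmless through a $\wedge:r$ whose two principal sequents both carry ancestors of the indicated turnstile, one from $ew$ and one from the $\Rightarrow$-rule: that rule loses a premise. The paper handles this, after the normalization above, by an induction along the paths:
\begin{itemize}
\item if some origin is $ew$, the environment without the indicated sequent is derivable, and $I$ is replaced by $ew$;
\item if some origin is $\blacksquare:r_1$, the environment together with an empty $\Rightarrow$ is derivable, giving the second form (which for the $\blacksquare$-rules is further reduced to $ew$/$iw$);
\item otherwise $\nabla\Lambda\to\,|\,\nabla^*\Psi,(\Box B)^n\Rightarrow B^m$ is derivable.
\end{itemize}
Finally, the paper processes the applications top-down and stops the paths at $\blacksquare:r_1$, $\blacksquare:r_2$ and $\blacksquare:l$. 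These are then already standard and contribute only components $\nabla\Lambda\to$ (and $\to\blacksquare C$). Your proposal never says how a $\blacksquare:r_2$ or $\blacksquare:l$ lying on a path, with its own environment, is dealt with.
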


\noindent {\bf Proof.}
We proceed from top to bottom in $P$.
First,
handle the case of $\Box:r$.
Consider an application of
$\Box:r$, say $I$, in $P$.

\begin{center}
$
\infer[I]{
{\cal H}| \rightarrow \Box B
}{
{\cal H}| \Box B \Rightarrow^{\tau } B
}
$
\end{center}

\noindent
Here,
all the applications of $\Box:r$, $\blacksquare:r_1$,
$\blacksquare:r_2$, and
$\blacksquare:l$ above $I$
already have the standard forms;
the indicated $\Rightarrow$
of the upper sequent of $I$
is labeled with $\tau$.

Further we assume that all the ancestors of
$\Rightarrow^\tau$ of the upper sequent of $I$
below $\blacksquare:r_2$ and $\blacksquare:l$
are labeled with $\tau$.

\footnotesize

\begin{center}
\shortstack{
$\infer[\Rightarrow]
{\Gamma_0\Rightarrow^{\tau}\Delta_0}{
\Gamma_0\rightarrow\Delta_0
}$
\hspace{8em}
$\infer[ew]
{{\cal H}_2 | \Gamma_2\Rightarrow^{\tau}\Delta_2
}
{
{\cal H}_2
}
$
\hspace{8em}
$\infer[\blacksquare:l]{
{\cal H}_4 |
\blacksquare B \Rightarrow^\tau
}{{\cal H}_4 | B \Rightarrow}$
\\
\hspace{-2em}
$\vdots$
\hspace{4em}
$\infer[\blacksquare:r_1]
{{\cal H}_1 |
\to \blacksquare C | \Rightarrow^{\tau}
}
{
{\cal H}_1 |
\Box C \Rightarrow C
} $
\hspace{1.2em}
$\vdots$
\hspace{3.8em}
$\infer[\blacksquare:r_2]{
{\cal H}_3 |
\Rightarrow^\tau \blacksquare B
}{{\cal H}_3 | \Rightarrow B }$
\hspace{3em}
$\vdots$
\\
\hspace{-2em}
$\vdots$
\hspace{8em}
$\vdots$
\hspace{6em}
$\vdots$
\hspace{6em}
$\vdots$
\hspace{8em}
$\vdots$
\\
\hspace{-1.6em}
$\ddots$
\hspace{6.8em}
$\vdots
$
\hspace{6em}
$
\vdots$
\hspace{6.8em}
$
\vdots
\hspace{6.8em}
\iddots
$
\vspace{1em}
\\
\hspace{-1.6em}
$\ddots
\hspace{4.8em}
\ddots
\hspace{3.5em}
\vdots
$
\hspace{3.5em}
$
\iddots
\hspace{4.8em}\iddots
$
\vspace{2em}
\\
$\infer[I]{
{\cal H}| \rightarrow \Box B
}{
{\cal H}| \Box B \Rightarrow^{\tau } B
}$
\hspace{1em}
}
\end{center}

\normalsize
Thus, each path of $\Rightarrow^\tau$, say $\tau$-$path$, starts
with applications of $\Rightarrow$, $ew$,
$\blacksquare:r_1$,
$\blacksquare:r_2$ or
$\blacksquare:l$.
Those applications of
$\blacksquare:r_1$,
$\blacksquare:r_2$ and
$\blacksquare:l$ have the standard forms
and thus the accompanying environments ${\cal H}_i$ have
specific forms given in Definition \ref{standard}.

We say that an application of rule {\it is in} a $\tau$-path,
when the upper or lower hypersequent of it
contains $\Rightarrow^{\tau}$ of the $\tau$-path.
Thus, each of the lower and the upper sequents of
an application in a $\tau$-path may have any of
$\Rightarrow^{\tau}$, $\Rightarrow$
and $\rightarrow$ (where $\Rightarrow$ means the turnstile $\Rightarrow$
having no label $\tau$).
Then, we naturally say that
an application of rule in a $\tau$-path
is below or above
another application of rule in the
$\tau$-path.

We consider
 the following abbreviated form of $I$. 

\begin{center}
$
\infer[I]{
{\cal H}| \rightarrow \Box B
}{
{\cal H} | (\Box B)^n \Rightarrow^{\tau } B^m
}
$
\end{center}

Here, 
$X^n$ $(n \geq 0)$ denotes $n$-many occurrences of formula $X$
and the above form can be viewed as serial applications of $iw, ic$ 
and $\Box:r$.
The sequent $\Rightarrow$ is equal to $ (\Box B)^0 \Rightarrow B^0$.

Let us consider the following two groups
of applications of rule in a $\tau$-path.

\vspace{1em}
($\natural 1$) The propositional inference rules and the rules of
$iw$, $ic$
such that their lower and upper sequents
have $\Rightarrow^{\tau}$

($\natural 2$) The rules of $K$, $4$, $K^\blacksquare$, $4^\blacksquare$
such that their upper sequents
have $\Rightarrow^{\tau}$

\vspace{1em}

Note that $merge$ is excluded from 
($\natural 1$) and ($\natural 2$), 
whether its upper sequents
have $\Rightarrow^\tau$ or not.

We are going to make proof-transformations 
so that
each $\tau$-path will consist of the applications of
the following rules
in the top-down order.

\vspace{1em}
$1$. The start rule, of which the lower sequent has $\Rightarrow^\tau$;

\vspace{1em}
$2$. Rules of $(\natural 1)$, of which the upper and lower sequents have $\Rightarrow^\tau$;

\vspace{1em}
$3.$ Rules of ($\natural 2$), of which the upper sequent has $\Rightarrow^\tau$;

\vspace{1em}
$4.$ $I$, of which the upper sequent has $\Rightarrow^\tau$;

\vspace{1em}

Transformation is basically performed using the permutation technique
in two stages.
First we establish the following property.

\vspace{1em}
(*) For each $\tau$-path, between
the start rule and $I$,
there occur only applications of $(\natural 1)$ or $(\natural 2)$.

\vspace{1em}
The following is the list of
the rules that we are going to remove from the applications
between $I$ and the start rules
in $\tau$-paths
so that they are all below $I$.

\vspace{1em}
\hspace{1em}
(a)
The propositional rules, $iw, ic, ew$ and $split$
whose 
lower sequents do not have $\Rightarrow^{\tau}$


\vspace{1em}
\hspace{1em}
(b)
$K, 4, K^{\blacksquare},
4^{\blacksquare}, \Box:r, \blacksquare:r_1,
\blacksquare:r_2$ and $\blacksquare:l$
whose upper sequents do not have $\Rightarrow^{\tau}$

\vspace{1em}
\hspace{1em}
(c) $merge$

\vspace{1em}

Let $L$ be an uppermost application of a rule of
$(\natural 1)$ or $(\natural 2)$
such that there are applications of rules 
from (a, b, c)
above $L$.

Then, let $M$ denote the application just above $L$.
$M$ is one from (a, b, c).
We permute $L$ and $M$.
This is possible because, 
in the cases of (a, b),
the upper and lower sequents
of $M$ do not have $\Rightarrow^{\tau}$;
the lower sequent of $M$ is not the upper sequent of $L$.
The following is an example of the permutation 
as to (a).

\begin{center}$
\infer[L]{{\cal P}| {\cal Q}|
\Sigma_0, \Sigma_1 \gg
\Omega_0, \Omega_1,
C\wedge D | \Psi
\Rightarrow^{\tau} \Xi, \neg E
}{
\infer[M(\wedge:r)]{
{\cal P}| {\cal Q}|
\Sigma_0, \Sigma_1 \gg
\Omega_0, \Omega_1,
C\wedge D |
 E, \Psi
\Rightarrow^{\tau} \Xi
}
{ {\cal P}|
\Sigma_0 \gg \Omega_0, C
&
{\cal Q}|
\Sigma_1 \gg \Omega_1, D  |
 E, \Psi
\Rightarrow^{\tau} \Xi
}
}
$

\vspace{1em}
$\bigtriangledown$

\vspace{1em}
$
\infer[M(\wedge:r)]{
{\cal P}| {\cal Q}|
\Sigma_0, \Sigma_1 \gg
\Omega_0, \Omega_1,
C\wedge D |
\Psi
\Rightarrow^{\tau} \Xi, \neg E
}
{ {\cal P}|
\Sigma_0 \gg \Omega_0, C
&
\infer[L]{{\cal Q}|
\Sigma_1 \gg \Omega_1, D  |
 \Psi
\Rightarrow^{\tau} \Xi, \neg E}{
{\cal Q}|
\Sigma_1 \gg \Omega_1, D  |
 E, \Psi
\Rightarrow^{\tau} \Xi}
}
$
\end{center}


Also, we take an example where $L$ is from (b).

\begin{center}$
\infer[L]{{\cal P}| 
\blacksquare C \to | \Sigma_0 \Rightarrow
\Omega_0 | \Psi
\Rightarrow^{\tau} \Xi, \neg E
}{
\infer[M(K^{\blacksquare})]{
{\cal P}| 
\blacksquare C \to | \Sigma_0 \Rightarrow
\Omega_0 |
 E, \Psi
\Rightarrow^{\tau} \Xi
}
{ {\cal P}| 
C, \Sigma_0 \Rightarrow
\Omega_0 |
 E, \Psi
\Rightarrow^{\tau} \Xi
}
}
$

\vspace{1em}
$\bigtriangledown$

\vspace{1em}
$
\infer[M(K^{\blacksquare})]{{\cal P}| 
\blacksquare C \to | \Sigma_0 \Rightarrow
\Omega_0 | \Psi
\Rightarrow^{\tau} \Xi, \neg E
}{
\infer[L]{
{\cal P}| 
C, \Sigma_0 \Rightarrow
\Omega_0 |
  \Psi
\Rightarrow^{\tau} \Xi, \neg E
}
{ {\cal P}| 
C, \Sigma_0 \Rightarrow
\Omega_0 |
 E, \Psi
\Rightarrow^{\tau} \Xi
}
}
$
\end{center}


Discuss the case when $M$ is $merge$ (c).
If the lower sequent of $M$ is not 
the upper sequent of $L$,
it is similar to the other cases as follows.

\begin{center}$
\infer[L]{{\cal P}|
\Sigma_0, \Sigma_1 \gg
\Omega_0, \Omega_1
 | \Psi
\Rightarrow^{\tau} \Xi, \neg E
}{
\infer[M(merge)]{
{\cal P}| 
\Sigma_0, \Sigma_1 \gg
\Omega_0, \Omega_1|
 E, \Psi
\Rightarrow^{\tau} \Xi
}
{ {\cal P}|
\Sigma_0 \gg \Omega_0 |
\Sigma_1 \gg \Omega_1 |
 E, \Psi
\Rightarrow^{\tau} \Xi
}
}
$

\vspace{1em}
$\bigtriangledown$

\vspace{1em}
$
\infer[M(merge)]{
{\cal P}| 
\Sigma_0, \Sigma_1 \gg
\Omega_0, \Omega_1|
 E, \Psi
\Rightarrow^{\tau} \Xi, \neg E
}
{ 
\infer[L]
{{\cal P}|
\Sigma_0 \gg \Omega_0 |
\Sigma_1 \gg \Omega_1 |
  \Psi
\Rightarrow^{\tau} \Xi, \neg E
}
{{\cal P}|
\Sigma_0 \gg \Omega_0 |
\Sigma_1 \gg \Omega_1 |
  E, \Psi
\Rightarrow^{\tau} \Xi
}
}
$
\end{center}

\noindent
Here, $\gg$ denotes $\to$,  $\Rightarrow$ or $\Rightarrow^{\tau}$.

If
the lower sequent of $M$ is 
the upper sequent of $L$,
we can permute them easily 
as the following example shows.

\begin{center}$
\infer[L]{{\cal P}|
\Sigma_0, \Sigma_1 \Rightarrow^{\tau}
\Omega_0, \Omega_1, \neg E
}{
\infer[M(merge)]{
{\cal P}| E,
\Sigma_0, \Sigma_1 \Rightarrow^{\tau}
\Omega_0, \Omega_1
}
{ {\cal P}|
E, \Sigma_0 \Rightarrow^{\tau} \Omega_0 |
\Sigma_1 \Rightarrow^{\tau} \Omega_1
}
}
$

\vspace{1em}
$\bigtriangledown$

\vspace{1em}
$
\infer[M(merge)]{
{\cal P}| 
\Sigma_0, \Sigma_1 \Rightarrow^{\tau}
\Omega_0, \Omega_1, \neg E
}
{ 
\infer[L]
{{\cal P}|
\Sigma_0 \Rightarrow^{\tau} \Omega_0, \neg E |
\Sigma_1 \Rightarrow^{\tau} \Omega_1
}
{{\cal P}|
E, \Sigma_0 \Rightarrow^{\tau} \Omega_0 |
\Sigma_1 \Rightarrow^{\tau} \Omega_1
}
}
$
\end{center}


Note that
when $M$ is $merge$ and $L$ is $I$
such that 
the  lower sequent of $M$ is the upper sequent of $L$,
$I$ is duplicated for the two $\tau$-paths as follows. 

\begin{center}$
\infer[L]{{\cal P}|
\to \Box B
}{
\infer[M(merge)]{
{\cal P}|(\Box B)^a, (\Box B)^c
\Rightarrow^{\tau } B^b, B^d
}
{ {\cal P}|
(\Box B)^a
\Rightarrow^{\tau } B^b
|
(\Box B)^c
\Rightarrow^{\tau } B^d
}
}
$

\vspace{1em}
$\bigtriangledown$

\vspace{1em}
$
\infer[ic]{{\cal P}|
\to \Box B}{
\infer[M(merge)]{{\cal P}|
\to \Box B, \Box B
}{
\infer=[L]{
{\cal P}|
\to \Box B
|
\to \Box B
}
{ {\cal P}|
(\Box B)^a
\Rightarrow^{\tau' } B^b
|
(\Box B)^c
\Rightarrow^{\tau } B^d
}
}
}
$
\end{center}

In this case, we rename  one of the two  $\tau$-paths
to $\tau'$.
Thus, the property (*) has been established.

Next we establish the following property.

\vspace{1em}
(**)
Each $\tau$-path has a $mid$-$hypersequent$, $MID$, that is,
there occur only applications of $(\natural 1)$
above $MID$ 
and there occur only applications of $(\natural 2)$
below $MID$.


\vspace{1em}
At this stage, for each $\tau$-path,
the applications of $(\natural 1)$ and $(\natural 2)$
are all mixed up
between the start rules and $I$.
Now we are going to sort out them 
so that 
the applications of $(\natural 2)$
are located only in the last part of each $\tau$-path.

The proof-transformation is similar to that for the property $(*)$.
Let $L$ be an uppermost application of a rule of
$(\natural 1)$
such that there are applications of rules of 
$(\natural 2)$ above $L$.
And let $M$ be the application just above $L$.
$M$ is one of $(\natural 2)$.
Permute $L$ and $M$.
This is possible because
the lower sequent
of $M$ 
is a $\to$-sequent and is not the upper sequent of $L$
and because the presence of the auxiliary formula,  say $C$,
of $M$ does not affect $L$
(as $L$ is propositional rule, $iw$ or $ic$).
The following is an example of the permutation.

\begin{center}$
\infer[L]{{\cal P}| {\cal Q}|
\blacksquare C\to | 
\Sigma_0, \Sigma_1 \Rightarrow^{\tau}
\Omega_0, \Omega_1,
D \wedge  E
}{
\infer[M(K^\blacksquare)]{
{\cal P}|
\blacksquare C\to | \Sigma_0 \Rightarrow^{\tau}
\Omega_0, D
}
{ {\cal P}|
C, \Sigma_0  \Rightarrow^{\tau}
\Omega_0, D
}
&
{\cal Q}|
\Sigma_1 \Rightarrow^{\tau} \Omega_1, E
}
$

\vspace{1em}
$\bigtriangledown$

\vspace{1em}
$
\infer[M(K^\blacksquare)]{{\cal P}| {\cal Q}|
\blacksquare C\to | 
\Sigma_0, \Sigma_1 \Rightarrow^{\tau}
\Omega_0, \Omega_1,
D \wedge  E
}{
\infer[L]{{\cal P}| {\cal Q}|
C,\Sigma_0, \Sigma_1 \Rightarrow^{\tau}
\Omega_0, \Omega_1,
D \wedge  E
}{
{\cal P}|
C, \Sigma_0  \Rightarrow^{\tau}
\Omega_0, D
&
{\cal Q}|
\Sigma_1 \Rightarrow^{\tau} \Omega_1, E
}
}
$
\end{center}


Thus we established the properties,  (*) and (**),
so that 
each $\tau$-path has a $MID$,
of the form:

\begin{center} 
$
{\cal H}_0| \nabla \Lambda \to | 
\nabla^*\Psi, (\Box B)^n \Rightarrow^{\tau } B^m$,
\end{center}

\noindent
where

\vspace{1em}
$\bullet$
Formulas of $\nabla \Lambda$ come from the
lower hypersequents of possible
applications of 
$\blacksquare:r_1$,
$\blacksquare:r_2$ and
$\blacksquare:l$ with which the $\tau$-paths can start.

\vspace{1em}
$\bullet$
$\nabla^* \Psi
$ is the  multiset 
of the auxiliary formulas of
possible applications of $K, 4$, $K^{\blacksquare}$, and $4^{\blacksquare}$
below the $I$.
\footnote{ 
As in the case of $\nabla\Psi$, 
$\nabla^*$ of  $\nabla^*\Psi$ is not necessarily uniform for formulas in $\Psi$.
E.g., it is possible that $\nabla^*
(D_1, D_2, D_3)
=(\Box D_1, D_2, \blacksquare D_3)$.}

\vspace{1em}
Now $P$ is as follows.


\begin{center}
\shortstack{\vspace{1em}
\hspace{-6.4em}$P_0$ \hspace{4em}\deduc \hspace{2em} \deduc \\
${\cal H}_0| \nabla \Lambda \to | 
\nabla\Psi^*, (\Box B)^n \Rightarrow^{\tau } B^m$\vspace{1em} \\
\vspace{1em}
\vdots
\\
\hspace{1em}
$\infer[I]{
{\cal H}_0| \nabla \Lambda \to | 
\nabla\Psi \to |
\rightarrow \Box B
}{
{\cal H}_0| \nabla \Lambda \to | 
\nabla\Psi \to |(\Box B)^n \Rightarrow^{\tau } B^m
}
$
}
\end{center}

\normalsize

\noindent
Note that 
those applications above $MID$s
in $\tau$-paths satisfy the property that their lower sequent 
has $\Rightarrow^{\tau}$
and those below $MID$s do not satisfy the property.

\vspace{1em}
Then, it is natural to consider a more generalized form of $I$ as 
follows.

\begin{center}$
\infer[I]{
{\cal H}_0|  \nabla \Lambda \to | 
\nabla\Psi \to | \to \Box B
}{{\cal H}_0| \nabla \Lambda \to | 
\nabla\Psi^*, (\Box B)^n \Rightarrow^{\tau } B^m}
$
\end{center}

In this form, the application $I$ directly follows the $MID$
and so there are only applications of $(\natural 1)$ in $\tau$-paths.
As $(\natural 1)$ does not include $merge$,
two $\tau$-paths can be contracted only by $\wedge:r$
whose upper sequents have $\Rightarrow^\tau$.

We prove the following statements.

\vspace{1em}
$\bullet$ When there are applications of $ew$ introducing $\Rightarrow^{\tau }$ above the $I$,
${\cal H}_0 $ is provable.

$\bullet$ When there are applications of $\blacksquare:r_1$ introducing $\Rightarrow^{\tau }$ above the $I$,
${\cal H}_0| \Rightarrow^{\tau }$ is provable.

$\bullet$ Otherwise, the form:
$\nabla \Lambda \to |
\nabla^* \Psi, (\Box B)^n
\Rightarrow^{\tau } B^m$ is provable.

\vspace{1em}
The proof is carried out 
by induction on the number of applications of the
rules in the $\tau$-paths.
As the base case,
the $\tau$-paths consist of only one of the following: 
$ew$, $\blacksquare:r_1$,
$\Rightarrow$ and $\blacksquare:r_2$. 
In the case of $ew$,
clearly ${\cal H}_0 $ is provable and thus we can delete the application of $I$.

\begin{center}
$
\infer[I\hspace{2em}\rhd \hspace{2em}]{{\cal H}_0 | \to \Box B}{
\infer[ew]{
{\cal H}_0 | (\Box B)^n \Rightarrow^{\tau } B^m}
{
{\cal H}_0
}}
$
$
\infer[ew]{
{\cal H}_0 | \rightarrow \Box B}
{
{\cal H}_0
}
$
\end{center}

In the other cases of the base case,
$I$ is already of the standard form,
as illustrated below.

\small
\begin{center}
$
\infer[I]{ \to \Box B}{
\infer[\Rightarrow]{
(\Box B)^n \Rightarrow^{\tau } B^m}
{ (\Box B)^n \rightarrow B^m
}}$
\hspace{2em}
$
\infer[I]{\nabla \Lambda \to | \to \Box \blacksquare C}{
\infer[\blacksquare:r_2]{
\nabla \Lambda \to | 
\Rightarrow ^{\tau } \blacksquare C
}{
\nabla \Lambda \to |\Rightarrow C }
}
$
\hspace{2em}
$
\infer[I]{\nabla \Lambda \to | \to \blacksquare C |\to \Box B}{
\infer[\blacksquare:r_1]{
\nabla \Lambda \to| \to \blacksquare C | \Rightarrow ^{\tau }
}{
\nabla \Lambda \to | \Box C \Rightarrow C }
}
$
\end{center}

\normalsize
Here, by the assumption, the applications of $\blacksquare:r_2$ and $\blacksquare:r_1$ above $I$
are of the standard form.
Hence, $I$ is also of the standard form.

In the induction step,
we handle only the case for $\wedge:r$.

\begin{center}$
\infer[\wedge:r]{
{\cal H}_0|{\cal H}_1|\nabla \Lambda_0 \to | \nabla \Lambda_1 \to |
\Sigma_0, \Sigma_1
\Rightarrow^{\tau } \Xi_0, \Xi_1,C\wedge D
}
{ {\cal H}_0|\nabla \Lambda_0 \to |
\Sigma_0
\Rightarrow^{\tau } \Xi_0, C &
{\cal H}_1|\nabla \Lambda_1 \to |
\Sigma_1
\Rightarrow^{\tau } \Xi_1, D}$

\end{center}

\noindent {\it Case 1}. If
there are applications of $ew$
introducing $\Rightarrow^{\tau }$
above $ {\cal H}_0| {\cal H}_1|\nabla \Lambda_0 \to | \nabla \Lambda_1 \to |
\Sigma_0, \Sigma_1
\Rightarrow^{\tau } \Xi_0, \Xi_1, C \wedge D$,
then they occur
above ${\cal H}_0|\nabla \Lambda_0 \to |
\Sigma_0
\Rightarrow^{\tau } \Xi_0, C$
or
$
{\cal H}_1| \nabla \Lambda_1 \to |
\Sigma_1
\Rightarrow^{\tau } \Xi_1, D$.
By the induction hypothesis,
$
{\cal H}_0 
$ or $
{\cal H}_1 
$ is provable.
In either case,
${\cal H}_0|{\cal H}_1 
$ is provable by using $ew$.

\noindent {\it Case 2}. If there are applications of $\blacksquare:r_1$
that introduce $\Rightarrow^{\tau }$
above $ {\cal H}_0|{\cal H}_1|\nabla \Lambda_0 \to | \nabla \Lambda_1 \to |
\Sigma_0, \Sigma_1
\Rightarrow^{\tau } \Xi_0, \Xi_1, C \wedge D$,
then they occur
above ${\cal H}_0|\nabla \Lambda_0 \to |
\Sigma_0
\Rightarrow^{\tau } \Xi_0, C$
or
$
{\cal H}_1| \nabla \Lambda_1 \to |
\Sigma_1
\Rightarrow^{\tau } \Xi_1, D$.
Therefore, by the induction hypothesis,
$
{\cal H}_0|
\Rightarrow^{\tau } 
$ or $
{\cal H}_1|
\Rightarrow^{\tau } 
$ is provable.
In either case,
${\cal H}_0|
{\cal H}_1|
\Rightarrow^{\tau } $ is provable by using $ew$.

\noindent {\it Case 3}. Otherwise,
there is no application of $ew$ or $\blacksquare:r_1$
introducing $\Rightarrow^{\tau }$
above ${\cal H}_0|\nabla \Lambda_0 \to |
\Sigma_0
\Rightarrow^{\tau } \Xi_0, C$
or
$
{\cal H}_1| \nabla \Lambda_1 \to |
\Sigma_1
\Rightarrow^{\tau } \Xi_1, D$.
Then, by the induction hypothesis,
$\nabla \Lambda_0 \to |
\Sigma_0 \Rightarrow^{\tau } \Xi_0, C$
and $ \nabla \Lambda_1 \to |
\Sigma_1 \Rightarrow^{\tau } \Xi_1, D$
are provable,
and thus
$ \nabla \Lambda_0 \to | \nabla \Lambda_1 \to |
\Sigma_0, \Sigma_1 \Rightarrow^{\tau } \Xi_0, \Xi_1, C \wedge D$ is also provable
by using $\wedge:r$.

\vspace{1em}
In this way,
the application $I$ of $\Box:r$
is deleted or made of the standard form.
For the other cases of
$\blacksquare:r_1$,
$\blacksquare:r_1$ and
$\blacksquare:l$,
we can make similar proof-transformations
to make them of the standard form.
For example, for the case of $\blacksquare:l$,
we first use the permutation technique so that
each $MID$ is directly followed by rules of $K, 4, K^{\blacksquare}, 4^{\blacksquare}$
and the application $I$ of $\blacksquare:l$ under consideration.
Then, consider the following abbreviated form for $I$.

\begin{center}
$
\infer[I]{
{\cal H}_0| \nabla \Lambda \to | \nabla\Psi\to |\blacksquare B \Rightarrow
}{
{\cal H}_0| \nabla \Lambda \to | \nabla\Psi^*, B^n \Rightarrow^{\tau }
}
$
\end{center}

Then, we can prove the statements corresponding to the above ones
in a similar way to show that $I$ can be deleted or made of the 
standard forms.

\vspace{1em}
Here we put some remark for the
cases of $\blacksquare:r_1, \blacksquare:r_2$
and $\blacksquare:l$.
For each of these cases,
as in the case of $\Box:r$,
we obtain the two forms as follows.

\begin{center}
$\infer[\blacksquare:r_1]{
\nabla \Lambda\to |
\to \blacksquare B |\Rightarrow
}{
\nabla \Lambda\to |
\Box B \Rightarrow B
}
$
\hspace{2em}
$\infer[\blacksquare:r_1]{
{\cal K}| \to \blacksquare B |\Rightarrow
}{
{
\infer=[iw]{
{\cal K}|\Box B \Rightarrow B}
{{\cal K}| \Rightarrow}
}
}
$

\vspace{1em}$\infer[\blacksquare:r_2]{
\nabla\Lambda\to|
\Rightarrow \blacksquare B
}{
\nabla\Lambda\to|
\Rightarrow B }
$
\hspace{2em}
$\infer[\blacksquare:r_2]{
{\cal K}| \Rightarrow \blacksquare B
}{
{
\infer=[iw]{
{\cal K}| \Rightarrow B}
{{\cal K}| \Rightarrow}
}
}
$

\vspace{1em}
$\infer[\blacksquare:l]{
\nabla\Lambda\to|
\blacksquare B \Rightarrow
}{
\nabla\Lambda\to|
B\Rightarrow }
$\hspace{2em}
$\infer[\blacksquare:l]{
{\cal K}| \blacksquare B \Rightarrow
}{
{
\infer=[iw]{
{\cal K}|B \Rightarrow }
{{\cal K}| \Rightarrow}
}
}
$
\end{center}

However, 
we can reduce the above three forms with the environment $\cal K$
to applications of $ew, iw$ in the following way.

\begin{center}
$\infer[\blacksquare:r_1\hspace{2em}\rhd \hspace{2em}]{
{\cal K}| \to \blacksquare B |\Rightarrow
}{
{
\infer=[iw]{
{\cal K}|\Box B \Rightarrow B}
{{\cal K}| \Rightarrow}
}
}$
$
\infer[ew]{
{\cal K}|\to \blacksquare B |\Rightarrow}
{
{\cal K}| \Rightarrow
}
$
\end{center}

\begin{center}
$\infer[\blacksquare:r_2\hspace{2em}\rhd \hspace{2em}]{
{\cal K}| \Rightarrow \blacksquare B
}{
{
\infer[iw]{
{\cal K}| \Rightarrow B}
{{\cal K}| \Rightarrow}
}
}
$
$
\infer[iw]{
{\cal K}|\Rightarrow \blacksquare B }
{
{\cal K}| \Rightarrow
}
$
\end{center}

\begin{center}
$\infer[\blacksquare:l\hspace{2em}\rhd \hspace{2em}]{
{\cal K}| \blacksquare B \Rightarrow
}{
{
\infer=[iw]{
{\cal K}|B \Rightarrow }
{{\cal K}| \Rightarrow}
}
}
$
$
\infer[iw]{
{\cal K}| \blacksquare B \Rightarrow }
{
{\cal K}| \Rightarrow
}
$
\end{center}

Thus, we do not need to consider the standard forms with the environment ${\cal K}$
for the cases of $\blacksquare:r_1, \blacksquare:r_2$
and $\blacksquare:l$.
\QED

\subsection{Elimination of the diagonal formula}

Now we set about the
elimination of the diagonal formula
of an application of $\Box:r$
or $\blacksquare:r_1$ in a proof $P$ in {\sf GR},
which is required for
the cut-elimination for {\sf GR}.
This situation is similar to the cut elimination case for {\sf GL},
and we apply the method developed
for extensions of {\sf GL}
in \cite{kushida2020, kashima}.

Let $I_0$ be an application
of $\Box:r$ or $\blacksquare:r_1$
in a proof $P$.
In terms of Lemma 
\ref{standard},
$I_0$ has one of the three standard forms.
There is no difficulty for the elimination
for the standard form of $\Box:r$ with the environment ${\cal K}$.
The other two forms of $I_0$,
for which we are going to
solve the deletion of the diagonal formula,
are as follows.

\begin{center}
$\infer[\Box:r]{ \nabla \Lambda \to |
\to \Box B
}{
\nabla \Lambda \to |
\Box B \Rightarrow^{\tau_1} B}
$
\hspace{2em}
$\infer[\blacksquare:r_1]{ \nabla \Lambda \to |
\to \blacksquare B |\Rightarrow
}{
\nabla \Lambda \to |
\Box B \Rightarrow^{\tau_1} B}
$
\end{center}

Here, 
the ${\tau_1}$-paths ending with 
these $\Box:r$ or $\blacksquare:r_1$
can start with the $\Rightarrow$-, $\blacksquare:l$- and 
$\blacksquare:r_2$-rules;
the turnstile $\Rightarrow^{\tau_1}$ is introduced only by 
$\Rightarrow$-rule.

Also, we may suppose that
there is no application of either $iw$ or $ew$
introducing
the diagonal formula $\Box B$ of $I_0$
at this start stage,
and 
there are
$n$-many applications, say
$(L_1, L_2, \ldots, L_n)$, of 
the $K$-rule 
which introduce the diagonal formula $\Box B$ of $I_0$.

\small
\begin{center}
\shortstack{
$
\infer[L_i]
{{\cal K}_i|
\Box B \to
| \Sigma_i \Rightarrow \Psi_i
}
{{\cal K}_i|
B, \Sigma_i \Rightarrow \Psi_i
}$ \hspace{2em}
$\infer[L_j]
{{\cal K}_j|
\Box B \to
| \Sigma_j \Rightarrow \Psi_j
}
{{\cal K}_j|
B, \Sigma_j \Rightarrow \Psi_j
}$
\\
$\vdots$
\hspace{11em} $\vdots$
\\
$\vdots$
\hspace{11em} $\vdots$\\
$\vdots$\hspace{11.3em} $\vdots$
\\
\hspace{.8em}
$\ddots$\hspace{3em}
\hspace{4em}
$\iddots$ \hspace{1em}
\vspace{1em}
\\
\hspace{-1em}
$
\nabla \Lambda \to |
\Box B \Rightarrow^{\tau_1} B
$}
\end{center}

\normalsize


The turnstile $\rightarrow$ of
the lower sequent of an $L_i$ 
must be converted to $\Rightarrow$ by the $\Rightarrow$-rule.
Let $N_i$ be the lowest one of such applications of
the $\Rightarrow$-rule.
The upper hypersequent of
$N_i$ must be a $\to$-sequent
and thus the turnstile $\Rightarrow$ of
$\Sigma_i \Rightarrow \Psi_i$
(in the lower hypersequent of $L_i$)
must be converted to $\to$ by
an application, say $M_i$, of
the $\Box:r$-
or $\blacksquare:r_1$-rule
above $N_i$.

\small

\begin{center}
\shortstack{
$
\infer[L_i]
{{\cal K}_i|
\Box B \to
| \Sigma_i \Rightarrow \Psi_i
}
{{\cal K}_i|
B, \Sigma_i \Rightarrow \Psi_i
}$ \hspace{4em}
$\infer[L_j]
{{\cal K}_j|
\Box B \to
| \Sigma_j \Rightarrow \Psi_j
}
{{\cal K}_j|
B, \Sigma_j \Rightarrow \Psi_j
}$\\
\hspace{-5.7em}
$Q_i[\Box B]$ \hspace{2em}$\vdots$
\hspace{9em}
$Q_j[\Box B]$ \hspace{2em}$\vdots$
\\
$\vdots$
\hspace{14.5em}$\vdots$
\vspace{1em}
\\
\hspace{1em}$
\infer[M_i]{{\cal J}_i| \Box B, X_i \rightarrow \Xi_i | \to \blacksquare C_i,
| \Rightarrow
}{
{\cal J}_i| \Box B, X_i \rightarrow \Xi_i| \Box C_i \Rightarrow C_i
}
$\hspace{2em}
$
\infer[M_j]{{\cal J}_j| \Box B,
X_j\rightarrow \Xi_j| \to \Box C_j}{
{\cal J}_j| \Box B, X_j \rightarrow \Xi_j| \Box C_j \Rightarrow C_j
}
$
\\$\vdots$
\hspace{14em} $\vdots$
\\$\vdots$\hspace{14.2em} $\vdots$ \vspace{1em}
\\
\hspace{1em}$
\infer[N_i]{\Box B, \Phi_i \Rightarrow^{\tau_1} \Omega_i}{
\Box B, \Phi_i \rightarrow \Omega_i
}
$
\hspace{7em}
$
\infer[N_j]{\Box B, \Phi_j\Rightarrow^{\tau_1} \Omega_j}{
\Box B, \Phi_j \rightarrow \Omega_j
}
$
\\$\vdots$ \hspace{14em} $\vdots$
\\
\hspace{-5.3em}
\hspace{4.77em}
$\vdots$ \hspace{14em}
$\vdots$\vspace{1em}
\\
\hspace{.8em}
$\ddots$\hspace{3em}
\hspace{4em}
$\iddots$ \hspace{1em}
\vspace{1em}
\\
$
\infer[I_0]{ \nabla \Lambda \to | 
\rightarrow \Box B}{
\nabla \Lambda \to |\Box B 
\Rightarrow^{\tau_1} B}
$ \vspace{1em}
\\ \vspace{1em}
}
\end{center}

\normalsize
Note that $M_i$ and $M_j$ cannot be the $\to$-rule
because of the presence of the $\to$-sequent in their lower hypersequent.

Also note that
due to the presence of $\Box B$,
there is no application of the $\blacksquare:l$- or $\blacksquare:r_2$-rule
below each $N_i$.

\vspace{1em}
For each $1\leq i \leq n$,
let $Q_i[\Box B]$ be
the subproof ending with $M_i$;
'$[\Box B]$' of $Q_i[\Box B]$
denotes the occurrence of the diagonal
formula $\Box B$
in the lower hypersequent of $M_i$.
$Q_i[\Gamma]$ denotes
a proof of the same hypersequent
as $Q_i[\Box B]$
except that $\Box B$ is replaced by
a 
multiset
 of formulas $\Gamma$.

Now, we set $R[\Box B]$ to be
the proof-fragment
starting with the lower hypersequent of
each $M_i$ and
ending with
$\nabla \Lambda\to | \Box B 
\Rightarrow^{\tau_1} B$. 
'$[\Box B]$' of $R[\Box B]$ indicates
the occurrence of the diagonal
formula $\Box B$
in the sequent
$\Box B 
\Rightarrow^{\tau_1} B$. 
$R[\Gamma]$ denotes
a proof-fragment ending with
the same sequent as $R[\Box B]$
except that $\Box B$ is replaced by
a 
multiset
 of formulas $\Gamma$.


\begin{lemma}\label{basic}
For any $1\leq i \leq n$,
the subproof $Q_i[ 
\Box C_i]$ is obtained.
\end{lemma}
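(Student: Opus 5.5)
The plan is to build $Q_i[\Box C_i]$ by a transformation of $Q_i[\Box B]$ that follows the ancestors of the indicated occurrence of $\Box B$ upward, in the same top-down and permutation style as Lemma \ref{standard}. Let $\sigma$ be the $\Rightarrow$-turnstile of the sequent $\Sigma_i \Rightarrow \Psi_i$ in the lower hypersequent of $L_i$. As explained before the lemma, $M_i$ is the rule that turns $\sigma$ into $\to$. Hence the $\sigma$-path from $L_i$ down to $M_i$ ends in the diagonal sequent $\Box C_i \Rightarrow C_i$. Likewise, the $\to$-sequent $\Box B \to$ created by $L_i$ is carried down, possibly through merges, to the sequent $\Box B, X_i \to \Xi_i$ of $M_i$.

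The steps, in order, are as follows.

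First, I would apply Lemma \ref{standard} and the permutation technique of its proof to the $\sigma$-path. The goal is to move every application that does not act on $\sigma$ below $M_i$, and to move $L_i$ together with the other applications of $K,4,K^\blacksquare,4^\blacksquare$ on this path to its bottom, directly above $M_i$. After this step, the premise of the last block of such rules has the form ${\cal J}'_i \mid B, \nabla^*\Psi', (\Box C_i)^k \Rightarrow C_i^l$.

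Second, I would replace $L_i$, an application of $K$ with auxiliary formula $B$, by an application of $4$ with principal formula $\Box C_i$. The premise of this $4$ must contain $\Box C_i$ in the antecedent of the $\sigma$-sequent. There are two sources for it. If $k\geq 1$, the diagonal occurrence of $\Box C_i$ is already there. Otherwise, I would obtain it by $iw$ and then contract at the end with $ic$, exactly as in the duplication step for $merge$ in the proof of Lemma \ref{standard}.

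Third, I would propagate the replacement of $\Box B$ by $\Box C_i$ through all the $merge$, $split$ and propositional steps below. None of these rules inspects the formula, so $\Box B$ is only a side formula there and the substitution is harmless.

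The main obstacle is the second step. Once $\Box B$ is no longer introduced by $K$, the formula $B$ remains in the antecedent of the $\sigma$-sequent and would reach the diagonal sequent as $B, \Box C_i \Rightarrow C_i$, which blocks $M_i$. My first attempt would be a secondary induction on the height of the subproof above $L_i$. The aim of that induction is to show that the occurrence of $B$ in the premise of $L_i$ can be replaced by $\Box C_i$ inside the $\sigma$-path. This uses that, along this path, $\sigma$ is introduced only by the $\Rightarrow$-rule, as guaranteed by condition 3 of Lemma \ref{standard}. I would also use the $4$-rule to reintroduce $\Box C_i$ on the $\to$-side wherever $B$ was consumed. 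If the induction fails, the fallback is to allow a cut on $B$ of lower rank. Such a cut would be acceptable at this stage, since the lemma is used inside the cut-elimination induction.
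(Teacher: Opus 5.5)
Your proposal has a genuine gap, and it comes from misreading what the lemma asks for. $Q_i[\Box C_i]$ is \emph{any} proof of the lower hypersequent of $M_i$ with the indicated $\Box B$ replaced by $\Box C_i$. It need not be a transformed copy of $Q_i[\Box B]$. That hypersequent is just a weakening of an identity. From $C_i \to C_i$ get $C_i \Rightarrow C_i$ by the $\Rightarrow$-rule, then $\Box C_i \to | \Rightarrow C_i$ by $K$, then $\Box C_i \to | \Box C_i \Rightarrow C_i$ by $iw$. Then apply $\Box:r$ to get $\Box C_i \to | \to \Box C_i$, or $\blacksquare:r_1$ to get $\Box C_i \to | \to \blacksquare C_i | \Rightarrow$. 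Finally, $iw$ and $ew$ give ${\cal J}_i | \Box C_i, X_i \to \Xi_i | \to \Box C_i$ (resp.\ ${\cal J}_i | \Box C_i, X_i \to \Xi_i | \to \blacksquare C_i | \Rightarrow$). That is the paper's entire proof: no permutation, no induction, no cut.

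The route you chose does not go through. In your second step, replacing $L_i$ by an application of $4$ on $\Box C_i$ leaves the auxiliary formula $B$ of $L_i$ stranded in the antecedent of the $\sigma$-sequent, as you note yourself. The ``secondary induction'' meant to trade $B$ for $\Box C_i$ has no basis. $B$ may be used essentially above $L_i$ (e.g.\ it may stem from an initial sequent $B \to B$), and nothing lets $\Box C_i$ do the work of $B$, so ${\cal K}_i | \Box C_i, \Sigma_i \Rightarrow \Psi_i$ need not be derivable at all. The fallback cut on $B$ lacks a left premise. The only available proof of a sequent of the form $\ldots, \Box C_i \Rightarrow B$ is the fragment $R$, and $R[1,\ldots,n]$ is built precisely from the proofs $Q_i[\Box C_i]$ that this lemma supplies, so using it here is circular. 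That cut on $B$, with $R$ as left premise, is the mechanism of Lemma \ref{crucial}, which removes the boxes $\Box C_i$ afterwards. The present lemma is only the trivial base step that makes that process possible.
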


\noindent {\bf Proof.}
We can construct proofs of the hypersequent obtained from
the lower hypersequent of $M_i$ by replacing $\Box B$ with $
\Box C_i
$
in the following way.

\vspace{1em}
$\bullet$ When $M_i$ is
$\blacksquare:r_1$:

\small
\begin{center}
$
\infer[M_i\hspace{1.5em}\rhd
\hspace{1.5em}]{{\cal J}_i| \Box B,
X_i
\rightarrow \Xi_i | \to \blacksquare
C_i | \Rightarrow }{
{\cal J}_i| \Box B, X_i \rightarrow \Xi_i| \Box C_i \Rightarrow C_i
}
\infer=[iw,ew]{{\cal J}_i| 
\Box C_i,
X_i
\rightarrow \Xi_i | \to \blacksquare
C_i | \Rightarrow }{
\infer[\blacksquare:r_1]{
\Box C_i \to |
\to \blacksquare
C_i | \Rightarrow }{\infer[iw]{
\Box C_i \to | \Box C_i \Rightarrow C_i}{
\infer[K]{\Box C_i\to | \Rightarrow
C_i}{C_i \Rightarrow C_i}
}
}}
$
\end{center}

\vspace{1em}
\normalsize
$\bullet$ When $M_j$ is
$\Box:r_1$:

\small
\begin{center}
$
\infer[M_j\hspace{1.5em}\rhd
\hspace{1.5em}]{{\cal J}_j| \Box B,
X_j\rightarrow \Xi_j| \to \Box C_j}{
{\cal J}_j| \Box B, X_j \rightarrow \Xi_j| \Box C_j \Rightarrow C_j
}
\infer=[iw,ew]{{\cal J}_i| 
\Box C_i, X_i
\rightarrow \Xi_i | \to \Box
C_i}{
\infer[\Box:r_1]{
\Box C_i \to |
\to \Box
C_i}{\infer[iw]{
\Box C_i \to | \Box C_i \Rightarrow C_i}{
\infer[K]{\Box C_i\to | \Rightarrow
C_i}{C_i \Rightarrow C_i}
}
}}
$
\end{center}

\normalsize
Thus we obtain the proofs
$Q_i[
\Box C_i]$
($1\leq i \leq n$).
\QED

\begin{lemma}
$P$ can be so transformed
that the fragment $R[
\overrightarrow{(\Box C_i)}_{1\leq i \leq n}]$
is contained.
\end{lemma}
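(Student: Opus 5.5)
We plan to obtain the required fragment by grafting. For each $i$ we replace the subproof $Q_i[\Box B]$ ending with $M_i$ by the proof $Q_i[\Box C_i]$ given by Lemma \ref{basic}. We then propagate the substitution downward through $R$ and check that every step of $R[\Box B]$ remains a correct inference. Concretely, we trace the occurrence of the diagonal formula $\Box B$ in the lower hypersequent of $M_i$ downward until it reaches the $\Box B$ of $\Box B \Rightarrow^{\tau_1} B$. By the analysis preceding Lemma \ref{basic}, this $\Box B$ sits in the antecedent of a $\to$-sequent below $M_i$. It is carried unchanged until $N_i$ turns that sequent into a $\Rightarrow^{\tau_1}$-sequent, and it then stays in the $\tau_1$-path down to the upper sequent of $I_0$. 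Along this trace, $\Box B$ is only a side (context) formula. The only exception is where it is merged with other traced occurrences, namely by $ic$ (two copies contracted) or by $merge$ and $\wedge:r$ (contexts combined).

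The key steps, in order, are as follows.

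First, we check rule by rule that no rule of $R$ has a traced occurrence of $\Box B$ as principal formula. The rules $K$ and $4$ introduce $\Box B$ only above $M_i$, and those belong to $Q_i$. The cut-formula case does not arise, since we work on the fragment coming from the given $L_i$'s. Also, by the remark after the $N_i$ diagram, no $\blacksquare:l$ or $\blacksquare:r_2$ occurs below $N_i$ in these paths, so no standard-form side condition on the hypersequent's shape is violated. Hence replacing each traced $\Box B$ by $\Box C_i$ turns every inference of $R$ into an inference of the same rule.

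Second, we handle $ic$ acting on two traced occurrences that originate from different $M_i, M_j$. After substitution these become $\Box C_i,\Box C_j$, which can no longer be contracted, so we simply delete that $ic$ and keep both formulas. This is exactly why the endpoint is the multiset $\overrightarrow{(\Box C_i)}_{1\leq i\leq n}$ rather than a single formula. Similarly, $iw$ on a traced occurrence cannot occur by the earlier assumption that $\Box B$ is not introduced by $iw$ or $ew$. Every other rule ($merge$, $\wedge:r$, propositional rules, $split$, $\Rightarrow$, $\to$, $\bot$, $ew$) just carries the multiset along.

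The resulting proof ends in $\nabla\Lambda\to|\overrightarrow{(\Box C_i)}\Rightarrow^{\tau_1} B$, and its upper part contains $R[\overrightarrow{(\Box C_i)}]$, as required. The main obstacle is bookkeeping when one $M_i$'s subtree is duplicated, or when one occurrence of $\Box B$ descends from several $M_i$ through contractions or $merge$. Then the replaced multiset must be indexed by traces rather than by the original $L_i$. We will handle this by first labeling each ancestor occurrence of the diagonal $\Box B$ from top to bottom, as was done with $\tau$-labels in Lemma \ref{standard}. We then fix the $M_i$ as the lowest $\Box:r$ or $\blacksquare:r_1$ applications on each labeled trace, and only then perform the substitution, removing exactly those $ic$'s whose two principal occurrences carry different labels.
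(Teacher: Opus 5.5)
Your argument is correct and has the same skeleton as the paper's: graft the proofs from Lemma \ref{basic} at the $M_i$ and replay $R$. The bookkeeping, however, is genuinely different.

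\textbf{The paper's route.} It first uses $iw$ to turn each $Q_i[\Box C_i]$ into $Q_i[\overrightarrow{(\Box C_p)}_{1\leq p\leq n}]$. Every leaf of $R$ then carries the \emph{same} multiset $\Gamma$ in place of $\Box B$. It replays $R$ with $\Gamma$ substituted uniformly for every traced $\Box B$, each $ic$ on $\Box B$ becoming a block of $ic$'s on $\Gamma$. Exactly one copy of $\Gamma$ then reaches $\nabla\Lambda\to|\Gamma\Rightarrow B$, with no provenance tracking.

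\textbf{Your route.} You substitute origin by origin, skip the weakenings, and delete the contractions between occurrences of different origin, so the multiset accumulates by itself. This is more economical. It also makes explicit a check the paper leaves implicit: a traced $\Box B$ is never principal in $R$ except in $ic$, and never lies in the context-free principal sequent of $\Box:r$, $\blacksquare:r_1$, $\blacksquare:l$ or $\blacksquare:r_2$. That same check is what licenses the paper's one-line ``simulate $R$''.

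\textbf{What the uniform version buys.} It is the general principle ``from $Q_i[\Gamma]$ for all $i$, obtain $R[\Gamma]$''. That principle is reused in the proof of the lemma following Lemma \ref{crucial}. There the leaves come from Lemma \ref{crucial} and carry only a common multiset, with no per-origin structure. Your label-wise substitution works here only because each leaf contributes its own single formula, so it could not replace the uniform simulation there. The uniform version also needs no labelling of ancestors.

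\textbf{A caution about your last paragraph.} Do not re-choose $M_i$ as the lowest $\Box:r$ or $\blacksquare:r_1$ on a trace. $M_i$ is already fixed as the application converting the descendant of $\Sigma_i\Rightarrow\Psi_i$, and $C_i$ is its formula. A different choice changes the $C_i$, so you would prove the statement for a different multiset. It would also break Lemma \ref{crucial}, which needs the $\Box C_i$ added to $\Sigma_i\Rightarrow\Psi_i$ to land in the premise $\Box C_i\Rightarrow C_i$ of that particular $M_i$. Your substitution works for the given $M_i$, so just keep it. The duplication you worry about does not arise either, since in a fixed proof tree no subtree is shared.
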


\noindent {\bf Proof.}
By the latest lemma,
for each $1\leq i \leq n$,
the proof $Q_i[
\Box C_i]$
is obtained.
Further we add formulas by $iw$
to obtain
$Q_i[
\overrightarrow{(\Box C_p)}_{1\leq p \leq n}]$
($1\leq i \leq n$).

Then,
using the subproofs $Q_i[
\overrightarrow{(\Box C_p)}_{1\leq p \leq n}]$,
we simulate the proof-fragment
$R[\Box B]$, where the  multiset  $\overrightarrow{(\Box C_p)}_{1\leq p \leq n}$ is added
to the related sequents
so that it ends with the following.

\begin{center}
$
\nabla \Lambda \to | 
\overrightarrow{(\Box C_p)}_{1\leq p \leq n}
\Rightarrow B
$
\end{center}

Thus, the desired proof-fragment
$R[
\overrightarrow{(\Box C_p)}_{1\leq p \leq n}]$ is obtained.
\QED

\vspace{1em}
Then our main task is to reduce
$\overrightarrow{(\Box C_i)}_{1\leq i \leq n}$
of
$R[
\overrightarrow{(\Box C_i)}_{1\leq i \leq n}]$
one by one
and thus to obtain
$R[
\lambda]$.

For each $1\leq i \leq n$,
let $\alpha_i$ denote
$i$ or $\lambda$ (an empty space).
We set $\Box C_{\lambda}=\lambda$.
Further, for simplicity,
we write
$Q_i[\alpha_1, \alpha_2,
\ldots, \alpha_n]$
to signify
the proof
$Q_i[
\overrightarrow{(\Box C_{\alpha_p})}_{1\leq p \leq n}]$.

Also,
we write
$R[\alpha_1, \alpha_2,
\ldots, \alpha_n]$
to signify
the proof-fragment
$R[
\overrightarrow{(\Box C_{\alpha_p})}_{1\leq p\leq n}]$.

At this moment,
$R[1,2, \ldots, n]$ is available
by the latest lemma.
We are going to reduce $n$
-many numbers in it one by one
to obtain $R[\lambda]$.

\begin{lemma} \label{crucial}
Let $1\leq i \leq n$ be arbitrary
and $(\alpha_1, 
\ldots, \alpha_n)$
be arbitrary sequence
of this form.
Suppose that $P$ is so transformed
that
$R[\alpha_1, \ldots,
\alpha_{i-1}, i,
\alpha_{i+1},
\ldots, \alpha_n]$ is contained.
Then,
$Q_i[ \alpha_1, \ldots,
\alpha_{i-1}, \lambda,
\alpha_{i+1},
\ldots, \alpha_n]$ is obtained.
\end{lemma}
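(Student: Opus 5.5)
We plan to follow the Kashima/Kushida-style argument for the diagonal formula of {\sf GL}, adapted to the three standard forms. The idea is to use the available fragment $R[\alpha_1,\ldots,\alpha_{i-1},i,\alpha_{i+1},\ldots,\alpha_n]$ to produce a proof of $\nabla\Lambda\to|\Box C_i\Rightarrow B$ (with the remaining $\Box C_{\alpha_p}$). From it we first obtain the formula $B$ and then $\Box B$ in the context of the upper hypersequent of $M_i$. Then we replace the occurrence of $\Box C_i$ there by rebuilding the subproof $Q_i$.

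\textbf{Step 1: extract the end of $R$.} The fragment $R[\ldots,i,\ldots]$ ends with
\[
\nabla\Lambda\to|\overrightarrow{(\Box C_{\alpha_p})}_{p\neq i},\Box C_i\Rightarrow B .
\]
We look at the original proof $Q_i[\Box B]$. Its upper hypersequent is ${\cal J}_i|\Box B,X_i\to\Xi_i|\Box C_i\Rightarrow C_i$, and the occurrence $\Box B$ in it stems from $L_i$, an application of $K$ with upper sequent $B,\Sigma_i\Rightarrow\Psi_i$. Our plan is to re-run $Q_i$ with $\Box B$ replaced by $\overrightarrow{(\Box C_{\alpha_p})}_{p\neq i}$. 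Wherever the original proof uses $B$, coming from $L_i$ (or from the other $L_j$ inside $Q_i$), we instead supply $B$ derived from the end sequent of $R$.

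\textbf{Step 2: trade $L_i$.} At $L_i$ we need the step from $B,\Sigma_i\Rightarrow\Psi_i$ to $\Box B\to|\Sigma_i\Rightarrow\Psi_i$. We replace it by a cut, inside the $\Rightarrow$-sequent, of the upper sequent of $L_i$ against $\nabla\Lambda\to|\ldots,\Box C_i\Rightarrow B$. The cut is on $B$ in $\Rightarrow$-sequents, so merging is legitimate. The result is
\[
\nabla\Lambda\to|\overrightarrow{(\Box C_{\alpha_p})}_{p\neq i},\Box C_i,\Sigma_i\Rightarrow\Psi_i .
\]
We then move the boxed formulas $\Box C_{\alpha_p}$ and $\Box C_i$ back to the $\to$-side using $4$ (for $\Box$). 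This is exactly why $4$ and $4^\blacksquare$ are present. The $\nabla\Lambda\to$ components stay as side components; since they are already $\to$-sequents of boxed formulas, they can be merged at the end.

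\textbf{Step 3: eliminate the $\Box C_i$ introduced in Step 2.} The $\Box C_i\to$ now sits in the context that flows down to $M_i$. There $\Box C_i$ also occurs as the diagonal formula of $M_i$, so at $M_i$ we obtain a $\to$-component $\Box C_i\to$ alongside $\Box C_i\Rightarrow C_i$. Absorbing this is the crux of the argument. We intend to handle it by the standard form condition: in $Q_i$ the turnstile of $\Sigma_i\Rightarrow\Psi_i$ is the one closed by $M_i$. So we apply $4$ in reverse order, converting the component $\Box C_i\to$ back into $\Box C_i$ on the left of the $\Rightarrow$-sequent that $M_i$ closes. That sequent becomes $\Box C_i,\Box C_i\Rightarrow C_i$, which we contract by $ic$ before applying $M_i$. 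Concretely, instead of introducing $\Box C_i\to$ by $4$ at Step 2, we keep $\Box C_i$ inside the $\Rightarrow$-sequent along the $\tau$-path up to $M_i$.

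\textbf{Step 4: reassemble.} The propagation along the path from $L_i$ to $M_i$ is done by the permutation/induction technique of Lemma~\ref{standard}. The remaining formulas $\overrightarrow{(\Box C_{\alpha_p})}_{p\neq i}$ are moved to $\to$ by $4$, and $\nabla\Lambda\to$ is attached by $ew$/$merge$ where needed. The outcome is $Q_i[\alpha_1,\ldots,\alpha_{i-1},\lambda,\alpha_{i+1},\ldots,\alpha_n]$. The cut introduced in Step 2 is on $B$, whose complexity is lower than that of $\Box B$, so it is acceptable for the main induction of Theorem~\ref{cut-elimination}.

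The main obstacle is Step 3: showing that the occurrence of $\Box C_i$ can be kept on the $\Rightarrow$-side exactly until $M_i$. This requires that no other $\Box:r$ or $\blacksquare:r_1$ intervenes between $L_i$ and $M_i$ on that turnstile's path. We would try to establish this from the choice of $N_i$ as the lowest $\Rightarrow$-rule and from Lemma~\ref{standard}.
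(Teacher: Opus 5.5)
Your final construction is exactly the paper's proof, once you settle the detour in Step 3. You cut the end sequent of $R[\alpha_1,\ldots,\alpha_{i-1},i,\alpha_{i+1},\ldots,\alpha_n]$ against the upper hypersequent ${\cal K}_i|B,\Sigma_i\Rightarrow\Psi_i$ of $L_i$. You then push $\overrightarrow{(\Box C_{\alpha_p})}_{p\neq i}$ out by $4$ and $merge$ while keeping $\Box C_i$ in the $\Rightarrow$-sequent. Finally you re-run $Q_i$ and contract $\Box C_i,\Box C_i$ by $ic$ just above $M_i$.

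What you call the main obstacle is immediate from the definition of $M_i$ as the rule that turns the turnstile of $\Sigma_i\Rightarrow\Psi_i$ into $\to$; the $\to$-rule cannot do this because $\Box B\to$ coexists. So no appeal to the permutation technique of Lemma~\ref{standard} is needed: the paper simply simulates $Q_i$ with the modified formulas.
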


\noindent {\bf Proof.}
Using $R[\alpha_1, \ldots,
\alpha_{i-1}, i,
\alpha_{i+1},
\ldots, \alpha_n]$,
we construct the following proof.

\small

\begin{center}
\shortstack{
\hspace{6em}
\infer=[
4, 
merge
]
{
\nabla \Lambda \to |{\cal K}_i|
\overrightarrow{(\Box
C_{\alpha_p})}_{i\not =p}
\to
| \Box C_i, \Sigma_i \Rightarrow
\Psi_i
}{
\infer[cut]{
\nabla \Lambda \to |{\cal K}_i|
\Box C_i,
\overrightarrow{(\Box
C_{\alpha_p})}_{i\not = p
},
\Sigma_i \Rightarrow \Psi_i
}{
\shortstack{
\hspace{-6em}
$R[\alpha_1, \ldots,
\alpha_{i-1}, i,
\alpha_{i+1},
\ldots, \alpha_n
]$
\hspace{.5em}
\deduc
\vspace{1em}\\
$
\nabla \Lambda \to |
\Box C_i,
\overrightarrow{(\Box
C_{\alpha_p})}_{i\not = p }
\Rightarrow B$}
\hspace{1.5em}
\shortstack{
\deduc
\vspace{1em}
\\
${\cal K}_i|
B,
\Sigma_i \Rightarrow \Psi_i
$
}
}}
\\
$\vdots$
\vspace{1em}
\\
\hspace{-15.6em}
$Q_i[\alpha_1, \ldots,
\alpha_{i-1}, \lambda,
\alpha_{i+1},
\ldots, \alpha_n]
$
\hspace{2em}$\vdots$
\vspace{1em}
\\
$\vdots$
\vspace{1em}
\\
\vspace{1em}
$
\infer[M_i]{
\nabla \Lambda \to |
{\cal J}_i| 
\overrightarrow{(\Box
C_{\alpha_p})}_{i\not = p},
X_i \rightarrow \Xi_i | \to
\blacksquare C_i | \Rightarrow }{
\infer[ic]{
\nabla \Lambda \to |{\cal J}_i| 
\overrightarrow{(\Box
C_{\alpha_p})}_{i \not = p},
X_i \rightarrow \Xi_i| \Box C_i
\Rightarrow C_i
}{
\nabla \Lambda \to |{\cal J}_i| 
\overrightarrow{(\Box
C_{\alpha_p})}_{i \not = p},
X_i \rightarrow \Xi_i| \Box C_i, \Box C_i
\Rightarrow C_i}
}
$
}
\end{center}

\normalsize
Here, below serial applications of
$
4
$
and $merge$
,
the subproof
$Q_i$ is simulated
where $\Box B$
is replaced with
$
\overrightarrow{(\Box
C_{\alpha_p})}_{i \not = p}
$,
and $ \Box C_i$ is added to the sequent $\Sigma_i \Rightarrow
\Psi_i$
and $\nabla \Lambda \to |$ is added.
The proof obtained is 
surely

\begin{center}
$Q_i[\alpha_1, \ldots,
\alpha_{i-1}, \lambda,
\alpha_{i+1},
\ldots, \alpha_n]
$.
\end{center}

\normalsize
Note that instead several applications of $cut$ with $B$ the cut formula
can be added  in the obtained proof.
\QED

\begin{lemma}
Let $1\leq q \leq n$ be arbitrary 
and let $(h_1, h_2,
\ldots, h_q)$
be a sequence of arbitrary $q$-many numbers from
$\{ 1, 2,
\ldots, n\}$.
Suppose that $P$ contains
$R[h_1, h_2,
\ldots, h_q]$.
Then, for any $ h_r $
of $(h_1, h_2,
\ldots, h_q)$,
$P$ is transformed
so that
$R[ h_1, h_2,
\ldots, h_{r-1}, \lambda, h_{r+1}, \ldots, h_q]$ is contained.
\end{lemma}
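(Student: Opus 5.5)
The plan is to rebuild the fragment $R$ with the index $h_r$ deleted. Write $R[h_1,\ldots,h_q]$ for the fragment in which the multiset $\overrightarrow{(\Box C_{h_s})}_{1\leq s\leq q}$ replaces the diagonal formula $\Box B$. The leaves of this fragment are lower hypersequents of the $M_j$ ($1\leq j\leq n$), each furnished by a proof $Q_j[h_1,\ldots,h_q]$. The only place where $\Box C_{h_r}$ can fail to be present for free is the leaf belonging to $M_{h_r}$ itself. Here the $iw$ that adds $\Box C_{h_r}$ must be removed, and for that we need a genuine proof $Q_{h_r}[h_1,\ldots,h_{r-1},\lambda,h_{r+1},\ldots,h_q]$.

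First, I apply Lemma \ref{crucial} with $i=h_r$ and $\alpha_p=h_s$ whenever $p=h_s$ for some $s\neq r$, and $\alpha_p=\lambda$ otherwise. The hypothesis of Lemma \ref{crucial} asks that $P$ contain $R[\alpha_1,\ldots,h_r,\ldots,\alpha_n]$; this is exactly the given $R[h_1,\ldots,h_q]$ once the $h_s$ are read as a multiset. If some index is repeated among the $h_s$, I first use $ic$ to contract the duplicates, or else note that Lemma \ref{crucial} does not depend on multiplicities. Lemma \ref{crucial} then yields $Q_{h_r}[h_1,\ldots,h_{r-1},\lambda,h_{r+1},\ldots,h_q]$, a proof of the lower hypersequent of $M_{h_r}$ in which the context $\nabla\Lambda\to|$ is added and $\Box B$ is replaced by the remaining $\Box C_{h_s}$ ($s\neq r$).

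Second, for every other index $j\neq h_r$, I obtain $Q_j[h_1,\ldots,h_{r-1},\lambda,h_{r+1},\ldots,h_q]$ from Lemma \ref{basic} by adding the needed $\Box C_{h_s}$ with $iw$. This works exactly as in the construction of $R[1,\ldots,n]$, which started from $Q_j[\Box C_j]$ and weakened. The case where $j$ itself occurs among the $h_s$ with $s\neq r$ needs no special treatment, since then $\Box C_j$ is still present.

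Third, I re-simulate the fragment $R[\Box B]$ from these leaves, carrying the multiset $\overrightarrow{(\Box C_{h_s})}_{s\neq r}$ along the ancestors of the diagonal formula in place of $\Box B$. The steps of $R$ are those between the $M_j$ and $I_0$: propositional rules, structural rules, $merge$, the $K,4,K^\blacksquare,4^\blacksquare$ rules, and the $\Rightarrow$-rule at the $N_j$. None of them has $\Box B$ as a principal formula on the way down, so each step goes through with the multiset as passive context. The extra component $\nabla\Lambda\to|$ introduced by Lemma \ref{crucial} is a problem, because it appears in the hypersequent of only one leaf; I handle it by $ew$ at the other leaves, or by merging it into the final $\nabla\Lambda\to|$ with $ic$ and $merge$. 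The result ends with $\nabla\Lambda\to|\overrightarrow{(\Box C_{h_s})}_{s\neq r}\Rightarrow B$, which is $R[h_1,\ldots,h_{r-1},\lambda,h_{r+1},\ldots,h_q]$.

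The main obstacle is making sure the rebuilt fragment really is $R$ with one index removed. This requires, first, that the proof from Lemma \ref{crucial} fits into the leaf position of $M_{h_r}$, including the extra $\nabla\Lambda\to$ component and the use of $merge$ in that lemma. It also requires that no rule in $R$ depends on the particular occurrence of $\Box C_{h_r}$. I would check this by an induction along the $\tau_1$-paths, using the standard form of Lemma \ref{standard}: it guarantees that $\Rightarrow^{\tau_1}$ is introduced only by the $\Rightarrow$-rule and that no $\blacksquare:l$ or $\blacksquare:r_2$ lies below the $N_j$. Both facts are needed so that the passive multiset never blocks a rule application.
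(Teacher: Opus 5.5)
Your second step fails for indices $j\notin\{h_1,\ldots,h_q\}$. Such indices exist whenever $\{h_1,\ldots,h_q\}\neq\{1,\ldots,n\}$, i.e.\ at every stage after the first deletion. Let $T$ be the multiset of $h_s$ with $s\neq r$. Lemma~\ref{basic} gives only $Q_j[\Box C_j]$, and $iw$ can add formulas but never remove $\Box C_j$. So weakening produces $Q_j$ for $T$ together with $j$, not the required $Q_j[h_1,\ldots,h_{r-1},\lambda,h_{r+1},\ldots,h_q]$, in which $\Box C_j$ must be absent. Weakening sufficed in the construction of $R[1,\ldots,n]$ only because every index is present there.

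You also cannot start from the leaf $Q_j[h_1,\ldots,h_q]$ of the given fragment. Removing $\Box C_{h_r}$ from it is exactly what is at issue, and Lemma~\ref{crucial} can remove from $Q_j$ only $\Box C_j$. The copy pushed into $\Sigma_j\Rightarrow\Psi_j$ is absorbed, after $ic$, as the diagonal formula of $M_j$; a copy of $\Box C_{h_r}$ would be stranded in the premise of $M_j$. So your claim that the leaf of $M_{h_r}$ is the only nontrivial one is false.

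The paper fills this gap by applying Lemma~\ref{crucial}, for \emph{every} $j$ not among the $h_s$ with $s\neq r$, to the fragment $R[h_1,\ldots,h_{r-1},j,h_{r+1},\ldots,h_q]$. This yields $Q_j[h_1,\ldots,h_{r-1},\lambda,h_{r+1},\ldots,h_q]$. For $j\neq h_r$ that is a different fragment from the one you are given. The hypothesis must therefore be read as ``$R$ is available for all index sequences of length $q$'': the lemma is the inductive step of a downward induction on $q$, from $R[1,\ldots,n]$ to $R[\lambda]$ (Corollary~\ref{crucial_corollary}). The paper says this explicitly (``these fragments are supposed to be obtained by the hypothesis''). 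With only a single fragment, the deletion cannot be iterated. After removing $a$, removing $b$ needs $Q_a$ for the index set $\{1,\ldots,n\}$ minus $\{a,b\}$, hence the fragment for all indices except $b$, which you never built.

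A smaller point: the $\Rightarrow$-rule has no side hypersequent. So the extra component $\nabla\Lambda\to|$ produced by Lemma~\ref{crucial} cannot simply be carried past the $N_j$, or added elsewhere by $ew$. It has to be merged into the $\to$-sequent before $N_j$ and extracted at the end by $4$, $4^{\blacksquare}$, $merge$ and $ic$, as in Case~2.2.
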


\noindent {\bf Proof.}
It suffices to show that
for each $1\leq i\leq n$,
the subproof

\begin{center}
$Q_i[ h_1, h_2,
\ldots, h_{r-1}, \lambda, h_{r+1}, \ldots, h_q]$
\end{center}

\noindent
is obtained.
Because: from such subproofs $Q_i$,
we can simulate the proof-fragment $R$,
so
we obtain the desired proof.

Take any $Q_i$.
If $i$ of $Q_i$ is equal to one of
$( h_1, h_2,
\ldots, h_{r-1}, \lambda, h_{r+1}, \ldots, h_q)$,
then, by Lemma \ref{basic},
the subproof
$Q_i[h_1, h_2,
\ldots, h_{r-1}, \lambda, h_{r+1}, \ldots, h_q]$
is obtained.
So we suppose that
$i$ of $Q_i$
is 
different from any of

\begin{center}
$(h_1, h_2,
\ldots, h_{r-1}, \lambda, h_{r+1}, \ldots, h_q)$.
\end{center}

Those subproofs
are obtained
by using the following
proof-fragments 
in terms of Lemma \ref{crucial}.

\begin{center}
$R[ h_1, h_2,
\ldots, h_{r-1}, 1, h_{r+1}, \ldots, h_q]$

$R[ h_1, h_2,
\ldots, h_{r-1}, 2, h_{r+1}, \ldots, h_q]$

$\vdots$

$R[ h_1, h_2,
\ldots, h_{r-1}, h_r, h_{r+1}, \ldots, h_q]$

$\vdots$

$R[ h_1, h_2,
\ldots, h_{r-1}, n, h_{r+1}, \ldots, h_q]$

\end{center}

These fragments are, however,
supposed to be obtained by the hypothesis of
the current lemma.
\QED

\begin{corollary} \label{crucial_corollary}
$P$ can be so transformed
that $R[\lambda]$ is contained.
\end{corollary}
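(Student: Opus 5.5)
The corollary should follow by iterating the preceding lemma, starting from the fragment supplied by the lemma before Lemma \ref{crucial}. By that lemma, $P$ can be transformed so that it contains $R[1,2,\ldots,n]$. Each application of the preceding lemma replaces one entry of the index sequence by $\lambda$. So the plan is to apply it $n$ times, deleting the entries one at a time, until we reach $R[\lambda,\ldots,\lambda]$. Since $\Box C_\lambda=\lambda$, this is exactly $R[\lambda]$, whose end hypersequent is $\nabla\Lambda\to|\Rightarrow^{\tau_1} B$ and in which the diagonal formula has disappeared.

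The induction will be on $q$, the number of non-$\lambda$ entries remaining. The hypothesis of the preceding lemma is stated for \emph{every} sequence $(h_1,\ldots,h_q)$ of length $q$. Its proof, via Lemma \ref{crucial}, needs the fragments $R[h_1,\ldots,h_{r-1},k,h_{r+1},\ldots,h_q]$ for every $k\in\{1,\ldots,n\}$, i.e.\ all one-slot variants of the current sequence. I would therefore state the inductive claim uniformly:
\begin{center}
(\dag) for every $q$ with $0\le q\le n$, and for every sequence of $q$ indices from $\{1,\ldots,n\}$, $P$ can be transformed to contain the corresponding $R$.
\end{center}
The case $q=0$ is the corollary itself.

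The base case is $q=n$. This requires $R[h_1,\ldots,h_n]$ for arbitrary, possibly repeated, indices, not only for $(1,\ldots,n)$. I would obtain these exactly as in the lemma preceding Lemma \ref{crucial}. Lemma \ref{basic} gives $Q_i[\Box C_i]$, and $iw$ then adds the required multiset of $\Box C_{h_p}$. Lemma \ref{basic} applies to any prescribed multiset of $\Box C$'s as long as it contains $\Box C_i$; if it does not, we instead use the original $Q_i$ with $\Box B$ weakened away. Simulating $R$ then yields the desired fragment. Repetitions are harmless because $ic$ contracts duplicates.

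For the inductive step, assume (\dag) holds at level $q$. Given any sequence of length $q-1$, write it as a length-$q$ sequence with one slot set to $\lambda$. The preceding lemma then applies, because all the one-slot variants it needs are level-$q$ sequences, which exist by the induction hypothesis.

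The main obstacle is termination. Lemma \ref{crucial} builds $Q_i$ by inserting a cut on $B$ into a copy of $R$, and at each level we must produce fragments for all sequences of that length simultaneously. Two points therefore need checking.

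First, the number of levels is finite: we descend from $q=n$ to $q=0$, so this is a finite nested construction of depth $n$, not an unbounded regress.

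Second, we must say precisely what ``$P$ is transformed'' means. The new cuts on $B$ introduced along the way have lower complexity than the eventual cut formula $\Box B$ under consideration. I would therefore read the corollary as producing a proof of the same end hypersequent in which the diagonal formula $\Box B$ of $I_0$ no longer occurs, with the additional cuts on $B$ left for the main induction in \S 3.3 to handle.
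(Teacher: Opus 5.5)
You follow the paper's route: start from $R[1,\ldots,n]$ and apply the preceding lemma repeatedly until every index is $\lambda$. You are also right that the induction hypothesis must be uniform over all index collections of the current size. The paper's proof of that lemma silently uses every one-slot variant $R[h_1,\ldots,h_{r-1},i,h_{r+1},\ldots,h_q]$, not just the fragment named in its hypothesis.

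Your base case, however, contains a step that fails. Because you allow repeated indices, at level $q=n$ you must produce fragments such as $R[1,1,\ldots,1]$, whose multiset omits some $\Box C_i$. For these you propose to get $Q_i$ by ``using the original $Q_i$ with $\Box B$ weakened away''. Weakening only adds formulas. The occurrence of $\Box B$ in $Q_i[\Box B]$ descends from the $K$-application $L_i$ (by assumption not from $iw$ or $ew$), and it cannot simply be dropped. Removing $\Box B$ from $Q_i$ when $\Box C_i$ is absent is exactly what Lemma~\ref{crucial} does, by a cut on $B$ against an $R$-fragment that contains $\Box C_i$. So fragments for sequences that omit indices must come from the descent itself; they are not available at the start. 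Indeed, if your step were valid, $Q_i[\lambda]$ would be immediate for every $i$, and the diagonal-elimination machinery would be superfluous.

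The repair is to quantify your claim $(\dagger)$ over sets of distinct indices, i.e.\ over the paper's $(\alpha_1,\ldots,\alpha_n)$ with $\alpha_p\in\{p,\lambda\}$. Write $R[S]$ for the fragment with multiset $\{\Box C_p : p\in S\}$. Level $n$ then consists of $R[1,\ldots,n]$ alone, supplied by the lemma preceding Lemma~\ref{crucial}. In the inductive step, take a target set $S$ of size $q-1$ and consider each $i$:
\begin{itemize}
\item[(i)] if $i\in S$, Lemma~\ref{basic} plus $iw$ gives $Q_i[S]$;
\item[(ii)] if $i\notin S$, Lemma~\ref{crucial} applied to $R[S\cup\{i\}]$ gives $Q_i[S]$, and $S\cup\{i\}$ is a set of size $q$, available by hypothesis.
\end{itemize}
Simulating $R$ from these $Q_i[S]$ yields $R[S]$. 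Repeated-index variants are never needed, so your remark about $ic$ becomes unnecessary. The rest of your discussion stands: finite depth $n$, $\binom{n}{q}$ fragments per level, and the extra cuts on $B$ deferred to \S 3.3.
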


Again, we should note that
the number of applications of $cut$ with $B$ the cut formula
can increase after the deletion of the diagonal formula.

\subsection{Top-down cut-elimination for {\sf GR}}

Now we start the proof of Theorem \ref{cut-elimination}.
Consider a proof $P$ such that the $cut$-rule is applied only in the last part.

\begin{center}
$
\infer{{\cal H} | {\cal I} |
\Gamma, \Pi \gg \Delta, \Theta
}{
\shortstack{\deduc \vspace{.6em}\\ $
\infer[I]{{\cal H} |\Gamma \gg \Delta, A}{}
$} &
\shortstack{\deduc \vspace{.6em}\\ $
\infer[J]{
{\cal I} |A, \Pi \gg \Theta}{}$
}
}
$
\end{center}

The turnstile of $P$ is defined to be the one of
the lower sequent of the $cut$.

We may assume that there is no application of $ew$ nor $iw$
introducing the cut formula $A$.

First, we reduce the forms of initial sequent $A\Rightarrow A$ in $P$ to:

\begin{center}
$p \rightarrow p$ \hspace{2em}
and \hspace{2em}
$\blacksquare C \to \blacksquare C$.
\end{center}

As is easily checked, the other forms $A\gg A$
can be derived from these two forms.
\footnote{
The fact that $\blacksquare C \to \blacksquare C$
is underivable from $ C \to C$ in {\sf GR}
is consistent with 
the formal definition
of Rosser provability predicate.
It is not describable only by the propositional logic and the
G\"odelean provability.
}

We can also reduce the cut formula $A$ to an atomic formula or a modal formula $\Box B$ or $\blacksquare B$.
The reduction is a method due to Buss \cite{buss1998} and valid to the formulas of the
propositional and predicate logic.
We demonstrate the method for the case when $A=B\supset C$.
Given proofs of ${\cal H} |\Gamma \gg \Delta, B\supset C $
and
${\cal I} |B\supset C, \Pi \gg \Theta$,
it is not so hard to construct
proofs of the following three hypersequents.

\begin{center}
${\cal H} | B, \Gamma \gg \Delta, C $

${\cal I} | \Pi \gg \Theta, B$

${\cal I} | C, \Pi \gg \Theta$.
\end{center}

Then, we can derive the original end hypersequent
by applying $cut$ with $B, C$ the cut formulas.
In this way, we can assume that the cut formula $A$ is not of the propositional form.
However, this method is not applicable to the case where the cut formula is a modal formula
(cf. Kushida \cite{kushida2024,
kushida2025}).
Thus, we utilize the top-down method for this case,
which was developed systematically for major modal logic in
previous papers \cite{kushida2024, kushida2025}.

Let $(L_1, \ldots, L_n)$ be the list of applications of the initial sequent or the rules
introducing the left cut formula;
for each $1\leq i\leq n$,
let $Q_i$ be the path of applications of rules consisting of
(i) $L_i$, (ii) those between $L_i$ and $I$ and (iii) $I$, in this top-down order.
Similarly,
let $(M_1, \ldots, M_m)$ be the list of applications of the initial sequent or the rules
introducing the right cut formula;
for each $1\leq j\leq m$,
let $R_j$ be the path of applications of rules
(i) $R_j$, (ii) those between $R_j$ and $I$ and (iii) $I$, in this top-down order.

When $A$ is a modal formula of the form $\Box B$,
each $L_i$ is an application of $\Box :r$; 
each $M_j$ is an application of $K$. 

When $A$ is a modal formula of the form $\blacksquare B$,
each $L_i$ is an application of $\blacksquare :r_1$,
$\blacksquare:r_2$ or the initial sequent $\blacksquare B \to \blacksquare B$;
each $M_j$ is an application of $K^{\blacksquare}$, $\blacksquare:l$
or the initial sequent $\blacksquare B \to \blacksquare B$.

When $A$ is atomic,
each $L_i$ and each $M_i$ are an application of the initial sequent.

We claim the following facts.

\vspace{1em}
{\bf Fact 1}.
If a $Q_i$ starts with $\Box:r$, $\blacksquare:r_1$ or the initial sequent
or if an $R_j$ starts with $K$, $K^{\blacksquare}$ or the initial sequent,
then we can transform $P$ so that the following hold.

{\bf Fact 1.1}. There is no application of the $\to$-rule
in the $Q_i$ nor the $R_j$.

{\bf Fact 1.2}.
When the turnstile of $P$ is $\to$,
there is no application of the $\Rightarrow$-rule
in the $Q_i$ nor the $R_j$.

{\bf Fact 1.3}.
When the turnstile of $P$ is $\Rightarrow$,
there is exactly one application of $\Rightarrow$-rule
in the $Q_i$ and the $R_j$.

{\bf Fact 2}.
If a $Q_i$ starts with $\blacksquare:r_2$
or if an $R_j$ starts with $\blacksquare:l$,
then we can transform $P$ so that the following hold.

{\bf Fact 2.1}. There is no application of the $\Rightarrow$-rule
in the $Q_i$ nor the $R_j$.

{\bf Fact 2.2}.
When the turnstile of $P$ is $\Rightarrow$,
there is no application of the $\rightarrow$-rule
in the $Q_i$ nor the $R_j$.

{\bf Fact 2.3}.
When the turnstile of $P$ is $\rightarrow$,
there is exactly one application of the rule $\rightarrow$
in the $Q_i$ and the $R_j$.

\vspace{1em}
Let us prove these facts.
Facts 1.2 and 1.3 are easily derived from Fact 1.1,
and Facts 2.2 and 2.3 are easily derived from Fact 2.1.

An occurrence of the turnstile $\Rightarrow$ in a proof $P$
is said to be a {\it main} one if it is introduced by the $\Rightarrow$-rule
and possibly by the rule $ew$;
it is said to be a {\it sub} one if it is introduced only by $ew$.

For Fact 1.1.
All these rules
($\Box:r, \blacksquare:r_1$,
the initial sequents)
have a $\to$-sequent as their lower sequent.
We consider the case when
a $Q_i$ starts with the initial sequent;
the other cases are treated similarly.
Suppose that $Q_i$
contains
an application
of the $\to$-rule
as follows.

\small

\begin{center}
\shortstack{
\shortstack{
\hspace{-5em}$A\to A$ \\
\hspace{-10em}\hspace{1.5em}$Q_i \hspace{3em}\vdots$
\vspace{1em}\\
$\infer[\Rightarrow\hspace{4em}]{\Sigma_1
\Rightarrow^{main} \Xi_1, A 
}{
\Sigma_1
\to \Xi_1, A 
}$ \vspace{1em}
\\ \vspace{1em}
\hspace{-5em} \vdots 
}
\shortstack{
$\infer[\Rightarrow]{\Upsilon_1
\Rightarrow^{main} \Psi_1
}{
\Upsilon_1
\rightarrow \Psi_1 
}$ \vspace{1em}
\\\vspace{1em}
\hspace{-2em} \vdots 
}
\\
\vspace{1em}
$\infer[\wedge:r ]{ {\cal P}_0^{\Rightarrow^{sub} }|
{\cal P}_1^{\Rightarrow^{sub} } |
\Sigma_2, \Upsilon_2
\Rightarrow^{main} \Xi_2, \Psi_2, D\wedge E, A 
}{
{\cal P}_0^{\Rightarrow^{sub} } |
\Sigma_2
\Rightarrow^{main} \Xi_2, D, A
\hspace{5em}
{\cal P}_1^{\Rightarrow^{sub} } |
\Upsilon_2
\Rightarrow^{main} \Psi_2, E
}$
\\ \vspace{1em}
\hspace{-2em} \vdots 
\\
\vspace{1em}
$
\infer[\to]{\Sigma_3
\rightarrow \Xi_3, A
}{
\Sigma_3
\Rightarrow^{main} \Xi_3, A
}
$ \\
\hspace{-4em}
$Q_i$ \hspace{1em}
\vdots
}
\end{center}

\normalsize
Here $A$ is the cut formula in $Q_i$.
Put $S=\Sigma_3
\Rightarrow^{main} \Xi_3, A$ (the upper sequent of the application of the $\to$-rule).

Let $\alpha$ denote the proof-fragment starting from the applications of rules
introducing ancestors $\Rightarrow$ of the turnstile $\Rightarrow$ of $S$
and ending with $S$.

First of all, the turnstile of $S$ is $\Rightarrow$ and its ancestors cannot coexist with
$\to$-sequents.
This is because: if the form $T^{\to}| {\cal J}^{\Rightarrow}$ occurs in $\alpha$
where $\Rightarrow$ of ${\cal J}^{\Rightarrow}$
is ancestors of the turnstile $\Rightarrow$ of $S$,
then, the turnstile $\to$ of $T^{\to}$ must change to $\Rightarrow$
and $T^{\to}| {\cal J}^{\Rightarrow}$ must eventually merge to $S$.
However, there is no applicable rule to change
$\to$ to $\Rightarrow$ in the presence of $\Rightarrow$-sequents,
that is, change the form $T^{\to}| {\cal J}^{\Rightarrow}$ to
$T^{\Rightarrow}| {\cal J}^{\Rightarrow}$.
Thus the ancestors $\Rightarrow$ of the turnstile $\Rightarrow$ of $S$
can occur only in hypersequents consisting only of $\Rightarrow$.

The turnstile $\Rightarrow$ of $S$ cannot be introduced by $\blacksquare:r_1$-rule.
This is because: the lower hypersequent of $\blacksquare:r_1$ contains both $\to$- and $\Rightarrow$-sequents.
Therefore, it is introduced by $\Rightarrow$- or $ew$-rule.
So every hypersequent
in $\alpha$
must be of the form

\vspace{1em}
${\cal P}^{\Rightarrow^{sub}}|{\cal Q}^{\Rightarrow^{main}}|
T^{\Rightarrow^{main}}$.

\vspace{1em}
\noindent
Note that it is impossible that all sequents in a hypersequent in proof $P$
are introduced by $ew$.

Note also that  there is no application of $K, 4, K^{\blacksquare}, 4^{\blacksquare}$,
$\Box:r$ nor $\blacksquare:r_1$ in $\alpha$.

\vspace{1em}
We show that
every hypersequent in $\alpha$

\vspace{1em}
${\cal P}^{\Rightarrow^{sub}}|{\cal Q}^{\Rightarrow^{main}}|
T^{\Rightarrow^{main}}$

\vspace{1em}
\noindent
can be replaced with some $\Rightarrow^{main}$-sequent in it

\vspace{1em}

$
T^{\Rightarrow^{main}}$

\vspace{1em}
\noindent
without losing its derivability from its upper sequents.
Proceed from the top of $\alpha$ to $S$.

\vspace{1em}
$\bullet$ On the rule $\Rightarrow$.

\begin{center}
$\infer[\Rightarrow]{T^{\Rightarrow^{main}}
}{
T^{\rightarrow} \hspace{1.3em}
}$
\end{center}

Obviously, the lower hypersequent satisfies the desired property.

\vspace{1em}
Below, by $T^{\Rightarrow^{main}}$ or $T_i^{\Rightarrow^{main}} (i=0,1)$, we mean the $\Rightarrow^{main}-$sequents
chosen for the upper sequents.

\vspace{1em}
$\bullet$
On the propositional rules except $\wedge:r$, the rules of $ic$, $iw$, $merge$,
$\blacksquare:l$ and $\blacksquare:r_2$ such that none of $T^{\Rightarrow^{main}}$
and $T_i^{\Rightarrow^{main}}$ is an upper sequent.
We illustrate the two cases.

\begin{center}
$\infer{
\blacksquare B \Rightarrow^{sub} |
{\cal P}^{\Rightarrow^{sub}}|
T^{\Rightarrow^{main}}
}{ B \Rightarrow^{sub} |
{\cal P}^{\Rightarrow^{sub}}| T^{\Rightarrow^{main}}
}
\hspace{3em}
\infer[merge]{
(U_0\cdot U_1)^{\Rightarrow
}|
{\cal P}^{\Rightarrow^{sub}
}|{\cal Q}^{\Rightarrow^{main}}| T^{\Rightarrow^{main}}}{
U_0^{\Rightarrow
}|U_1^{\Rightarrow
}|
{\cal P}^{\Rightarrow^{sub}}| {\cal Q}^{\Rightarrow^{main}}|T^{\Rightarrow^{main}}
}
$
\end{center}

In any case, the hypersequent to be replaced for the lower and upper hypersequents
is the same:
$
{T^{\Rightarrow^{main}}}$.
So we remove the application of these rules.

\vspace{1em}
$\bullet$ On the rule $ew$.

\begin{center}
$\infer[ew]{
U^{\Rightarrow^{sub}}|
{\cal P}^{\Rightarrow^{sub}}|{\cal Q}^{\Rightarrow^{main}}| T^{\Rightarrow^{main}}}{
\hspace{2.5em}{\cal P}^{\Rightarrow^{sub}}|{\cal Q}^{\Rightarrow^{main}}| T^{\Rightarrow^{main}}
}
$
\end{center}

As in the above case, we can remove this application, leaving $T^{\Rightarrow^{main}}$ alone.

\vspace{1em}
$\bullet$ On the rule $merge$ such that $T^{\Rightarrow^{main}}$
is an upper sequent.

\begin{center}
$\infer{
{\cal P}^{\Rightarrow^{sub}}| {\cal Q}^{\Rightarrow^{main}}|
(U\cdot T)^{\Rightarrow^{main}}
}{
{\cal P}^{\Rightarrow^{sub}}|{\cal Q}^{\Rightarrow^{main}}|
U^{\Rightarrow}| T^{\Rightarrow^{main}}
}
$
\end{center}

Here, $U^{\Rightarrow}$ may be $main$ or $sub$.
The desired property is satisfied
because
$(U\cdot T)^{\Rightarrow^{main}}$ is derivable from
$ T^{\Rightarrow^{main}}$
in terms of the rule $iw$.

\vspace{1em}
$\bullet$ On the rule $\wedge:r$ whose upper sequents are
$ T_0^{\Rightarrow^{main}}$ and $ T_1^{\Rightarrow^{main}}$.

\begin{center}
$\infer{
{\cal P}_0^{\Rightarrow^{sub}}|
{\cal P}_1^{\Rightarrow^{sub}}|
{\cal Q}_0^{\Rightarrow^{main}}|
{\cal Q}_1^{\Rightarrow^{main}}|
T^{\Rightarrow^{main}}( A\wedge B)
}{
{\cal P}_0^{\Rightarrow^{sub}}|
{\cal Q}_0^{\Rightarrow^{main}}|
T_0^{\Rightarrow^{main}}(A)
&&
{\cal P}_1^{\Rightarrow^{sub}}|
{\cal Q}_1^{\Rightarrow^{main}}|
T_1^{\Rightarrow^{main}} (B)
}
$
\end{center}

We can construct the following proof straightforwardly.

\begin{center}
$\infer{
T^{\Rightarrow^{main}}( A\wedge B)
}{
T_0^{\Rightarrow^{main}}(A)
&&
T_1^{\Rightarrow^{main}} (B)
}
$
\end{center}

$\bullet$ On the rule $\wedge:r$
such that at least one upper sequent
is not $T_0^{\Rightarrow^{main}}(A)$ or $T_0^{\Rightarrow^{main}}(B)$ .
We illustrate the case when
the left upper sequent is $U \not=T_0^{\Rightarrow^{main}}(A) $.

\begin{center}
$\infer{
{\cal P}_0^{\Rightarrow^{sub}}|
{\cal P}_1^{\Rightarrow^{sub}}|
{\cal Q}_0^{\Rightarrow^{main}}|
{\cal Q}_1^{\Rightarrow^{main}}|
T_0^{\Rightarrow^{main}}|
T^{\Rightarrow^{main}}( A\wedge B)
}{
{\cal P}_0^{\Rightarrow^{sub}}|
{\cal Q}_0^{\Rightarrow^{main}}|
T_0^{\Rightarrow^{main}}|
U^{\Rightarrow}(A)
&&
{\cal P}_1^{\Rightarrow^{sub}}|
{\cal Q}_1^{\Rightarrow^{main}}|
T_1^{\Rightarrow^{main}} (B)
}
$
\end{center}

Here, $U^{\Rightarrow}(A)$ may be a main or sub occurrence.
Since $T_0^{\Rightarrow^{main}}$ is provable,
we can replace the lower sequent with $T_0^{\Rightarrow^{main}}$.
So we can remove this application of $\wedge:r$.

\vspace{1em}
Thus, we can assume that every hypersequent in $\alpha$
is just a $\Rightarrow$-sequent.
In each path of $\alpha$,
there can be only the applications of propositional rules,
the rules of $ic, iw, \blacksquare:l, \blacksquare:r_2$.
In the path $Q_i$ in particular,
it contains the cut formula $A$
and thus there is no application of
$ \blacksquare:l$ or $ \blacksquare:r_2$ in $Q_i$.
Therefore, we can convert each hypersequent $T^{\Rightarrow}$ in $Q_i$
to $T^\to$.
When a possible application of $\wedge:r$
merges another path with $Q_i$,
the other upper sequent is converted to
a $\to$-sequent by the $\to$-rule,
as follows.

\small

\begin{center}
\shortstack{
\shortstack{\hspace{-10em}\hspace{1.5em}$Q_i \hspace{3em}\vdots$
\vspace{1em}\\
\hspace{2.3em}
$\infer[(\Rightarrow)
\hspace{6em}]{\Sigma_1
\to \Xi_1, A 
}{\Sigma_1
\to \Xi_1, A 
}$ \vspace{1em}
\\ \vspace{1em}
\hspace{-5em} \vdots 
}
\shortstack{
$\infer[\Rightarrow]{\Upsilon_1
\Rightarrow^{main} \Psi_1
}{
\Upsilon_1
\rightarrow \Psi_1 
}$ \vspace{1em}
\\\vspace{1em}
\hspace{-2em} \vdots 
}
\\
\vspace{1em}
$\infer[\wedge:r ]{ 
\Sigma_2, \Upsilon_2
\to
\Xi_2, \Psi_2, D\wedge E, A 
}{
\Sigma_2
\to
\Xi_2, D, A
\hspace{5em}
\infer[\to]{\Upsilon_2
\to
\Psi_2, E}{
\Upsilon_2
\Rightarrow^{main} \Psi_2, E
}
}
$
\\ \vspace{.5em}
\vdots 
\\
\vspace{.5em}
\hspace{2.3em}
$
\infer[(\to)]{\Sigma_3
\rightarrow \Xi_3, A
}{\Sigma_3
\rightarrow \Xi_3, A
}
$ \\
\hspace{-2.3em}
$Q_i$ \hspace{1em}\vdots
}
\end{center}

\normalsize
Finally, the application of the $\to$-rule and the accompanying application of
the $\Rightarrow$-rule have disappeared.
This finishes the proof of Fact 1.1.

\vspace{1em}
For Fact 1.2.
When the turnstile of $P$ is $\to$,
if such a $Q_i$ or an $R_j$ contains an application of $\Rightarrow$,
the introduced turnstile $\Rightarrow$ changes to $\to$,
and the change is made by the rule $\to$, $\Box:r$ or $\blacksquare:r_1$.
By Fact 1.1, there is no possibility of the $\to$-rule.
Also, there is no possibility of
$\Box:r$ or $\blacksquare:r_1$,
due to the presence of the cut formula $A$
in the $\Rightarrow$-sequent.
Thus, there is no $\Rightarrow$-rule in the $Q_i$ nor the $R_j$.

\vspace{1em}
For Fact 1.3.
When the turnstile of $P$ is $\Rightarrow$,
if such a $Q_i$ or an $R_j$
contains two applications of the
$\Rightarrow$-rule,
the turnstile $\Rightarrow$ introduced by the first application
changes to $\to$ above the second one.
This is impossible for the same reason as in Fact 1.2.

\vspace{1em}
For Fact 2.1.
The lower sequents of the rules involved here
($\blacksquare:r_2$, $\blacksquare:l$)
are $\Rightarrow$-sequents.
Consider $R_j$ starting with
$\blacksquare:l$.
(The case for $Q_i$ with
$\blacksquare:r_2$ is handled
precisely in the same way.)
Suppose that $R_j$ contains an application of the $\Rightarrow$-
rule,
as in the figure below.

\small

\begin{center}
\shortstack{
\shortstack{\hspace{-3.5em}
\infer[\blacksquare:l]{{\cal H}_0 | \blacksquare B
\Rightarrow }{
{\cal H}_0 | B
\Rightarrow
} \\
\hspace{-10em}\hspace{1.5em}$R_j \hspace{3em}\vdots$
\vspace{1em}\\
\hspace{1em}
$\infer[\to\hspace{5.5em}]{
\blacksquare B, \Sigma_1
\rightarrow^\tau \Xi_1 
}{
\blacksquare B, \Sigma_1
\Rightarrow \Xi_1 
}$ \vspace{1em}
\\ \vspace{1em}
\hspace{-5em} \vdots 
}
\shortstack{
$\infer[K]{{\cal J}_0|\Box F \to |\Upsilon_1
\Rightarrow \Psi_1
}{{\cal J}_0|
F, \Upsilon_1
\Rightarrow \Psi_1 
}$ \vspace{1em}
\\\vspace{1em}
\hspace{-2em} \vdots 
}
\\
\vspace{1em}
$\infer[\wedge:r ]{ {\cal H}_2^{\Rightarrow }| {\cal I}_2^{\rightarrow }|
{\cal J}_2^{\Rightarrow }| {\cal K}_2^{\rightarrow }|
\blacksquare B,\Sigma_2, \Upsilon_2
\rightarrow^\tau \Xi_2, \Psi_2, D\wedge E
}{
{\cal H}_2^{\Rightarrow }| {\cal I}_2^{\rightarrow }|
\blacksquare B, \Sigma_2
\rightarrow^\tau \Xi_2, D
\hspace{5em}
{\cal J}_2^{\Rightarrow }| {\cal K}_2^{\rightarrow }|
\Upsilon_2
\rightarrow^\tau \Psi_2, E
}$
\\ \vspace{1em}
\hspace{-2em} \vdots 
\\ \vspace{.6em}
\hspace{-2em} \vdots 
\\
\vspace{1em}
\hspace{-1.4em}
$
\infer[\Rightarrow]{\blacksquare B, \Sigma_3
\Rightarrow \Xi_3
}{
\blacksquare B, \Sigma_3
\rightarrow^\tau \Xi_3
}
$
\vspace{-1em}\\
\hspace{-4em}
$R_j$ \hspace{1em}\vdots
}
\end{center}

\normalsize
Put $S=\blacksquare B, \Sigma_3
\rightarrow \Xi_3$ (the upper sequent of the application).
The turnstile $\to$ of $S$ must be introduced by the $\to$-rule in $R_j$.
(It cannot be introduced by any other rule in $R_j$
because of the presence of $\blacksquare B$;
it can be introduced, for example, by the $K$-rule
in another path merging with $R_j$
as the above figure illustrates.)
Let 
$T$ denote the lower sequent
of the application of the $\to$-rule: $\blacksquare B, \Sigma_1
\rightarrow \Xi_1$.

We are going to change the turnstile $\to^{\tau}$ labeled with $\tau$ to $\Rightarrow$.
The turnstile $\to^{\tau}$
is that of $T$ and its descendants (including $S$) in $R_j$,
and the turnstile of
another upper sequent of possible
applications of $\wedge:r$
merging with $R_j$.


We show that every hypersequent in $R_j$

\vspace{1em}
${\cal H}_i^{\Rightarrow}|{\cal I}_i^{\rightarrow}| \blacksquare B, \Sigma_i
\rightarrow \Xi_i$

\vspace{1em}
\noindent
can be replaced with

\vspace{1em}
$
(\cdot({\cal I}_i^{\rightarrow})) \cdot (\blacksquare B, \Sigma_i
\rightarrow \Xi_i)$

\vspace{1em}
\noindent
without losing its derivability from its upper sequents.
 
Proceed from $T$ to $S$.
Note that we do not go through the whole proof-fragment
as in the proof of Fact 1.1; we are just going through the path $R_j$.

As the base case,
$T$ clearly satisfies the desired property.

\vspace{1em}

$\bullet$ On $\wedge:r$ for
the $\to$-sequents,
it is the situation illustrated above.
On the left upper hypersequent
of the application of $\wedge:r$,
we can suppose that
$ 
(\cdot{\cal I}_2^{\rightarrow })\cdot
(\blacksquare B, \Sigma_2
\rightarrow^\tau \Xi_2, D)$
is provable.

\begin{center}
$\infer[\wedge:r ]{
{\cal J}_2^{\Rightarrow }| {\cal K}_2^{\rightarrow }|
(\cdot{\cal I}_2^{\rightarrow })\cdot( \blacksquare B,\Sigma_2, \Upsilon_2
\rightarrow^\tau \Xi_2, \Psi_2, D\wedge E)
}{
(\cdot{\cal I}_2^{\rightarrow })\cdot
(\blacksquare B, \Sigma_2
\rightarrow^\tau \Xi_2, D)
\hspace{2.5em}
{\cal J}_2^{\Rightarrow }| {\cal K}_2^{\rightarrow }|
\Upsilon_2
\rightarrow \Psi_2, E
}$
\end{center}

The $\Rightarrow$-sequents
in $ {\cal J}_2^{\Rightarrow }$
eventually turn into
$\to$-sequents and merged to
the $\to$-sequent, $S$.
The transformation into $\to$-sequents
must be made according to the rule of
$\Box:r$ or $\blacksquare:r_1$,
because the presence of the $\to^\tau$-sequent makes the $\to$-rule unapplicable.
However, by Lemma \ref{standard},
we can assume that
the application of $\Box:r$
or $\blacksquare:r_1$ is of the standard form,
and the presence of the application of $\wedge:r$ above that
application of
$\Box:r$
or $\blacksquare:r_1$
breaks the condition of the standard form.
Thus, we may assume that
${\cal J}_2^{\Rightarrow}$
is empty.
After all, we can have the desired derivation as follows.

\begin{center}
$\infer[\wedge:r ]{
(\cdot {\cal K}_2^{\rightarrow })
\cdot
(\cdot{\cal I}_2^{\rightarrow })\cdot( \blacksquare B,\Sigma_2, \Upsilon_2
\rightarrow^\tau \Xi_2, \Psi_2, D\wedge E)
}{
(\cdot{\cal I}_2^{\rightarrow })\cdot
(\blacksquare B, \Sigma_2
\rightarrow^\tau \Xi_2, D)
\hspace{2.5em}
\infer={ (\cdot{\cal K}_2^{\rightarrow })\cdot(
\Upsilon_2
\rightarrow \Psi_2, E)}{
{\cal K}_2^{\rightarrow }|
\Upsilon_2
\rightarrow \Psi_2, E
}
}$
\end{center}

$\bullet$ On the propositional rules except $\wedge:r$ for $\to$-sequents,
and the rules $ic$, $iw$
for $\to$-sequents, they can be simulated in the sequents of $\to^\tau$.

\begin{center}
$
\infer{{\cal H}_i^{\Rightarrow}|\neg C, \Sigma'_i
\rightarrow \Xi'_i |
{\cal I}_i^{\rightarrow}| \blacksquare B, \Sigma_i
\rightarrow^\tau \Xi_i}{{\cal H}_i^{\Rightarrow}|\Sigma'_i
\rightarrow \Xi'_i, C|
{\cal I}_i^{\rightarrow}| \blacksquare B, \Sigma_i
\rightarrow^\tau \Xi_i}
$

\vspace{1em}
$\bigtriangledown$

\vspace{1em}
$
\infer{(\neg C, \Sigma'_i
\rightarrow \Xi'_i )\cdot
(\cdot{\cal I}_i^{\rightarrow})\cdot (\blacksquare B, \Sigma_i
\rightarrow^\tau \Xi_i)}{
(\Sigma'_i
\rightarrow \Xi'_i, C )\cdot
(\cdot{\cal I}_i^{\rightarrow})\cdot (\blacksquare B, \Sigma_i
\rightarrow^\tau \Xi_i)}
$
\end{center}

$\bullet$ On $\wedge:r$ for $\Rightarrow$-sequents,
the desired proof is obtained as follows.

\begin{center}
$\infer[\wedge:r ]{ {\cal H}_2^{\Rightarrow }| {\cal I}_2^{\rightarrow }|
{\cal J}_2^{\Rightarrow }| {\cal K}_2^{\rightarrow }|
\Sigma'_2, \Upsilon_2
\Rightarrow \Xi'_2, \Psi_2, D\wedge E
|
\blacksquare B, \Sigma_2
\rightarrow^\tau \Xi_2
}{
{\cal H}_2^{\Rightarrow }|
{\cal I}_2^{\rightarrow }|
\Sigma'_2
\Rightarrow \Xi'_2, D|
\blacksquare B, \Sigma_2
\rightarrow^\tau \Xi_2
\hspace{5em}
{\cal J}_2^{\Rightarrow }
| {\cal K}_2^{\rightarrow }
|
\Upsilon_2
\Rightarrow \Psi_2, E
}$

\vspace{1em}
$\bigtriangledown$

\vspace{1em}

$\infer=[iw ]{
(\cdot{\cal K}_2^{\rightarrow })\cdot
(\cdot{\cal I}_2^{\rightarrow })\cdot( \blacksquare B,\Sigma_2
\rightarrow^\tau \Xi_2)
}{
(\cdot{\cal I}_2^{\rightarrow })\cdot
(\blacksquare B, \Sigma_2
\rightarrow^\tau \Xi_2)
}$
\end{center}

\vspace{1em}
$\bullet$ On the propositional rules and the rules $ic$, $iw$, $merge$,
and $ew$ for $\Rightarrow$-sequents,
they are just deleted.

\vspace{1em}
$\bullet$ On the $K$-, $4$-, $K^\blacksquare$- or $4^\blacksquare$-rule,
the desired proof is obtained as follows.

\begin{center}
$\infer[\wedge:r ]{ {\cal H}_2^{\Rightarrow }| {\cal I}_2^{\rightarrow }|
\blacksquare D \to |
\Sigma'_2
\Rightarrow \Xi'_2
|
\blacksquare B, \Sigma_2
\rightarrow^\tau \Xi_2
}{
{\cal H}_2^{\Rightarrow }|
{\cal I}_2^{\rightarrow }|
\blacksquare D, \Sigma'_2
\Rightarrow \Xi'_2|
\blacksquare B, \Sigma_2
\rightarrow^\tau \Xi_2
}$

\vspace{1em}
$\bigtriangledown$

\vspace{1em}

$\infer=[iw ]{
(\cdot{\cal I}_2^{\rightarrow })
\cdot(\blacksquare D \to )
\cdot( \blacksquare B,\Sigma_2
\rightarrow^\tau \Xi_2)
}{
(\cdot{\cal I}_2^{\rightarrow })\cdot
(\blacksquare B, \Sigma_2
\rightarrow^\tau \Xi_2)
}$
\end{center}

Thus, now we can assume that there are only applications of propositional rule,
$iw$ and $ic$ on $\to^\tau$-sequents 
in $R_j$.
Then, we can convert the turnstile of $\to^\tau$ to $\Rightarrow$
to obtain the following proof,
where those applications of
the $\to$- and $\Rightarrow$-rules
disappear.

\small
\begin{center}
\shortstack{
\shortstack{\hspace{-3.5em}
\infer[\blacksquare:l]{{\cal H}_0 | \blacksquare B
\Rightarrow }{
{\cal H}_0 | B
\Rightarrow
} \\
\hspace{-10em}\hspace{1.5em}$R_j \hspace{3em}\vdots$
\vspace{1em}\\ \hspace{1em}
$\infer[(\to)\hspace{4em}]{
\blacksquare B, \Sigma_1
\Rightarrow \Xi_1 
}{
\blacksquare B, \Sigma_1
\Rightarrow \Xi_1 
}$ \vspace{1em}
\\ \vspace{1em}
\hspace{-5em} \vdots 
}
\shortstack{
$\infer[K]{{\cal J}_0|\Box F \to |\Upsilon_1
\Rightarrow \Psi_1
}{{\cal J}_0|
F, \Upsilon_1
\Rightarrow \Psi_1 
}$ \vspace{1em}
\\\vspace{1em}
\hspace{-2em} \vdots 
}
\\
\vspace{1em}
$\infer[\wedge:r ]{ (\cdot{\cal I}_2^{\Rightarrow })\cdot
(\cdot{\cal K}_2^{\Rightarrow })\cdot
( \blacksquare B,\Sigma_2, \Upsilon_2
\Rightarrow \Xi_2, \Psi_2, D\wedge E)
}{
(\cdot{\cal I}_2^{\Rightarrow })\cdot
(\blacksquare B, \Sigma_2
\Rightarrow \Xi_2, D)
\hspace{5em}
\infer[\Rightarrow]{
(\cdot {\cal K}_2^{\Rightarrow })
\cdot(\Upsilon_2
\Rightarrow \Psi_2, E)
}{
\infer=[merge]{
(\cdot {\cal K}_2^{\rightarrow })
\cdot(\Upsilon_2
\rightarrow \Psi_2, E)
}{
{\cal K}_2^{\rightarrow }|
\Upsilon_2
\rightarrow \Psi_2, E
}
}
}$
\\ \vspace{1em}
\hspace{-2em} \vdots 
\\ \vspace{.6em}
\hspace{-2em} \vdots 
\\
\vspace{1em}
$
\infer[(\Rightarrow)]{\blacksquare B, \Sigma_3
\Rightarrow \Xi_3
}{\blacksquare B, \Sigma_3
\Rightarrow \Xi_3
}
$ \vspace{-.5em}\\
\hspace{-4.4em}
$R_j$ \hspace{1.5em}\vdots
}
\end{center}

\normalsize

\vspace{1em}
For Fact 2.2.
Suppose that the turnstile of $P$ is $\Rightarrow$,
and
the $Q_i$ or the $R_j$ contains an application of $\rightarrow$.
The introduced $\to$-sequent containing the cut formula $\blacksquare B$
eventually turns into a $\Rightarrow$-sequent.
The transformation must be made by the $\Rightarrow$-rule.
This is, however, impossible in the transformed proof of Fact 2.1.

\vspace{1em}
For Fact 2.3.
Suppose that the turnstile of $P$ is $\rightarrow$,
and there are two applications of the $\to$-rule in the $Q_i$ or the $R_j$.
Then, the $\to$-sequent introduced by the first one eventually turns into a
$\Rightarrow$-sequent by the $\Rightarrow$-rule.
This never happens in the transformed proof of { Fact 2.1}.

This completes the proof of the facts.

\vspace{1em}
Let us begin with the \noindent {\it Case 1} of Theorem \ref{cut-elimination}.

\vspace{1em}
\noindent {\it Case 1}. $A=\Box B$.
Suppose that $P$ is as follows.

\begin{center}
$
\infer{{\cal H} | {\cal I} |
\Gamma, \Pi \gg \Delta, \Theta
}{
\shortstack{\deduc \vspace{.6em}
\\
$\infer[I]{{\cal H} |\Gamma \gg \Delta, \Box B}{}$
} &
\shortstack{\deduc \vspace{.6em}\\
$\infer[J]{ {\cal I} |\Box B, \Pi \gg \Theta}{}$}
}
$
\end{center}


\noindent {\it Case 1.1}.
The turnstile of $P$ is $\to$.
$P$ is now as follows.

\small
\begin{center}
\shortstack{
$\infer[L_i]{\nabla \Lambda_i\to |
\rightarrow \Box B}{
\nabla \Lambda_i\to |
\Box B\Rightarrow B
}$ $\hspace{2em}$
$\infer[L_j]{
{\cal K}| 
\to \Box B
}{ \infer={
{\cal K}|
\Box B \Rightarrow B
}{
{\cal K}| \Rightarrow}
}
$
$\hspace{2em}$
$\infer[M_k]
{{\cal L}|
\Box B \to
| \Sigma_k \Rightarrow \Psi_k
}
{{\cal L}|
B, \Sigma_k \Rightarrow \Psi_k
}$
\\ \vspace{1em}
\hspace{-3em}
$Q_i$ \hspace{1em} $\vdots$
\hspace{7em}
$Q_j$ \hspace{1em} $\vdots$
\hspace{7em}
$R_k$ \hspace{1em} $\vdots$
\\
$\vdots$
\hspace{9.6em} 
$\vdots$
\hspace{8.5em}
\hspace{1em} $\vdots$
\\ \vspace{1em}
\\
\hspace{.4em}
$\ddots
\hspace{7.5em}\iddots
\hspace{10.5em}
\vdots
$
\vspace{3em}
\\
\hspace{5.5em}
$
\infer{{\cal H} | {\cal I} |
\Gamma, \Pi \rightarrow \Delta, \Theta
}{
\infer[I]{{\cal H} |\Gamma \to \Delta, \Box B}{}
\hspace{3.7em}& \hspace{3.7em}
\infer[J]{ {\cal I} |\Box B, \Pi \rightarrow \Theta}{}
}
$
}
\end{center}

\normalsize

\vspace{1em}

There are three types of paths,
$Q_i, Q_j, R_k$ as above.
By Lemma \ref{standard}, 
we may assume that $L_i$ and $L_j$ are of the standard form.

Note that, by { Facts 1.1, 1.2},
there is no application of $\Rightarrow$- or $\to$-rule
in $Q_i$, $Q_j$ and $R_k$.

Also, by Corollary \ref{crucial_corollary},
 the diagonal formula is now deleted.
Thus, in $L_i$, \begin{center}
$ \nabla \Lambda_i\to |
\Rightarrow
B$
\end{center}
is provable,
where in turn extra applications of $cut$ with $B$ the cut formula 
can occur.

Now we work on the top-down method.
A rough sketch of the
method is as follows.
For the top of each of $Q_i$ and $Q_j$:

\begin{center}
$ \nabla \Lambda_i\to| \to \Box B$
\hspace{1.5em}
and
\hspace{1.5em}
${\cal K}| 
\to \Box B$,
\end{center}

\noindent
we are going to
replace them with the following,
respectively.

\begin{center}
${\cal I} |
\nabla \Lambda_i\to| \Pi \to \Theta$
\hspace{1.5em}
and
\hspace{1.5em}
${\cal I} |
{\cal K}| 
\Pi \to \Theta$.
\end{center}

Here, 
${\cal I} |
\Pi \to \Theta$
comes from the lower  hypersequent of $J$.

To do this replacement,
we execute the $cut$ for
the subformula $B$
in the former case.
Then, under this replacement,
we simulate
the subproof above $I$
to obtain the original end
hypersequent.

Below is a detailed
description of the proof-transformation of the method.

\vspace{1em}
$\bullet$ For $Q_i$, 
we make use of
each $R_k$ to obtain
${\cal I}|\nabla \Lambda_i \to| \Pi\rightarrow \Theta$ as follows.
Firstly, for each $R_j$, we do the following replacement.

\begin{center}
$
\infer[M_k\hspace{2em} \rhd \hspace{2em}]
{{\cal L}|
\Box B \to
| \Sigma_k \Rightarrow \Psi_k
}
{
{\cal L}|
B, \Sigma_k \Rightarrow \Psi_k
}
\infer[cut]
{
{\cal L}|
\nabla \Lambda_i\to |\Sigma_k \Rightarrow \Psi_k
}
{ \nabla \Lambda_i\to |
\Rightarrow B \hspace{1.5em}
{\cal L}|
B, \Sigma_k \Rightarrow \Psi_k
}
$
\end{center}

Secondly, under this replacement,
we simulate the subproof above $J$.
Then, we obtain: ${\cal I}| \nabla \Lambda_i, \Pi \to \Theta
$ and thus ${\cal I}| \nabla \Lambda_i\to | \Pi \to \Theta
$.
The following illustrates this proof-transformation.

\begin{center}
\shortstack{
$
{\cal L}|
\Box B \to
| \Sigma_k \Rightarrow \Psi_k
\hspace{9em}
{\cal L}|
\nabla \Lambda_i\to | \Sigma_k \Rightarrow \Psi_k
$
\vspace{1em}
\\
\hspace{-1em}
\deduc
\hspace{6em} $\rhd$ \hspace{6em} \deduc
\\
\vspace{1em}
\\
\hspace{3em}
$
{\cal I}|\Box B, \Pi \rightarrow \Theta
\hspace{11em}
\infer[split]{ {\cal I} | \nabla \Lambda_i\to| \Pi
\rightarrow \Theta}{
{\cal I} | \nabla \Lambda_i, \Pi
\rightarrow \Theta
}
$
}
\end{center}

Thus, we can make the replacement
for the top of $Q_i$.

\vspace{1em}
$\bullet$ For $Q_j$,
firstly, for each $R_k$, we do the following replacement similar to the former case,
although we do not use the $cut$-rule.

\begin{center}
$
\infer[M_k\hspace{2em} \rhd \hspace{2em}]
{{\cal L}|
\Box B \to
| \Sigma_k \Rightarrow \Psi_k
}
{
{\cal L}|
B, \Sigma_k \Rightarrow \Psi_k
}
\infer={
{\cal K}| {\cal L}| \to |
\Sigma_k \Rightarrow \Psi_k
}{{\cal K}|\Rightarrow
}
$
\end{center}

Secondly, under the replacement,
we make a simulation of
the subproof above $J$ as in the former case, as illustrated below.

\begin{center}
\shortstack{
$
{\cal L}|
\Box B \to
| \Sigma_k \Rightarrow \Psi_k
\hspace{8.5em}
{\cal K}| {\cal L}|
\to |\Sigma_k \Rightarrow \Psi_k
$
\vspace{1em}
\\
\deduc
\hspace{6em} $\rhd$ \hspace{5em} \deduc
\\
\vspace{1em}
\\
$
{\cal I}|\Box B, \Pi \rightarrow \Theta
\hspace{11em}
{\cal K}| {\cal I} | 
\Pi
\rightarrow \Theta
$
}
\end{center}

Then, under the replacement for $Q_i$ and $Q_j$,
we simulate the subproof above $I$ to obtain the original end hypersequent
as the following illustration shows.

\small

\begin{center}
\shortstack{
${\cal I}| \nabla \Lambda_j\to| \Pi \to \Theta$
$\hspace{2em}$
${\cal K}| {\cal I} | 
\Pi
\rightarrow \Theta$
\\ \vspace{1em}
\hspace{-3em}
$Q_i$ \hspace{1em} $\vdots$
\hspace{7em}
$Q_j$ \hspace{1em} $\vdots$
\\
$\vdots$
\hspace{9.6em} 
$\vdots$
\\ \vspace{1em}
\\
$\ddots
\hspace{7.5em}\iddots
$
\vspace{2em}
\\
\hspace{.5em}
$
\infer[I]{{\cal H} |{\cal I} |\Gamma, \Pi \to \Delta, \Theta}{}
$
}
\end{center}

\normalsize
Basically, the proof-transformation in the following 
cases ({\it Cases 1.2, 2.1, 2.2 })
is similar to that in \noindent {\it Case 1.1},
though some extra technical modifications are needed for each of the coming cases.

\noindent {\it Case 1.2}.
The turnstile of $P$ is $\Rightarrow$.
$P$ is now as follows.

\footnotesize
\begin{center}
\shortstack{
$\infer[L_i]{\nabla \Lambda_i\to |
\rightarrow \Box B}{
\nabla \Lambda_i\to |
\Box B\Rightarrow B
}$ $\hspace{2em}$
$\infer[L_j]{
{\cal K}| 
\to \Box B
}{
\infer={
{\cal K}|
\Box B \Rightarrow B
}{
{\cal K}| \Rightarrow 
}
}$
$\hspace{2em}$
$\infer[M_k]
{{\cal L}|
\Box B \to
| \Sigma_k \Rightarrow \Psi_k
}
{{\cal L}|
B, \Sigma_k \Rightarrow \Psi_k
}$
\\
\hspace{-6em}
$Q_i$ \hspace{1em} $\vdots$
\hspace{8em}
$Q_j$ \hspace{1em} $\vdots$
\hspace{9em}
$R_k$ \hspace{1em} $\vdots$
\\
\hspace{-3.2em}
$\vdots$
\hspace{10.6em} 
$\vdots$
\hspace{10.4em}
\hspace{1em} $\vdots$
\vspace{1em}
\\
\hspace{-1.5em}
\infer{\Xi_i\Rightarrow\Upsilon_i, \Box B}{\Xi_i\rightarrow\Upsilon_i, \Box B}
\hspace{5em}
\infer{\Xi_j\Rightarrow\Upsilon_j, \Box B}{\Xi_j\rightarrow\Upsilon_j, \Box B}
\hspace{6.5em}
\infer{\Box B, \Xi_k\Rightarrow\Upsilon_k}{\Box B, \Xi_k\rightarrow\Upsilon_k}
\\
\hspace{-3em}
$\vdots$
\hspace{10.6em} 
$\vdots$
\hspace{10.5em}
\hspace{1em} $\vdots$
\\ \vspace{1em}
\\
\hspace{-2em}
$\ddots
\hspace{7.5em}\iddots
\hspace{13.5em}
\vdots
$
\vspace{2em}
\\
\hspace{2.5em}
$
\infer{{\cal H} | {\cal I} |
\Gamma, \Pi \Rightarrow \Delta, \Theta
}{
\infer[I]{{\cal H} |\Gamma \Rightarrow \Delta, \Box B}{}
\hspace{5em}& \hspace{5em}
\infer[J]{ {\cal I} |\Box B, \Pi \Rightarrow \Theta}{}
}
$
}
\end{center}

\normalsize
There are three types of paths,
$Q_i, Q_j, R_k$, as above.
Note that, by { Facts 1.1, 1.3},
there is no application of $\to$-rule and
exactly one application of $\Rightarrow$-rule
in $Q_i$, $Q_j$, and $R_k$.

\normalsize
For the application of the $\Rightarrow$-rule in $Q_i$ and $Q_j$,
we are going to replace their lower hypsesequent $\Xi_{(i,j)}\Rightarrow\Upsilon_{(i,j)}, \Box B$
with ${\cal I}|(\Xi_{(i,j)}\Rightarrow\Upsilon_{(i,j)})\cdot(\Pi\Rightarrow \Theta)$.

\vspace{1em}
$\bullet$ For $Q_i$,
first,
for each $M_k$ (the top of $R_k$),
we make precisely the same replacement as in \noindent {\it Case 1.1}
using the $cut$ with the formula $B$.
Then, simulate $R_k$, $Q_i$ and $R_k$, using the $split$-rule, as follows.

\small
\begin{center}
\shortstack{
$
\infer[cut]{
{\cal L}| \nabla \Lambda_i\to | \Sigma_k \Rightarrow \Psi_k
}{ \nabla \Lambda_i\to |
\Rightarrow B \hspace{1.5em}
{\cal L}| B, \Sigma_k \Rightarrow \Psi_k
}
$ 
\\ \hspace{-4em}
$R_k$ \hspace{1em} $\vdots$
\vspace{.5em}
\\
$\infer[split]{\nabla \Lambda_i \to | \Xi_k\rightarrow\Upsilon_k}{
\nabla \Lambda_i, \Xi_k\rightarrow\Upsilon_k}$
\\
\hspace{-4.5em}
$ Q_i$ \hspace{1em} $\vdots$
\\
\vspace{.5em}
\\
\hspace{-2em}
$\Xi_i, \Xi_k \Rightarrow\Upsilon_i, \Upsilon_k$
\\ \vspace{1em}
\hspace{-4.5em}
$R_k$ \hspace{1em} $\vdots$
\\
\hspace{-1em}
$
\infer[J]{{\cal I}|\Xi_i, \Pi \Rightarrow\Upsilon_i, \Theta}{}
$
}
\end{center}

\normalsize
$\bullet$ For $Q_j$.
First,
for each $M_k$ (the top of $R_k$),
by using the provable hypersequent ${\cal K}| \Rightarrow$,
we construct the following.

\begin{center}
$\infer={
{\cal K}| {\cal L}|
\to |
\Sigma_k \Rightarrow \Psi_k
}{{\cal K}|
\Rightarrow }$
\end{center}

Then, simulate $R_k$, $Q_j$ and $R_k$
to obtain:
${\cal I}|\Xi_j, \Pi \Rightarrow\Upsilon_j, \Theta$.

\small

\begin{center}
\shortstack{
$\infer={
{\cal K}| {\cal L}|
\to |
\Sigma_k \Rightarrow \Psi_k
}{{\cal K}|
\Rightarrow }$
\\ \hspace{-3em}
$R_k$ \hspace{1em} $\vdots$
\vspace{.5em}
\\
$
{\cal K}| 
\Xi_k \rightarrow \Upsilon_k
$
\\
\hspace{-3.5em}
$ Q_j$ \hspace{1em} $\vdots$
\\
\vspace{.5em}
\\
$\Xi_j, \Xi_k \Rightarrow\Upsilon_j, \Upsilon_k$
\\ \vspace{1em}
\hspace{-3.5em}
$R_k$ \hspace{1em} $\vdots$
\\
$
\infer[J]{{\cal I}|\Xi_j, \Pi \Rightarrow\Upsilon_j, \Theta}{}
$
}
\end{center}

\normalsize
\vspace{1em}
In this way, for $Q_i$ and $Q_j$,
we obtain the hypersequent:

\begin{center}
${\cal I}|(\Xi_{(i,j)}\Rightarrow\Upsilon_{(i,j)})\cdot(\Pi\Rightarrow \Theta)$.
\end{center}

Finally, under the replacement for the top of $Q_i$ and $Q_j$, we can simulate
the subproof above $I$ (below the applications of the $\Rightarrow$-rule)
to obtain the original end hypersequent
as follows.

\small

\begin{center}
\shortstack{
${\cal I}|(\Xi_{i}\Rightarrow\Upsilon_{i})\cdot(\Pi\Rightarrow \Theta)$
$\hspace{2em}$
${\cal I}|(\Xi_{j}\Rightarrow\Upsilon_{j})\cdot(\Pi\Rightarrow \Theta)$
\\ \vspace{1em}
\hspace{-3em}
$Q_i$ \hspace{1em} $\vdots$
\hspace{7em}
$Q_j$ \hspace{1em} $\vdots$
\\
$\vdots$
\hspace{9.6em} 
$\vdots$
\\ \vspace{1em}
\\
$\ddots
\hspace{7.5em}\iddots
$
\vspace{2em}
\\
\hspace{.5em}
$
\infer[I]{{\cal H} |{\cal I} |\Gamma, \Pi \to \Delta, \Theta}{}
$
}
\end{center}

\normalsize

\vspace{1em}
\noindent {\it Case 2}. $A$ is a modal formula $\blacksquare B$.

\noindent {\it Case 2.1}.
The turnstile of $P$ is $\to$.
First we show: if any $Q_i$ starts with $\blacksquare:r_2$
or
if any $R_j$ starts with $\blacksquare:l$,
then the $cut$ is easily eliminated.

\small
\begin{center}
\shortstack{
$\infer[L_i]{\nabla \Lambda_i\to |
\Rightarrow \blacksquare B}{
\nabla \Lambda_i\to |\Rightarrow B
}$ 
\\ \hspace{-4em}
$Q_i$ \hspace{1em} $\vdots$
\\
\hspace{-1.2em}
$\vdots$
\vspace{1em}
\\
\infer[\to
]{\Xi_i\rightarrow\Upsilon_i, \blacksquare B}{\Xi_i\Rightarrow\Upsilon_i, \blacksquare B}
\\ \vspace{1em}
\hspace{-1em}
$\vdots$
\\ \vspace{1em}
\hspace{-1em}
$\vdots$
\\
$
\infer[I]{{\cal H} |\Gamma \rightarrow \Delta, \blacksquare B}{}
$
}
\hspace{2em}
\shortstack{
\hspace{1.5em}
$\infer[M_j]{
\nabla \Lambda_k \to |\blacksquare B \Rightarrow
}{
\nabla \Lambda_k \to | B \Rightarrow
}$ 
\\ \hspace{-4em}
$R_j$ \hspace{1em} $\vdots$
\\
\hspace{-1.2em}
$\vdots$
\vspace{1em}
\\
\hspace{2em}
\infer[\to]{\blacksquare B, \Xi_k\rightarrow\Upsilon_k}{\blacksquare B, \Xi_k\Rightarrow\Upsilon_k}
\\ \vspace{1em}
\hspace{-1em}
$\vdots$
\\ \vspace{1em}
\hspace{-1em}
$\vdots$
\\
\hspace{1em}
$
\infer[J]{{\cal I} |\blacksquare B, \Delta \rightarrow \Pi}{}
$
}
\end{center}

\normalsize

By {Facts 2.1, 2.3},
such a $Q_i$ and such an $R_j$ contain no application of $\Rightarrow$-rule and
exactly one application of the $\to$-rule.
Then, $\nabla \Lambda_{i}$ and $\nabla \Lambda_{ k}$ must be empty,
because there is no applicable rule to change the turnstile of $\nabla \Lambda_{(i,k)} \to$
in the presence of the sequent $\blacksquare B\Rightarrow$ or $\Rightarrow \blacksquare B$.
Therefore, $\Rightarrow B$ and $B \Rightarrow$ are provable in the two cases.

$\bullet$ When $\Rightarrow B$ is provable,
we can use the subproof above $J$
to obtain
${\cal I} |\Delta \rightarrow \Pi$
and thus
${\cal H}|{\cal I} |\Gamma, \Delta \rightarrow \Delta, \Pi$ as desired.
Here are details. When an $R_j$ starts with $K^{\blacksquare}$,
we can use the $cut$-rule to delete $B$ with $\Rightarrow B$.

\begin{center}

$\infer[M_k \hspace{2em}\rhd\hspace{2em}]
{{\cal L}|
\blacksquare B \to
| \Sigma_k \Rightarrow \Psi_k
}
{{\cal L}| B,
\Sigma_k \Rightarrow \Psi_k
}$
$\infer[cut]
{
{\cal L}|
\Sigma_k \Rightarrow \Psi_k
}
{\Rightarrow B \hspace{2em}{\cal L}|
B, \Sigma_k \Rightarrow \Psi_k
}$
\end{center}

When an $R_j$ starts with ${\blacksquare}B \to \blacksquare B$,
we can have $\to \blacksquare B$ as follows.

\begin{center}

$\infer[M_k \hspace{2em}\rhd\hspace{2em}]
{{\blacksquare}B \to \blacksquare B
}
{
}$
$\infer[\to]{
\to \blacksquare B}{
\infer[\blacksquare:r_2]{\Rightarrow \blacksquare B}{
\Rightarrow B
}
}$
\end{center}

Then, under this replacement, we can simulate the subproof above $J$,
to obtain
${\cal I} |\Delta \rightarrow \Pi$.

$\bullet$ When $B \Rightarrow$ is provable,
the transformation is done in a symmetrical way as the above case.

\vspace{1em}
Now we can assume that there is no $Q_i$ or $R_j$
that
starts with $\blacksquare:r_2$
or $\blacksquare:l$.
Then, $P$ is as follows.

\footnotesize

\begin{center}
\hspace{2em}
\shortstack{
$\infer[L_i]{\nabla \Lambda_i\to |
\rightarrow \blacksquare B| \Rightarrow}{
\nabla \Lambda_i\to |
\Box B\Rightarrow B
}$
$\hspace{1.7em}$
$\infer[L_j]{\blacksquare B
\rightarrow \blacksquare B }{
}$
\hspace{3em}
$\infer[M_k]
{{\cal L}|
\blacksquare B \to
| \Sigma_k \Rightarrow \Psi_k
}
{{\cal L}|
B, \Sigma_k \Rightarrow \Psi_k
}$
\hspace{2em}
$\infer[M_l]{
\blacksquare B \rightarrow \blacksquare B
}{}$
\\
\hspace{-4em}
$Q_i$ \hspace{1em} $\vdots$
\hspace{6em}
$Q_j$ \hspace{1em} $\vdots$
\hspace{8em}
$R_k$ \hspace{1em} $\vdots$
\hspace{8em}
$R_l$ \hspace{1em} $\vdots$
\\
\hspace{-1.2em}
$\vdots$
\hspace{8.7em} 
$\vdots$
\hspace{10.7em} 
$\vdots$
\hspace{10.5em} 
$\vdots$
\vspace{1em}
\\
\hspace{-3em}
$\ddots
\hspace{4.5em}\iddots
$
\hspace{14em}
$\ddots
\hspace{4.5em}\iddots
$
\vspace{2em}
\\
\hspace{-2em}
$
\infer[I]{{\cal H} |\Gamma \rightarrow \Delta, \blacksquare B}{}
\hspace{5em}
$
\hspace{9em} $
\infer[J]{{\cal I} |\blacksquare B, \Pi \rightarrow \Theta}{}
\vspace{-.2em}
$\\ \vspace{-.5em}
\\
$\infer[cut]{{\cal H} |{\cal I} |\Gamma, \Pi \rightarrow \Delta, \Theta}{\hspace{28em}}$
}
\end{center}

\normalsize
There are four types of paths,
$Q_i, Q_j, R_k, R_l$, as above.
Note that, by {Facts 1.1, 1.2},
there is no application of the $\to$- nor $\Rightarrow$-rule
in $Q_i$ or $R_k$.

\vspace{1em}
$\bullet$ For $Q_i$,we are going to replace the lower hypersequent of $L_i$:
$\nabla \Lambda_i\to |
\rightarrow \blacksquare B| \Rightarrow$
with
${\cal I} | \nabla \Lambda_i\to |
\Pi \rightarrow \Theta| \Rightarrow$.

Use the subproof above $J$ as follows.
For the top $M_k$ of $R_k$,
utilize the $cut$-rule to delete $B$.
For the top $\blacksquare B\to \blacksquare B$
of $R_l$,
we can use the subproof above $L_i$.
Then, under these replacements,
we can construct the following proof.

\small
\begin{center}
\shortstack{
$
\infer[cut]
{
{\cal L}|
\nabla \Lambda_i\to | \Sigma_k \Rightarrow \Psi_k
}
{ \nabla \Lambda_i\to |
\Rightarrow B \hspace{1.5em}
{\cal L}|
B, \Sigma_k \Rightarrow \Psi_k
}
$
\hspace{2em}
\shortstack{
\hspace{-1.6em}
\deduc \\ \vspace{.5em} \\ \vspace{.5em}
$\infer[L_i]{\nabla \Lambda_i\to |
\rightarrow \blacksquare B| \Rightarrow}{}$
}
\\ \vspace{1em}
$R_k$ \hspace{1em} $\vdots$
\hspace{10em}
$R_l$ \hspace{1em} $\vdots$
\\
\hspace{2.6em}
$\vdots$
\hspace{12.3em} 
$\vdots$
\\ \vspace{1em}
\\
\hspace{2.6em}
$\ddots
\hspace{7.5em}\iddots
$
\vspace{2em}
\\
\hspace{2em}
$
\infer[I]{{\cal I} | \nabla \Lambda_i\to |
\Pi \rightarrow \Theta| \Rightarrow}{}
$
}
\end{center}

\normalsize

$\bullet$ For $Q_j$,
we just replace $L_j$ with the subproof above $J$
of the original proof.

\vspace{1em}
Then, under this replacement, simulate the subproof above $I$
to obtain the original end hypersequent.
as follows.

\small

\begin{center}
\shortstack{
$\hspace{1.8em}$
${\cal I} | \nabla \Lambda_i\to |
\Pi \rightarrow \Theta| \Rightarrow$
\hspace{5em}
\shortstack{
\deduc \\ \vspace{1em} \\ \vspace{1em}
$\infer[J]{{\cal I} |\blacksquare B, \Pi \rightarrow \Theta}{}$
}
\\ \vspace{1em}
$Q_i$ \hspace{1em} $\vdots$
\hspace{10em}
$Q_j$ \hspace{1em} $\vdots$
\\
\hspace{2.6em}
$\vdots$
\hspace{12.3em} 
$\vdots$
\\ \vspace{1em}
\\
\hspace{2.6em}
$\ddots
\hspace{7.5em}\iddots
$
\vspace{2em}
\\
\hspace{2em}
$
\infer[I]{{\cal H}|{\cal I} |
\Gamma, \Pi \rightarrow \Delta, \Theta}{}
$
}
\end{center}

\normalsize

\vspace{1em}
\noindent {\it Case 2.2}. The turnstile of $P$ is $\Rightarrow$.
The subproofs above $I$ and $J$ are separately depicted
due to the lack of space.
The subproof above $I$ is as follows.

\footnotesize

\begin{center}
\shortstack{
$\hspace{-2em}$
$\infer[L_i]{\nabla \Lambda_i\to |
\rightarrow \blacksquare B| \Rightarrow}{
\nabla \Lambda_i\to |
\Box B\Rightarrow B
}$
$\hspace{1.7em}$
$\infer[L_j]{\nabla \Lambda_j\to |
\Rightarrow \blacksquare B }{
\nabla \Lambda_j\to |
\Rightarrow B
}$
\hspace{4em}
$\infer[L_k]
{\blacksquare B \to \blacksquare B}{}
$
\\
\hspace{-4em}
$Q_i$ \hspace{1em} $\vdots$
\hspace{6em}
$Q_j$ \hspace{1em} $\vdots$
\hspace{8em}
$Q_k$ \hspace{1em} $\vdots$
\\
\hspace{-1.2em}
$\vdots$
\hspace{8.7em} 
$\vdots$
\hspace{10.7em} 
$\vdots$
\vspace{1em}
\\
\hspace{-.4em}
\infer[\Rightarrow]{\Xi_i\Rightarrow\Upsilon_i, \blacksquare B}{\Xi_i\rightarrow\Upsilon_i, \blacksquare B}
\hspace{5.3em}
\vdots
\hspace{8.3em}
\infer[\Rightarrow]{\Xi_k\Rightarrow\Upsilon_k, \blacksquare B}{\Xi_k\rightarrow\Upsilon_k, \blacksquare B}
\\ \vspace{1em}
\\
\hspace{-1.6em}
$\vdots$
\hspace{8.8em} 
$\vdots$
\hspace{10.7em} 
$\vdots$
\\ \vspace{1em}
\\
\hspace{-3em}
$\ddots
\hspace{5.5em}\vdots
$
\hspace{5.5em}
$\iddots
$
\vspace{2em}
\\
\hspace{2em}
$
\infer[I]{{\cal H} |\Gamma \Rightarrow \Delta, \blacksquare B}{}
\hspace{5em}
$
}
\end{center}

\normalsize
The subproof above $J$ is as follows.

\footnotesize
\begin{center}
\shortstack{
$\hspace{-2em}$
$\infer[M_l]
{{\cal L}|
\blacksquare B \to
| \Sigma_l \Rightarrow \Psi_l
}
{{\cal L}|
B, \Sigma_l \Rightarrow \Psi_l
}$
$\hspace{1.7em}$
$\infer[M_m]{ \nabla \Lambda_m\to |
\blacksquare B \Rightarrow
}{
\nabla \Lambda_m\to | B \Rightarrow
}$
\hspace{4em}
$\infer[M_n]
{\blacksquare B \to \blacksquare B}{}
$
\\
\hspace{-4em}
$R_l$ \hspace{1em} $\vdots$
\hspace{6em}
$R_m$ \hspace{1em} $\vdots$
\hspace{8em}
$R_n$ \hspace{1em} $\vdots$
\\
\hspace{-1.2em}
$\vdots$
\hspace{8.7em} 
$\vdots$
\hspace{10.7em} 
$\vdots$
\vspace{1em}
\\
\hspace{-.4em}
\infer[\Rightarrow]{\blacksquare B, \Xi_l\Rightarrow\Upsilon_l}{\blacksquare B, \Xi_l\rightarrow\Upsilon_l}
\hspace{5.8em}
\vdots
\hspace{8.8em}
\infer[\Rightarrow]{\blacksquare B, \Xi_n\Rightarrow\Upsilon_n}{\blacksquare B, \Xi_n\rightarrow\Upsilon_n}
\\ \vspace{1em}
\\
\hspace{-1.6em}
$\vdots$
\hspace{8.8em} 
$\vdots$
\hspace{10.7em} 
$\vdots$
\\ \vspace{1em}
\\
\hspace{-3em}
$\ddots
\hspace{5.5em}\vdots
$
\hspace{5.5em}
$\iddots
$
\vspace{2em}
\\
\hspace{2em}
$
\infer[J]{{\cal I} |\blacksquare B, \Pi \Rightarrow \Theta}{}
\hspace{5em}
$
}
\end{center}

\normalsize
There are six types of paths,
$Q_i, Q_j, Q_k, R_l, R_m, R_n$ in $P$ as above.
Note that by { Facts 1.1, 1.3},
there is no application of $\to$-rule and
exactly one application of $\Rightarrow$-rule
in $Q_i, Q_k, R_l$ and $R_n$;
by { Facts 2.1, 2.2},
there is no application of $\to$-rule or $\Rightarrow$-rule
in $Q_j$ and $R_m$.

\vspace{1em}

$\bullet$ For $Q_i$,
we are going to replace the lower hypersequent of the application of the $\Rightarrow$-rule in $Q_i$
with
$ {\cal I} |( \Pi \Rightarrow \Theta)\cdot (\Xi_i \Rightarrow \Upsilon_i)
$.
Making the replacement for the top of $R_l$, $R_m$ and $R_n$
we can construct the following
proof of it.

\footnotesize
\begin{center}
\shortstack{
$
\infer
{
{\cal L}|
\nabla \Lambda_{i} \to| \Sigma_l \Rightarrow \Psi_l
}
{ \nabla \Lambda_{i} \to |
\Rightarrow B \hspace{1.5em}
{\cal L}|
B, \Sigma_l \Rightarrow \Psi_l
}$
$\hspace{12em}$
$\hspace{5em}$
\hspace{4em}
\\ \vspace{1em}
\hspace{-8.6em}
$R_l$ \hspace{1em} $\vdots$
\hspace{10em}
\hspace{7em}
\\
\shortstack{
$\vdots$ \vspace{1em}
\\$\infer={
\nabla \Lambda_i \to
| \Xi_l \rightarrow \Upsilon_l
|\Rightarrow
}
{
\nabla \Lambda_i, \Xi_l \rightarrow \Upsilon_l
}$
}
\hspace{3em}
$
\infer={ \nabla \Lambda_m\to | \nabla
\Lambda_i \to | \Rightarrow}{
\nabla \Lambda_i\to |
\Rightarrow B \hspace{1.5em}
\nabla \Lambda_m\to | B \Rightarrow
}$
\hspace{3em}
\shortstack{
\deduc \vspace{1em} \\
$
{\Xi_i \to \Upsilon_i, \blacksquare B}{}
$
}
\\ \vspace{1em}
\hspace{-1.5em}
$Q_i$ \hspace{1em} $\vdots$
\hspace{9.5em}
$Q_i$ \hspace{1em} $\vdots$
\vspace{.5em}
\hspace{8em}
$R_n$ \hspace{1em} $\vdots$
\\
\hspace{2em}
$\infer{
\Xi_i, \Xi_l \Rightarrow \Upsilon_i, \Upsilon_l
}
{
\Xi_i, \Xi_l \rightarrow \Upsilon_i, \Upsilon_l
}
$
\hspace{5.5em}
$\infer={
\nabla \Lambda_m\to |
\Xi_i \Rightarrow \Upsilon_i
}{
\infer{ \nabla \Lambda_m,
\Xi_i \Rightarrow \Upsilon_i}
{ \nabla \Lambda_m,
\Xi_i \rightarrow \Upsilon_i}
}
$
\hspace{4em}
$\infer{\Xi_i, \Xi_n\Rightarrow\Upsilon_i, \Upsilon_n}{\Xi_i, \Xi_n\rightarrow\Upsilon_i, \Upsilon_n}
$
\\ \vspace{1em}
\hspace{-1.5em}
$R_l$ \hspace{1em} $\vdots$
\hspace{9.5em}
$R_m$ \hspace{1em} $\vdots$
\vspace{.5em}
\hspace{8em}
$R_n$ \hspace{1em} $\vdots$
\\ \vspace{.5em}
\hspace{1em}
$\ddots
\hspace{9.7em}\vdots
$
\hspace{7.5em}
$\iddots
$
\vspace{2em}
\\
\hspace{9em}
$
\infer[J]{{\cal I} |\Xi_i, \Pi \Rightarrow\Upsilon_i, \Theta}{}
\hspace{5em}
$
}
\end{center}

\normalsize
Here, the top of $R_n$, $\blacksquare B \to \blacksquare B$,
is replaced with the subproof above the application of the $\Rightarrow$-rule
in $Q_i$.

\vspace{1em}
$\bullet$ For $Q_j$,
we are going to replace
the lower hypersequent of
$L_j$
with
$\nabla \Lambda_j\to |
\Rightarrow
|
{\cal I} | \Pi \Rightarrow \Theta
$.

First, lift up
the application of the $\Rightarrow$-rule in $R_n$ to
the place just below $M_n$ as follows.

\small
\begin{center}
\shortstack{
$\infer[M_n]{\blacksquare B\to \blacksquare B}{}$
\\
\hspace{-4em}
$R_n$ \hspace{1em} $\vdots$
\\
\hspace{6.2em}$\vdots$ \hspace{6em} $\rhd$
\vspace{1em}
\\
$\infer[\Rightarrow]{\blacksquare B, \Xi_n \Rightarrow \Upsilon_n}{
\blacksquare B, \Xi_n \rightarrow \Upsilon_n
}$
\\
\hspace{-1em}
$\vdots$
}
\hspace{2em}
\shortstack{
$\infer[\Rightarrow]{\blacksquare B\Rightarrow \blacksquare B}{
\infer[M_n]{\blacksquare B\rightarrow \blacksquare B}{}
}$
\\
\hspace{-3em}
$R_n$ \hspace{1em} $\vdots$
\\
$\vdots$ \vspace{1.5em}
\\
$\infer[(\Rightarrow)]{\blacksquare B, \Xi_n \Rightarrow \Upsilon_n}{
}$
\\
\hspace{-1em}
$\vdots$
}
\end{center}

\normalsize
Here the transformation is possible
by the same argument as in the proof of Fact 2.1;
the part can be so transformed that there are only applications of
propositional rules and the rules of $iw, ic$,
and thus the turnstile in the part can be turned into $\Rightarrow$.

Then, using this transformation,
we can construct the following
proof.

\footnotesize
\begin{center}
\shortstack{
$\hspace{-2em}$
$
\infer
{
{\cal L}|
\nabla \Lambda_{j} \to | \Sigma_l \Rightarrow \Psi_l
}
{ \nabla \Lambda_{j} \to |
\Rightarrow B \hspace{1.5em}
{\cal L}|
B, \Sigma_l \Rightarrow \Psi_l
}$
$\hspace{1.7em}$
$
\infer{
\nabla \Lambda_{j} \to | \nabla \Lambda_m \to| \Rightarrow
}
{ \nabla \Lambda_{j} \to |
\Rightarrow B \hspace{1.5em}
\nabla \Lambda_{m} \to | B \Rightarrow
}
$
\hspace{2em}
$\infer[L_j]{\nabla \Lambda_j\to |
\Rightarrow \blacksquare B }{
\nabla \Lambda_j\to |
\Rightarrow B
}$
\\
\hspace{-4em}
$R_l$ \hspace{1em} $\vdots$
\hspace{10em}
$R_m$ \hspace{1em} $\vdots$
\hspace{8em}
$R_n$ \hspace{1em} $\vdots$
\\
\hspace{-1.2em}
$\vdots$
\hspace{12.8em} 
$\vdots$
\hspace{10.7em} 
$\vdots$
\vspace{1em}
\\
\hspace{1.3em}
$
\infer[\Rightarrow]{ \nabla \Lambda_{j} , \Xi_l\Rightarrow\Upsilon_l}{ \nabla \Lambda_{j} , \Xi_l\rightarrow\Upsilon_l}
\hspace{8.7em}
\vdots
\hspace{7.3em}
\infer[(\Rightarrow)]{\nabla \lambda_j \to | \Xi_n\Rightarrow\Upsilon_n}{
}
$
\\ \vspace{1em}
\\
\hspace{-1.6em}
$\vdots$
\hspace{13em} 
$\vdots$
\hspace{10.7em} 
$\vdots$
\\ \vspace{1.5em}
\\
\hspace{-1.4em}
$\ddots
\hspace{8.2em}\vdots
$
\hspace{5.5em}
$\iddots
$
\vspace{2em}
\\
\hspace{9em}
$
\infer=[4, 4^{\blacksquare},merge]{
{\cal I} |
\nabla \Lambda_j \to |
\Pi
\Rightarrow
\Theta 
}{
  \infer[J]{
  {\cal I} |
 \nabla \Lambda_j \to |
  \nabla \Lambda_j,
 \Pi
  \Rightarrow
  \Theta}{}
}
\hspace{5em}
$
}
\end{center}


\normalsize

Note that the rules of $4, 4^{\blacksquare}$ 
are applied serially
in the last part,
and it is the only place where these rules are used
for the cut-elimination.
\footnote{
{ 
This fact was pointed out by an anonymous referee.
}
}

$\bullet$ For $Q_k$,
firstly as done for $R_n$,
lift up
the application of the $\Rightarrow$-rule in $Q_k$ to
the place just below $L_k$.
Then,
we just replace $L_k$
with the subproof above $J$ of the original proof $P$.

\vspace{1em}
Under these replacements at the top of $Q_i, Q_j$ and $Q_k$,
we simulate the subproof above $I$
to obtain the original end hyperasequent as follows.

\footnotesize

\begin{center}
\shortstack{
$\hspace{6em}$
$\hspace{1.7em}$
${\cal I} |
\nabla \Lambda_j \to |
\Pi
\Rightarrow
\Theta $
\hspace{4em}
$\infer[J]{{\cal I} |\blacksquare B, \Pi \Rightarrow \Theta}{\deduc\vspace{1.5em}}
$
\\
\hspace{5.5em}
$Q_j$ \hspace{1em} $\vdots$
\hspace{8em}
$Q_k$ \hspace{1em} $\vdots$
\\
\hspace{8.4em} 
$\vdots$
\hspace{10.7em} 
$\vdots$
\vspace{1em}
\\
\hspace{-.4em}
\infer[(\Rightarrow)]{{\cal I} |\Xi_i, \Pi \Rightarrow\Upsilon_i, \Theta}{
}
\hspace{4.8em}
\vdots
\hspace{7.8em}
\infer[(\Rightarrow)]{
{\cal I} |
\Xi_k, \Pi\Rightarrow\Upsilon_k, \Theta}{
}
\\ \vspace{1em}
\\
\hspace{-4em}
$Q_i$ \hspace{1em} $\vdots$
\hspace{8.8em} 
$\vdots$
\hspace{10.7em} 
$\vdots$
\\ \vspace{2em}
\\
\hspace{-3em}
$\ddots
\hspace{5.5em}\vdots
$
\hspace{5.5em}
$\iddots
$
\vspace{2em}
\\
\hspace{2em}
$
\infer[I]{{\cal H} |{\cal I} |\Gamma, \Pi \Rightarrow \Delta, \Theta}{}
\hspace{5em}
$
}
\end{center}

\normalsize

\noindent {\it Case 3}. $A$ is an atomic formula $p$.
(We omit the other case: $A=\bot$.)

\noindent {\it Case 3.1}. The turnstile of $P$ is $\to$.
$P$ is as follows.

\begin{center}
$
\infer{{\cal H} | {\cal I} |
\Gamma, \Pi \to \Delta, \Theta
}{
\shortstack{ \vspace{1em}
\hspace{-1em}
$p\to p$
\\ \vspace{.5em}
\hspace{-3em}
\hspace{1em} \deduc
\\
\\ 
$\infer[I]{{\cal H} |
\Gamma \to \Delta, p}{}$
}
\hspace{4em}
\shortstack{
$p\to p$
\\ 
\hspace{-2.85em} $R_j$
\hspace{1em} $\vdots$
\\
$\vdots$
\vspace{1em}
\\ 
$\infer[J]{{\cal I} | p,
\Pi \to \Theta}{}$
}
}
$
\end{center}

By {Fact 1.2},
there is no application of the $\to$- or $\Rightarrow$- rule
in any $Q_i$ and any $R_j$.
For each $R_j$,
replace the top of $R_j$
with the subproof above $I$.
Then, under this replacement,
we simulate the subproof above $J$
to obtain the original end hypersequent.

\begin{center}
\shortstack{ \vspace{1em}
$p\to p$
\\ \vspace{.5em}
\hspace{-2em}
\hspace{1em} \deduc
\\ 
${\cal H} |
\Gamma \to \Delta, p$
\\ 
\hspace{-2.2em}
$R_j$ \hspace{1em} $\vdots$
\\
\hspace{.6em} 
$\vdots$
\vspace{1em}
\\ 
${\cal H} | {\cal I} |
\Gamma, \Pi \to \Delta, \Theta$
}
\end{center}

\noindent {\it Case 3.2}. The turnstile of $P$ is $\Rightarrow$.
$P$ is as follows.

\begin{center}
\shortstack{
$p \to p$
\hspace{6em}
$p \to p$
\\ 
\hspace{-2.5em}
$Q_i$ \hspace{1em} $\vdots$
\hspace{5.7em}
$R_j$ \hspace{1em} $\vdots$
\vspace{.5em}
\\
$\infer{
\Xi_i \Rightarrow \Upsilon_i, p
}
{
\Xi_i \rightarrow \Upsilon_i, p
}
$
\hspace{4em}
$\infer{
p, \Xi_j \Rightarrow \Upsilon_j
}
{
p, \Xi_j \rightarrow \Upsilon_j
}
$
\\ 
\hspace{-1em}
\hspace{1em} $\vdots$
\hspace{7em}
\hspace{1em} $\vdots$
\vspace{1em}
\\
$
\infer{{\cal H} | {\cal I} |\Gamma, \Pi
\Rightarrow
\Delta, \Theta}{
\infer[I]{{\cal H} |\Gamma
\Rightarrow \Delta, p }{}
\hspace{3em}
\infer[J]{{\cal I} | p, \Pi
\Rightarrow
\Theta}{}
}
$
}
\end{center}

By Facts 1.1, 1.3,
there is no application of the $\to$-rule
and there is exactly one application of the rule $\Rightarrow$-rule
in any $Q_i$ and any $R_j$.

For the lower hypersequent
$\Xi_i \Rightarrow \Upsilon_i, p$
of the $\Rightarrow$-rule in $Q_i$,
we are going to replace it with
${\cal I} |\Xi_i, \Pi \Rightarrow \Upsilon_i, 
\Theta$.
To do this,
replace the top of each $R_j$
with the subproof of
$\Xi_i \rightarrow \Upsilon_i, p$.
Then, under this replacement,
simulate the subproof above $J$ to obtain
$\Xi_i \Rightarrow \Upsilon_i|{\cal I} | \Pi
\Rightarrow
\Theta$.

\begin{center}
\shortstack{ 
\hspace{-1.2em}
$p\to p$
\\ \vspace{.5em}
\hspace{-4.3em}
$Q_i$ \hspace{1em} \vdots
\\ 
$
\Xi_i \rightarrow \Upsilon_i, p
$
\\ \vspace{.7em}
\hspace{-4em}
$R_j$ \hspace{1em} $\vdots$
\\
$\infer{
\Xi_i, \Xi_j \Rightarrow \Upsilon_i, \Upsilon_j
}
{
\Xi_i, \Xi_j \rightarrow \Upsilon_i, \Upsilon_j
}
$
\\ \vspace{.5em}
\hspace{-1em} \vdots
\\ \vspace{.5em}
$ {\cal I} | \Xi_i,
\Pi \to \Upsilon_i, \Theta$
}
\end{center}

Then, under this replacement for the lower hypersequent of the $\Rightarrow$-rule in each $Q_i$,
we can simulate the subproof above $I$
to obtain the original end hypersequent as follows.

\begin{center}
\shortstack{ 
$ {\cal I} | \Xi_i,
\Pi \Rightarrow \Upsilon_i, \Theta$
\\ 
\hspace{-2.2em}
$Q_i$ \hspace{1em} $\vdots$
\\
\hspace{.6em} 
$\vdots$
\vspace{1em}
\\ 
${\cal H} | {\cal I} |
\Gamma, \Pi \Rightarrow \Delta, \Theta$
}
\end{center}

This completes the proof of Theorem \ref{cut-elimination}.
\QED

\vspace{1em}
As a simple application of the cut-elimination theorem,
we verify of the conservability result of {\sf GR}
over {\sf GL}.

\begin{corollary}
Let $A^{\Box}$ be any formula of {\sf GR} containing no modality $\blacksquare$.
Then, the following equivalence holds.

\begin{center}
$A^{\Box}$ is provable in {\sf GR} $\Longleftrightarrow$
$A^{\Box}$ is provable in {\sf GL}.
\end{center}

\end{corollary}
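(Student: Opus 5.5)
The direction from {\sf GL} to {\sf GR} is immediate. Every axiom and rule of the axiomatic {\sf GL} (propositional axioms, $K$, the L\"ob axiom, modus ponens, necessitation) is among those of ${\sf GR}_{AX}$. Hence any {\sf GL}-proof of $A^{\Box}$ is already a ${\sf GR}_{AX}$-proof. By the equivalence theorem of \S 2.1, it is then also provable in {\sf GR}$_H$.

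For the converse, suppose $A^{\Box}$ is provable in {\sf GR}. By the equivalence theorem, $\to A^{\Box}$ is provable in {\sf GR}$_H$. By Theorem \ref{cut-elimination}, it has a cut-free proof $P$. The plan has two steps:
\begin{enumerate}
\item Show that no formula containing $\blacksquare$ occurs anywhere in $P$.
\item Read $P$ as a proof in a hypersequent calculus for {\sf GL}, obtained from {\sf GR}$_H$ by deleting the rules $K^\blacksquare$, $4^\blacksquare$, $\blacksquare:r_1$, $\blacksquare:l$, $\blacksquare:r_2$ and the initial sequents $\blacksquare C\to\blacksquare C$.
\end{enumerate}

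For the first step, I will inspect the cut-free rules one by one and check that each satisfies a subformula property. Every formula in an upper hypersequent is a subformula of some formula in the lower hypersequent. The propositional rules, $ic$, $iw$, $ew$, $split$, $merge$, $\bot$, $\Rightarrow$ and $\to$ only rearrange or decompose formulas. The rules $K$, $4$, $K^\blacksquare$ and $4^\blacksquare$ have upper formulas $A$, $\Box A$ or $\blacksquare A$, which are subformulas of the principal formula $\Box A$ or $\blacksquare A$ below. In $\Box:r$ and $\blacksquare:r_1$ the diagonal formula $\Box B$ is the conclusion $\Box B$ itself, or a formula containing $\blacksquare$ in the case of $\blacksquare:r_1$. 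The rules $\blacksquare:l$ and $\blacksquare:r_2$ are analogous.

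The subformula property gives the result of step 1 directly. All formulas of $P$ are subformulas of $A^{\Box}$, except possibly for $\blacksquare$-formulas that the cut-free rules can never create from a $\blacksquare$-free conclusion. The one point needing care is that $\blacksquare:r_1$ introduces an $\Rightarrow$-sequent and a $\to\blacksquare B$ sequent out of nothing. Since its lower hypersequent contains $\blacksquare B$, that $\blacksquare B$ must persist down to the end hypersequent. Rules can only remove whole sequents via $\bot$ (empty $\to$), so no rule deletes it. The end hypersequent $\to A^{\Box}$ has no $\blacksquare$, so a downward induction shows every hypersequent in $P$ is $\blacksquare$-free. In particular none of the $\blacksquare$-rules or $\blacksquare$-initial sequents is ever used.

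For the second step, I will verify that the {\sf GL}-fragment is sound for {\sf GL} under the formula image $f$. The soundness half of the equivalence proof in \S 2.1 does this already: the cases for $K$, $4$ and $\Box:r$ use only normality, the $4$-axiom (derivable in {\sf GL}) and the L\"ob axiom. Therefore $f(\to A^{\Box})=A^{\Box}$ is provable in {\sf GL}. I expect the main obstacle to be step 1, namely the rigorous persistence argument for $\blacksquare$-formulas. The delicate rule is $\bot$, and I will check that it only discards an empty $\to$-sequent, never one containing $\blacksquare$.
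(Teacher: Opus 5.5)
Your proposal is correct and follows the paper's own route: cut-elimination, then the observation that a cut-free proof of the $\blacksquare$-free hypersequent $\to A^{\Box}$ uses no $\blacksquare$-rules, then the fact that the \S 2 simulation of the remaining rules needs only {\sf GL}; your persistence/subformula argument supplies the justification for the middle step that the paper merely asserts. One detail, which you and the paper both leave implicit, should be cited when you check soundness of the fragment: the context-free $\to$-rule corresponds to the rule $\Box A/A$ (rule 3 of ${\sf GR}_{AX}$), which is not primitive in {\sf GL} but is admissible there (e.g.\ via Kripke completeness, by adding a new root below a countermodel to $A$).
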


\noindent {\bf Proof.}
'$\Longleftarrow$' is trivial.
On the direction of $'\Longrightarrow$',
suppose that $A^{\Box}$ is provable in {\sf GR}.
Then, by Theorem \ref{cut-elimination}, there is a cut-free proof of $A^{\Box}$ in {\sf GR},
where there is no application of the rules concerning $\blacksquare$.
When we simulated the inference rules of the hypersequent calculus {\sf GR}$_H$
in the axiomatic system of {\sf GR}$_{AX}$ in \S 2,
we used only the axioms of the axiomatic system {\sf GL}
for the inference rules not concerning $\blacksquare$.
Hence,
$A^{\Box}$ is provable in {\sf GL}.
\QED


\vspace{2em}
\large{\bf Technical Report 1}

\vspace{.5em}

\normalsize
Here we show a simple example of the transformation
of an application of  $\Box:r$ into the standard form.

\begin{center}
$\infer[I]{
 \rightarrow \blacksquare C |
 \blacksquare D\rightarrow |
 \Box E
   \rightarrow |
    \rightarrow \Box(D\wedge E)
}{
\infer[(\natural 2)]{
 \rightarrow \blacksquare C |
 \blacksquare D\rightarrow |
 \Box E
   \rightarrow |
    \Rightarrow^{\tau} D\wedge E
}{
\infer[(c)]{
 \rightarrow \blacksquare C |
 \blacksquare D\rightarrow |
  E  \Rightarrow^{\tau} D\wedge E}{
    \infer[(b)]{
 \rightarrow \blacksquare C |
 \Rightarrow^{\tau}|
 \blacksquare D\rightarrow |
 E
    \Rightarrow^{\tau} D\wedge E}{
 \infer[(a)]{
\Box C \Rightarrow  C |
 \blacksquare D\rightarrow |
 E
    \Rightarrow^{\tau} D\wedge E}{
     \infer[(\natural 2)]{
 \blacksquare D\rightarrow |
 E
    \Rightarrow^{\tau} D\wedge E}{
  \infer[(\natural 1)]{
D,
 E
    \Rightarrow^{\tau} D\wedge E}{
    \infer[Start \hspace{.3em}of  \hspace{.3em}
    \tau {\rm -}path
]{D\Rightarrow^{\tau} D}{D\to D}
    \hspace{1em}
        \infer[Start \hspace{.3em}of  \hspace{.3em}
    \tau {\rm -}path
]{E\Rightarrow^{\tau} E}{E\to E}
    }
    }
    }    
    }
    }
}
}
$
\end{center}

First we find an uppermost application
of ($\natural 1, \natural 2$)
such that there are applications of (a, b, c).
It is the application of $K$-rule of ($\natural 2$).
Permute it first with the one of (c)
then with the one of (b),
then with the one of (a).
Then, we obtain the following.

\begin{center}
$\infer[I]{
 \rightarrow \blacksquare C |
 \blacksquare D\rightarrow |
 \Box E
   \rightarrow |
    \rightarrow \Box(D\wedge E)
}{
\infer[(c)]{
 \rightarrow \blacksquare C |
 \blacksquare D\rightarrow |
 \Box E
   \rightarrow |  \Rightarrow^{\tau} D\wedge E}{
    \infer[(b)]{
 \rightarrow \blacksquare C |
 \Rightarrow^{\tau}|
 \blacksquare D\rightarrow |
 \Box E
   \rightarrow |
    \Rightarrow^{\tau} D\wedge E}{
 \infer[(a)]{
\Box C \Rightarrow  C |
 \blacksquare D\rightarrow |
  \Box E
   \rightarrow |
     \Rightarrow^{\tau} D\wedge E}{
     \infer[(\natural 2)]{
     \blacksquare D\rightarrow |
  \Box E
   \rightarrow |
     \Rightarrow^{\tau} D\wedge E
     }{
     \infer[(\natural 2)]{
 \blacksquare D\rightarrow |
 E
    \Rightarrow^{\tau} D\wedge E}{
  \infer[(\natural 1)]{
D,
 E
    \Rightarrow^{\tau} D\wedge E}{
    \infer[Start \hspace{.3em}of  \hspace{.3em}
    \tau {\rm -}path
]{D\Rightarrow^{\tau} D}{D\to D}
    \hspace{1em}
        \infer[Start \hspace{.3em}of  \hspace{.3em}
    \tau {\rm -}path
]{E\Rightarrow^{\tau} E}{E\to E}
    }
    }
    }    
    }
    }
}}
$
\end{center}

The next application to lift up is $I$.
First permute $I$ with the $merge$ of (c)
by duplicating $\Box:r$ to $I$ and $I'$.

\begin{center}
$\infer[(c)]{
 \rightarrow \blacksquare C |
 \blacksquare D\rightarrow |
 \Box E
   \rightarrow |
    \rightarrow \Box(D\wedge E)
}{
\infer=[I,I']{
 \rightarrow \blacksquare C |
\rightarrow \Box(D\wedge E) |
 \blacksquare D\rightarrow |
 \Box E
   \rightarrow |    \rightarrow \Box(D\wedge E)}{
    \infer[(b)]{
 \rightarrow \blacksquare C |
 \Rightarrow^{\tau'}|
 \blacksquare D\rightarrow |
 \Box E
   \rightarrow |
    \Rightarrow^{\tau} D\wedge E}{
 \infer[(a)]{
\Box C \Rightarrow  C |
 \blacksquare D\rightarrow |
  \Box E
   \rightarrow |
     \Rightarrow^{\tau} D\wedge E}{
     \infer[(\natural 2)]{
     \blacksquare D\rightarrow |
  \Box E
   \rightarrow |
     \Rightarrow^{\tau} D\wedge E
     }{
     \infer[(\natural 2)]{
 \blacksquare D\rightarrow |
 E
    \Rightarrow^{\tau} D\wedge E}{
  \infer[(\natural 1)]{
D,
 E
    \Rightarrow^{\tau} D\wedge E}{
    \infer[Start \hspace{.3em}of  \hspace{.3em}
    \tau {\rm -}path
]{D\Rightarrow^{\tau} D}{D\to D}
    \hspace{1em}
        \infer[Start \hspace{.3em}of  \hspace{.3em}
    \tau {\rm -}path
]{E\Rightarrow^{\tau} E}{E\to E}
    }
    }
    }    
    }
    }
}}
$
\end{center}

Here, we obtain another $\tau'$-path.
Then, permute $I$ with the application of (b)
and then with the one of (a).

\begin{center}
$\infer[(c)]{
 \rightarrow \blacksquare C |
 \blacksquare D\rightarrow |
 \Box E
   \rightarrow |
    \rightarrow \Box(D\wedge E)
}{
\infer[I']{
 \rightarrow \blacksquare C |
\rightarrow \Box(D\wedge E) |
 \blacksquare D\rightarrow |
 \Box E
   \rightarrow |    \rightarrow \Box(D\wedge E)}{
    \infer[(b)]{
 \rightarrow \blacksquare C |
 \Rightarrow^{\tau'}|
 \blacksquare D\rightarrow |
 \Box E
   \rightarrow |
    \rightarrow \Box(D\wedge E)}{
 \infer[(a)]{
\Box C \Rightarrow  C |
 \blacksquare D\rightarrow |
  \Box E
   \rightarrow |
      \rightarrow \Box(D\wedge E)}{
\infer[I]{
 \blacksquare D\rightarrow |
  \Box E
   \rightarrow |
      \rightarrow \Box(D\wedge E)}{
     \infer[(\natural 2)]{
     \blacksquare D\rightarrow |
  \Box E
   \rightarrow |
     \Rightarrow^{\tau} D\wedge E
     }{
     \infer[(\natural 2)]{
 \blacksquare D\rightarrow |
 E
    \Rightarrow^{\tau} D\wedge E}{
  \infer[(\natural 1)]{
D,
 E
    \Rightarrow^{\tau} D\wedge E}{
    \infer[Start \hspace{.3em}of  \hspace{.3em}
    \tau {\rm -}path
]{D\Rightarrow^{\tau} D}{D\to D}
    \hspace{1em}
        \infer[Start \hspace{.3em}of  \hspace{.3em}
    \tau {\rm -}path
]{E\Rightarrow^{\tau} E}{E\to E}
    }
    }
    }    
    }}
    }
}}
$
\end{center}

Now $I$ is of the standard form.
Also, in this case, $I'$ is already 
of the standard form
with the environment
$ \rightarrow \blacksquare C |
 \blacksquare D\rightarrow |
 \Box E
   \rightarrow |    \rightarrow \Box(D\wedge E)$.

\vspace{2em}
\large{\bf Technical Report 2}

\vspace{.5em}

Here we show a simple example of the top-down cut-elimination
procedure exposed in \S 3.3.
This example is concerned with Case 2.2 and Case 3.2 of the proof of Theorem 
\ref{cut-elimination}.

\begin{center}
    $
    \infer[cut]{
\Box p, \blacksquare q \to | \Rightarrow \Box(p\wedge q)
    }{
\infer[L_1=I]{
\Box p, \blacksquare q \to | \Rightarrow \blacksquare(p\wedge q)
}{
\infer=[K^\Box, K^\blacksquare]{\Box p, \blacksquare q \to | \Rightarrow p\wedge q}{
\infer{p, q \Rightarrow p\wedge q
}{
\infer{p, q \rightarrow p\wedge q
}{p\to p &q\to q}
}
}
}   
&
\infer[J]{ \blacksquare(p\wedge q)
\Rightarrow \Box(p\wedge q)
}{
\infer{\blacksquare(p\wedge q)
\rightarrow \Box(p\wedge q)}{
\infer{
\blacksquare(p\wedge q)\to |
\rightarrow\Box(p\wedge q)
}{
\infer[M_1]{
\blacksquare(p\wedge q)\to |
\Rightarrow p\wedge q
}{
\infer{p\wedge q
\Rightarrow p\wedge q}{
\infer={
p\wedge q, p\wedge q
\Rightarrow p\wedge q}{
\infer{p, q \Rightarrow p\wedge q}{
\infer{p, q \rightarrow p\wedge q
}{p\to p &q\to q}
}
}
}
}}
}
}
    }   
    $
\end{center}

By the top-down method described in Case 2.2,
we obtain the following proof.

\begin{center}
    $
    \infer=[4^\Box, 4^\blacksquare]{
    \Box p, \blacksquare q \to|
\Rightarrow \Box(p\wedge q)
    }{
\infer[J]{ \Box p, \blacksquare q 
\Rightarrow \Box(p\wedge q)
}{
\infer{
\Box p, \blacksquare q 
\rightarrow \Box(p\wedge q)}{
\infer{
\Box p, \blacksquare q \to |
\rightarrow\Box(p\wedge q)
}{
\infer[cut]{
\Box p, \blacksquare q \to |
\Rightarrow p\wedge q
}{
\infer=[K^\Box, K^\blacksquare]{\Box p, \blacksquare q \to | \Rightarrow p\wedge q}{
\infer{p, q \Rightarrow p\wedge q
}{
\infer{p, q \rightarrow p\wedge q
}{p\to p &q\to q}
}
}
&
\infer{p\wedge q
\Rightarrow p\wedge q}{
\infer={
p\wedge q, p\wedge q
\Rightarrow p\wedge q}{
\infer{p, q \Rightarrow p\wedge q}{
\infer{p, q \rightarrow p\wedge q
}{p\to p &q\to q}
}
}
}
}
}
}
}
 }   
    $
\end{center}

Now we focus on the proof above (and including) the new $cut$,
whose end hypersequent is $\Box p, \blacksquare q \to |
\Rightarrow p\wedge q$.
By the formula-reduction method,
we can easily make proofs of the following three.

\vspace{1em}
$\Box p, \blacksquare q \to |
\Rightarrow p$

$\Box p, \blacksquare q \to |
\Rightarrow q$

$p, q 
\Rightarrow p\wedge q$

\vspace{1em}
Using these proofs,
we can make cut-reductions as follows.

\small

\begin{center}
    $
    \infer{
      \Box p, \blacksquare q \to |
    \Rightarrow p\wedge q
    }{
 \infer[cut]{\Box p, \blacksquare q \to |
    \Box p, \blacksquare q \to |
    \Rightarrow p\wedge q
}{
\infer=[K^\Box, K^\blacksquare]{\Box p, \blacksquare q \to | \Rightarrow q}{
\infer{p, q \Rightarrow q
}{
\infer{p, q \rightarrow q
}{q\to q }
}
}
&
    \infer[cut]{
\Box p, \blacksquare q \to |
q \Rightarrow p\wedge q
}{
\infer=[K^\Box, K^\blacksquare]{\Box p, \blacksquare q \to | \Rightarrow p}{
\infer{p, q \Rightarrow p
}{
\infer{p, q \rightarrow p
}{p\to p }
}
}
&
\infer{p, q \Rightarrow p\wedge q}{
\infer{p, q \rightarrow p\wedge q
}{p\to p &q\to q}
}
}
}}
$
\end{center}

\normalsize
Then, we apply the top-down method described in {\it Case 3.2}.
First, we eliminate the upper $cut$, whose cut formula is $p$, as follows.


\begin{center}
    $
    \infer{
      \Box p, \blacksquare q \to |
    \Rightarrow p\wedge q
    }{
 \infer[cut]{\Box p, \blacksquare q \to |
    \Box p, \blacksquare q \to |
    \Rightarrow p\wedge q
}{
\infer=[K^\Box, K^\blacksquare]{\Box p, \blacksquare q \to | \Rightarrow q}{
\infer{p, q \Rightarrow q
}{
\infer{p, q \rightarrow q
}{q\to q }
}
}
&
    \infer=[K^\Box, K^\blacksquare]{
\Box p, \blacksquare q \to |
q \Rightarrow p\wedge q
}{
\infer={p, q, q  \Rightarrow p\wedge q}{
\infer{p, q, q \rightarrow p\wedge q
}{
\infer{
p, q\to p }{
p\to p
}
& q\to q}
}
}
}
}
$
\end{center}

\normalsize
Finally,  we can eliminate the remaining $cut$, whose cut formula is
$q$,
to obtain a cut-free proof of $ \Box p, \blacksquare q \to |
    \Rightarrow p\wedge q$
as follows.

\begin{center}
    $
    \infer{
      \Box p, \blacksquare q \to |
    \Rightarrow p\wedge q
    }{
\infer=[K^\Box, K^\blacksquare]{
\Box p, \blacksquare q \to |
\Box p, \blacksquare q \to |
 \Rightarrow p\wedge q
}{
    \infer=[K^\Box, K^\blacksquare]{
\Box p, \blacksquare q \to |
p,q  \Rightarrow p\wedge q
}{
\infer={p, q, p, q  \Rightarrow p\wedge q}{
\infer{p, q, p, q \rightarrow p\wedge q
}{
\infer{
p, q\to p }{
p\to p
}
& \infer{p, q \rightarrow q
}{q\to q }}
}
}
}}
$
\end{center}

\end{document}